\documentclass[11pt,a4paper,oneside]{book}

\usepackage[T1]{fontenc}
\usepackage[utf8]{inputenc}

\usepackage{amsmath,amssymb,amsfonts,amsthm,mathtools}
\usepackage{tensor}
\usepackage{bm}
\usepackage{slashed}
\usepackage{physics}

\usepackage{graphicx}
\usepackage{subcaption}
\usepackage{float}
\usepackage{microtype}
\usepackage{hyperref}
\usepackage{xcolor}
\usepackage[bottom, flushmargin]{footmisc}
\usepackage{tikz-cd}
\newtheorem*{theorem*}{Theorem}
\newtheorem*{definition*}{Definition}

\usepackage[left=2.5cm,right=2.5cm,top=2.5cm,bottom=2.5cm]{geometry}

\usepackage{fancyhdr}
\begin{document}
\pagestyle{fancy}

\fancyhead{}\fancyfoot{}
\fancyfoot[C]{\thepage}

\frontmatter

\begin{titlepage}
\begin{center}
    \includegraphics[width=0.30\textwidth]{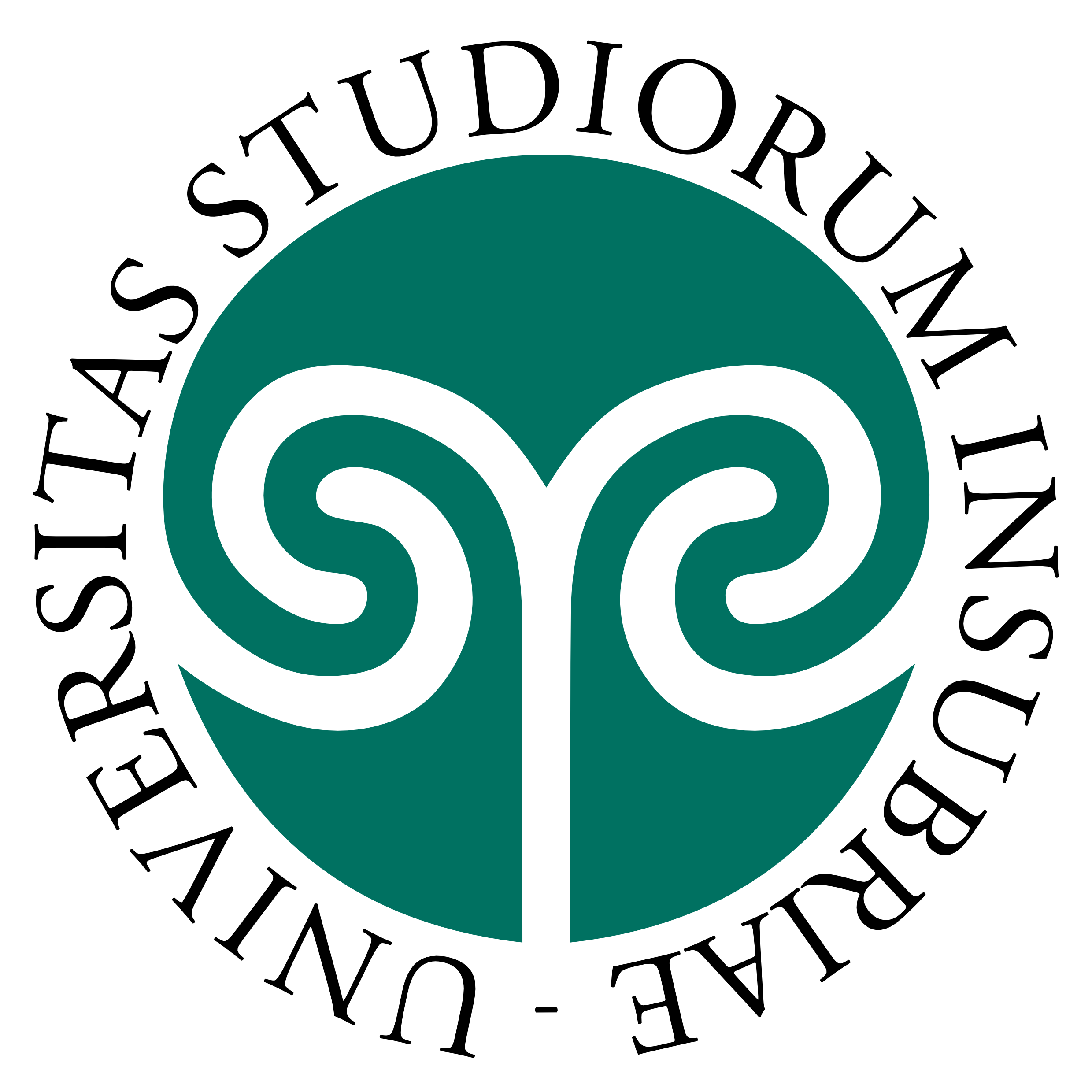}
    \par \vspace{1cm}

    {\LARGE\scshape Università degli Studi dell'Insubria\par}
    \vspace{2cm}

    {\scshape Dipartimento di Scienza e Alta Tecnologia - Como\par}
    \vspace{0.5cm}
    
    {\scshape Academic Year 2024-2025\par}
    \vspace{0.5cm}

    {\scshape \large Master's degree in Physics\par}

    \vspace{2 cm}

    {\Huge \bfseries \textsl{Classical Spinors on Curved Spacetime: applications to Cosmology and Astrophysics} \par}

\end{center} 

\vspace{2 cm}

\begin{flushleft}
    \large
    Supervisor: Prof. Oliver Fabio Piattella\\
    Co-supervisor: Prof. Sergio Luigi Cacciatori\par
\end{flushleft}

\vspace{1 cm}

\begin{flushright}
    \large Master's thesis by:\\
    Andrea La Delfa\\
    Matricola: 751668
\end{flushright}

\end{titlepage}

\newpage
\begin{flushright}
\textit{Al mio papà e alla mia mamma}
\end{flushright}

\vspace*{\fill}
\begin{flushleft}
    ``\textit{Solo ora capisco che il modo in cui si viene al mondo è irrilevante, 
    \newline
    è quello che fai del dono della vita che stabilisce chi sei.}'' 
\end{flushleft}

\tableofcontents\markboth{Contents}{Contents}

\fancyhead[L]{\nouppercase{\leftmark}}

\chapter{Introduction}

\markboth{Introduction}{}

In 1905, Albert Einstein, in the article \textit{Zur Elektrodynamik bewegter Körper} \cite{Einstein_Electr}, introduced to the scientific community of that time what has since been known as the \textit{Special Theory of Relativity}. He formulated that theory to address the shortcomings of Maxwell's Electromagnetism concerning phenomena described in reference frames in uniform relative motion (inertial reference frames).
\newline
\newline
Eleven years later, in the article \textit{Die Grundlage der allgemeinen Relativit$\ddot{a}$tstheorie} \cite{Einstein_GenRel}, Einstein enlarged this theory by formulating what was called by himself the \textit{General Theory of Relativity}. This theory enlarges Special Relativity by including non inertial reference frames and, as Einstein shows, it also provides a theory of gravitation, which, as we will explain later, comes quite naturally from the principles and the logical setup of this new theory. What is astonishing about General Relativity is that Einstein, with the help of his friend and mathematician Marcel Grossmann, derived it from scratch solely through logical reasoning, without any empirical or experimental proof.
\newline
\newline
It took a very short amount of time for this new theory of gravitation to be applied in various contexts: for example, the spherically symmetric solution of Karl Schwarzschild \cite{Schwarzschild} and the cosmological considerations by Einstein himself\footnote{This is the article where the cosmological constant $\Lambda$  was introduced for the first time.} \cite{Einstein_Cosmo}. 
In particular, GR provided Cosmology with a powerful tool that allowed for a comprehensive mathematical description of our Universe. Indeed, the present standard cosmological model, the $\Lambda$CDM model, is based, taking into account only the gravitational part, solely on the Theory of General Relativity.
\newline
\newline
Unfortunately, despite the giant leaps made in recent years and the significant discoveries made by our satellites, we are still far from having a complete overview of what the components of the Universe actually are and why they behave as we observe. Among the multiple questions, the most important ones are what Dark Energy (DE) is and what Dark Matter (DM) is. While the first one can be viewed as a geometrical contribution and we can suppose that Einstein's theory of gravitation must be modified on large scales to incorporate the effects of DE\footnote{For a complete account of all the problems concerning Dark Energy and the cosmological constant $\Lambda$, the article \cite{HistLambda} can be very useful.}, the second one cannot be viewed in this way, because we need DM to explain very different processes on very different scales. For example, the evidence for the need for Dark Matter is in the rotation curves of galaxies, in the process of formation of structures in our Universe, and in some cases of weak gravitational lensing sources like the Bullet Cluster. So, we see that the scales for which DM is needed range from a few kpc to cosmological scales of the order of hundreds of Mpc. Consequently, modifying the theory of gravitation used to solve one problem may not solve the others.
\newline
\newline
There are many candidates for DM nowadays. For example, axions, the Neutralino, and WIMPs in general, referred to as Cold Dark Matter (CDM), are the most likely candidates. Otherwise, Sterile neutrinos and other particles, referred to as Warm Dark Matter, have some problems related to the constraints on their mass, which make them less appealing.
\newline
Most of the models for DM concern fermions that interact only weakly. This makes a fermionic particle coupled to gravity, i.e., a spinor in curved space-time, an interesting subject of study. For example, in a cosmological setting, see \cite{Chimento}, \cite{Szekeres}, \cite{Shapiro}, \cite{Saha}, \cite{Ribas}, \cite{Jantzen} and \cite{Villalba}. 
\newline
\newline
Most of the works cited above focus on the background cosmology, i.e., what influence a classical spinor has on the expansion of the Universe. In particular, \cite{Magueijo} focuses on its influence on the cosmological singularity in the context of what is referred to as Bouncing Cosmology.
\newline
On the other hand, an analysis of the evolution of cosmological perturbations is missing or incomplete.
\newline
\newline
In this thesis, we partly fill this gap by taking advantage of the results presented in the article \cite{Fabbri}, thanks to which the spinor Stress-Energy Tensor (SET) can be mapped to that of a real (imperfect) fluid. 
\newline
We also study the physics of spherically symmetric objects formed or sustained by classical spinors, with the aim of assessing whether a description of a DM halo, in the context of the formation of structures, or of generic compact objects, is viable.
\newline
\newline
Before presenting our results in Chapters \ref{Cosm_Perturb}, \ref{SET_Decomp} and \ref{Spher_Symm}, we offer the reader a general smattering of the topics that are useful to fully understand the framework in which we are working. In Chapters \ref{Intro_GR}, \ref{Spinor_Bundles}, and \ref{Spinor_Model}, we discuss the key features of General Relativity, what Spin Bundles are, the Spinor Model we worked on and the results obtained from it in the context of the cosmological background.

\mainmatter

\renewcommand{\chaptermark}[1]{%
\markboth{\chaptername~\thechapter\ --\ #1}{}%
}

\chapter{A brief recap of General Relativity}\label{Intro_GR}

As stated in the Introduction, before getting to the point, we believe that the reader should have a brief recap of General Relativity; particularly of its logical principles and main features.

\section{The logical principles at the basis of the Theory}

As Einstein himself said in his article from 1916, there is a main reason that lies at the basis of knowledge and empirical science to extend the principle of relativity to all reference frames and not only to inertial ones. This is the fact that, in general, the cause of a phenomenon must be something that is experimentally measurable and not fictitious, like the choice of a reference frame. Thus, the inertial forces introduced by Newton to explain the different behavior of objects in what he calls non-inertial reference frames cannot exist, since their cause is simply the choice of a particular reference frame. As a consequence, the different behavior of objects in these frames must be explained in another way, and there is no reason to distinguish the inertial frames from the non-inertial ones. As we will soon see, this is due to how differently the geometry of space-time appears in these frames. Indeed, this is fully in agreement with the Machian theory, in which Einstein believed, that masses at infinity influence the behavior of objects according to their relative movement with respect to them\footnote{In the language of GR, masses at infinity modify the geometry of space-time, which appears in different ways according to the reference frame chosen.}. 
\newline
This reasoning is also supported by the fact that observers in a frame moving in empty space with an acceleration $\mathbf{g}$ with respect to one inertial frame cannot assert that they are in a non-inertial frame because, from their point of view, all objects in their proximity behave as if they are on a frame on the surface of the Earth. Therefore, they could say that they are conducting their experiments in an inertial frame on the surface of the Earth, since all objects in their frame experience an acceleration that is equal to that on Earth and is independent of their composition. As a consequence, if we consider the second reference frame as legitimate, we must similarly consider the first. By this follows the need to extend the principle of relativity also to non-inertial frames. 
\newline
Furthermore, thanks to this last \textit{Gedankenexperiment}, we see that an enlargement of the Theory of Special Relativity must also include a theory of gravitation, since we can generate a gravitational field only by a change of reference frame.
\newline
\newline
However, we notice that in non-inertial reference frames, spatial geometry is not Euclidean. This fact can be seen better if we imagine two reference frames: one inertial and one that is uniformly rotating about the $z$ axis of the former. If we draw a circumference on the $xy$ plane of the inertial frame, we can reasonably suppose that this remains a circumference also in the non-inertial one. Nevertheless, if we measure it in the inertial frame using a ruler that is at rest with respect to this frame, we find that the ratio between the circumference and the diameter is $\pi$, but if we do the same measurement with the same ruler in the non-inertial frame, since when we measure the circumference the ruler is subjected to a boost in the tangential direction, but not in the radial one, we find that, in this second frame, the ratio between the circumference and the diameter is greater than $\pi$. As a consequence, the spatial geometry in this frame cannot be Euclidean. 
\newline
Furthermore, in this frame, an observer in the origin sees a clock placed in a point on the circumference running slower than his own due to the boost given to it by the rotation. So, in this frame, there is also no way to define time such that its flow does not depend on position.
\newline
These facts have a fundamental consequence: we cannot use the same coordinates used in Special Relativity to describe these systems, since those coordinates were based on Euclidean geometry and on the fact that time flowed at the same rate for observers at rest with respect to each other. Hence, the only way to solve this problem is to assume that the laws of Physics must be the same in any system of coordinates. Consequently, the main principle of General Relativity becomes:

\begin{center}
\fbox{ \begin{minipage}{12cm}
\textit{The general laws of nature are to be expressed by equations which
hold good for all systems of coordinates, that is, are covariant with
respect to any substitutions whatever (generally covariant).}
\end{minipage}}
\end{center}
It is easy to see that the principle, stated in this way, includes the fact that the laws of Physics must be the same in any reference frame, because, among all systems of coordinates, there are also those that correspond to the description of all possible relative motions between reference frames.

\section{The metric tensor and the gravitational field}

Considering what we have pointed out in the previous section, we would like to formulate a new theory of gravitation whose equations abide by the principle just stated.
\newline
The cornerstone for doing this is the metric tensor.
\newline
Indeed, if we state another reasonable principle with a strong physical foundation, the identification of the metric tensor with the gravitational field comes quite naturally. This principle is:

\begin{center}
\fbox{ \begin{minipage}{12cm}
\textit{For infinitely small four-dimensional regions the theory of relativity in the restricted
sense is appropriate, if the coordinates are suitably chosen.}
\end{minipage}}
\end{center}
The naturalness of this principle can be understood if we think about astronauts in the ISS, who experience this fact every day, as they are in free fall on Earth.
\newline
Hence, there exists a reference frame in which an observer at rest with respect to it does not measure any gravitational field and, consequently, there exists a system of coordinates ($X^0,X^1,X^2,X^3$) in which the line element takes the particular form

\begin{equation}
    ds^2 = -(dX^0)^2 + (dX^1)^2 + (dX^2)^2 + (dX^3)^2,
    \label{SR_Line_Element}
\end{equation}
whose value is totally independent of the orientation of the chosen reference frame.
\newline
If we change the reference frame and choose, for example, a reference frame at rest on the surface of the Earth described by the coordinates ($x^0,x^1,x^2,x^3$), we obtain that the line element takes the form:

\begin{equation}
    ds^2 = g_{\mu \nu}dx^\mu dx^\nu.
    \label{GR_Line_Element}
\end{equation}
The 10 independent quantities $g_{\mu \nu}$ are the components of what is known as the metric tensor.
\newline
Since an observer at rest in this frame can measure a gravitational field and, in general, the $g_{\mu \nu}$ are functions of the four new coordinates chosen, we can conclude that their non-constancy is linked to the gravitational field.
\newline
\newline
Another way to see this fact is by means of the equation of motion of a free particle.
\newline
Thanks to our daily experience, we know that the trajectory of a body in a gravitational field is not a straight line, as would occur in an inertial frame, but a curved line. We also know that, in Special Relativity, the equation of motion of a free particle is given by the variation of the action

\begin{equation}
    S = -mc \int \sqrt{-ds^2}
    \label{Free_Particle_Action}
\end{equation}
with respect to $x^\mu$. This, using the line element given by the eq. \ref{SR_Line_Element}, leads to 

\begin{equation}
    \frac{d^2x^\mu}{ds^2} = 0,
    \label{Eq.Mot_Free_Part_SR}
\end{equation}
which is the equation of a straight line in a four-dimensional space, as expected.
\newline
By the main principle of General Relativity, the action of a free particle in motion in a gravitational field, since $S$ is a scalar quantity under change of coordinates, must be the same. So, by varying the action as done before, but using the line element given by eq. \ref{GR_Line_Element}, we obtain\footnote{We notice that this calculation can be done only if the line element is timelike. It can be proven, but with a different procedure, that the same equations hold for a null line element (i.e., light is also subjected to a gravitational field).}

\begin{equation}
    \frac{d^2x^\mu}{ds^2} + \Gamma^{\mu}_{\hphantom{\mu} \nu \rho} \frac{dx^\nu}{ds} \frac{dx^\rho}{ds} = 0.
    \label{Eq.Mot_Free_Part_GR}
\end{equation}
The 40 independent quantities $\Gamma^{\mu}_{\hphantom{\mu} \nu \rho}$ are called Christoffel Symbols and are defined in terms of the metric as
\begin{equation}
    \Gamma^{\mu}_{\hphantom{\mu} \nu \rho} = \frac{1}{2}g^{\mu \sigma}(\partial_\nu g_{\sigma \rho} + \partial_\rho g_{\nu \sigma} - \partial_\sigma g_{\nu \rho}).
    \label{Christ_Symb}
\end{equation}
Through these facts, we immediately see that the deviation from straightness is expressed by the Christoffel Symbols, which, since they are functions of the partial derivatives of the metric tensor, confirm what we have stated about the fact that a non-constant metric tensor is directly linked to the gravitational field.
\newline
Since the metric tensor, for those who know a little about differential geometry, determines the metrical properties of space-time, we see that gravity is no longer a force, as stated by Newton, but is something directly related to the geometry of our Universe. 

\section{The equations of motion for the gravitational field}

Since the gravitational field is represented by the metric tensor, whose components are the $g_{\mu \nu}$, in order to complete this theory of gravitation, we must find the equations that determine how they change in the presence of a matter source. 
\newline 
\newline
Before doing that, we must point out the big difference between real gravitational fields (i.e., due to massive objects) and gravitational fields due to the choice of a non-inertial reference frame.
\newline
Indeed, the latter can be eliminated in every region of space-time by a simple change of coordinates, while the former cannot, since, in this case, the gravitational field is not uniform throughout space-time. This feature is what marks a curved space-time.
\newline
\newline
For those who know some differential geometry, we can say that the Tangent-Bundle constructed on space-time, visible as a Differential Manifold, in the latter case is trivial or trivializable, while in the first case is not. Therefore, while in the second case we do not need to define a Connection and a Curvature Form to parallel transport the tangent vectors (that is, the Curvature Form vanishes), in presence of a gravitational field we need all the machinery of Differential Geometry to do computations.
\newline
These facts pose the problem of choosing a suitable connection for what we have to do\footnote{We notice that, in his article from 1916, Einstein implicitly assumes it, but his choice is not the only one. This is due to the fact that, at the time, the Theory of Connections had not been developed yet and the only known possible connection was the Levi-Civita one.}.
\newline
First of all, we require it to be a metric connection. This is a very reasonable choice, since the way vectors are parallel transported must, in some sense, follow the gravitational field present in that specific region of space-time, i.e., the metric in that point. Hence, our connection must be linked to the metric defined on the manifold. Since this choice does not uniquely define the connection, we have to establish also the Torsion Form defined on the Tangent-Bundle. For simplicity, we choose it to be zero, that is, we choose the Levi-Civita Connection, so that we can use expression \ref{Christ_Symb} for the connection components\footnote{We notice that this is not the unique possible choice, as we will see in later chapters.}. With this choice, we see that the Curvature Tensor, whose components with respect to the coordinate basis are

\begin{equation}
    R^{\alpha}_{\hphantom{\alpha} \mu \beta \nu} = \partial_\beta \Gamma^{\alpha}_{\hphantom{\alpha} \nu \mu} - \partial_\nu \Gamma^{\alpha}_{\hphantom{\alpha} \beta \mu} + \Gamma^{\alpha}_{\hphantom{\alpha} \beta \lambda}\Gamma^{\lambda}_{\hphantom{\gamma} \nu \mu} - \Gamma^{\alpha}_{\hphantom{\alpha} \nu \lambda}\Gamma^{\lambda}_{\hphantom{\lambda}  \beta \mu},
    \label{RiemannTen}
\end{equation}
the Ricci tensor $R_{\mu \nu} := R^{\alpha}_{\hphantom{\alpha} \mu \alpha \nu}$ and the Ricci scalar $R := g^{\mu \nu}R_{\mu \nu}$,
are entirely expressed in terms of the metric components and their first and second derivatives.
\newline
\newline
Now that we have chosen the connection, we can derive the equations of motion of the gravitational field. In particular, we are looking for equations that contain at most the second derivatives with respect to space and time of the $g_{\mu \nu}$. Hence, in order to satisfy this last requirement and general covariance, we need to find a scalar action that contains at most their first derivatives with respect to space and time. Unfortunately, there is a problem in doing this, because every such a scalar is obtained by a suitable contraction of Christoffel Symbols. These, however, as stated before, can be put to zero in every point in space with an appropriate choice of coordinates. So, since, if a scalar is zero in one system of coordinates, it is zero in every system of coordinates, we have that every scalar constructed in this way is identically zero. As a consequence, we have to make another choice for the action. For example, we can choose a scalar which contains also the second derivatives of the metric, but in such a way that they give a boundary term. In this way, when we compute the variation of the action, they give a null contribution due to Hamilton Principle.
\newline
It can be proven that one possible choice\footnote{For more details on the argument we refer the reader to \cite{Landau}.} (in particular, the only one for our choice of the connection) is given by the Ricci scalar $R$. So, the action for the gravitational field is 

\begin{equation}
    S_g = -\frac{1}{\kappa}\int R \sqrt{-g} \hspace{1mm} d^4x,
    \label{ActionGravField}
\end{equation}
where $\kappa$ is a dimensional coupling constant, and the minus sign is fully justified in the book \cite{Landau}.
\newline
Varying this action with respect to the metric components leads to the equations for the gravitational field in the absence of matter:

\begin{equation}
    R_{\mu \nu} - \frac{1}{2}g_{\mu \nu}R = 0.
    \label{GravFieldEq_Vacuum}
\end{equation}
We notice that these equations have other solutions besides $g_{\mu \nu} = \eta_{\mu \nu}$, which is the metric of SR, as can be physically and mathematically justified.
\newline
\newline
In case a matter source is present, we have to add the matter action 

\begin{equation}
    S_m = \int \mathcal{L}_m \sqrt{-g} \hspace{1mm} d^4x
    \label{ActionMatter}
\end{equation}
to the previous one, so that the total action becomes $S_{tot} = S_g + S_m$.
\newline
Varying this action with respect to the metric components leads to the equations

\begin{equation}
    R_{\mu \nu} - \frac{1}{2}g_{\mu \nu}R = \frac{\kappa}{2c} T_{\mu \nu},
    \label{GravFieldEq_NC}
\end{equation}
where the tensor $T_{\mu \nu}$ can be proven to be the SET of the matter considered and $\kappa$, taking the weak field limit of these equations, can be proven to be equal to $\frac{16 \pi G}{c^3}$ (where $c$ is the speed of light in vacuum and $G$ is the Universal Gravitational Constant). In this way, they take the known form 

\begin{equation}
    R_{\mu \nu} - \frac{1}{2}g_{\mu \nu}R = \frac{8 \pi G}{c^4} T_{\mu \nu}.
    \label{GravFieldEq}
\end{equation}
These equations show how the gravitational field, described by the components of the metric, change under the presence of a matter source. Furthermore, in the weak field limit, as we expected, they lead to the known equation for the gravitational Newtonian potential: $\Delta \phi = 4 \pi G \rho$.
\newline
\newline
Before talking about how spinors can be mathematically viewed in such a space-time, however, we need to point out two peculiarities that will help to better understand the work presented in this master's thesis.
\newline
The first is that, if we suppose that $\mathcal{L}_m$ is independent of the Christoffel Symbols and their derivatives (this occurs in almost every classical case), and we compute the variation of the total action with respect to the $\Gamma^\mu_{\hphantom{\mu} \nu \rho}$, we obtain eq. \ref{Christ_Symb}. Therefore, if we had chosen the action given by \ref{ActionGravField} without making any assumptions about the Connection defined on the Tangent-Bundle, we would have obtained that a gravitational field described by that Lagrangian would have led to a Tangent-Bundle with zero Torsion. Vice versa, if we had chosen a different Lagrangian we would have obtained a different connection and different field equations, as we will see in Chapter \ref{Spinor_Model}.
\newline 
\newline
The second peculiarity concerns the SET $T_{\mu \nu}$. It can be proven that it satisfies the equations

\begin{equation}
    \nabla^\mu T_{\mu \nu} = 0,
    \label{Cov_Cons_eq}
\end{equation}
which can be recast\footnote{We refer the reader to \cite{Landau} for the complete derivation.} 

\begin{equation}
    \partial_\mu \big( (-g)(T^{\mu \nu} + t^{\mu \nu}) \big)= 0,
    \label{Tot_Cons_eq}
\end{equation}
where $t^{\mu \nu}$ is the gravitational Stress-Energy Pseudotensor and it is an expression of the energetic content of the gravitational field.
\newline
We see, then, that eq. \ref{Cov_Cons_eq} represents the conservation equations for the total energy of the system, given by the sum of gravitational energy and matter's energy. These equations are contained in eq. \ref{GravFieldEq}, since $\nabla^\mu (R_{\mu \nu} - \frac{1}{2}g_{\mu \nu}R) = 0$. Furthermore, if we assume as $T_{\mu \nu}$ that of a perfect fluid, computing its covariant divergence gives back the equations of motion of that fluid in the presence of a gravitational field. Thus, eq. \ref{Cov_Cons_eq} can be viewed as the equations of motion of the matter content.
\newline
We conclude that we cannot establish a priori what the matter distribution is and then determine the gravitational field generated by it, and vice versa. Matter and the gravitational field are intertwined. As Wheeler and Ford write in their book \cite{Wheeler}: ``Spacetime tells matter how to move; matter tells spacetime how to curve.''
\newline
\newline
Considering what we have said so far, we notice that studying the structure of a specific SET is useful for determining how space-time behaves under certain symmetry conditions. That is what we are going to do in this master's thesis.

\chapter{Spin Bundles} \label{Spinor_Bundles}

As we pointed out in the Introduction, since the model we worked on is a spinorial one and the underlying manifold, as stated in the previous chapter, is curved, it is essential to introduce Spin Bundles in order to ensure that the reader can fully understand all the calculations done in the following chapters from both the mathematical and conceptual point of view.
\newline
\newline
The whole chapter is technical. The reader not interested in the mathematical details behind the model we are going to study can skip directly to Section \ref{PhysApplicSpin}.

\section{The Clifford Algebra}

Since Spin Bundles are directly related to Clifford Algebras and their Spin Groups, it is essential to mention their most important properties and features.
\newline
\newline
Let $V$ be a vector space over a field $\mathbb{K} = \mathbb{R}, \mathbb{C}$ and $Q: V\longrightarrow \mathbb{K}$ a quadratic form defined by the bilinear form $B$ such that $\forall v,w \in V$, $Q(v) = B(v,v)$, and: 
\begin{equation}
    Q(v + w) - Q(v) - Q(w) = 2B(v, w).
    \label{Reg_Parallelogramma}
\end{equation}
Let also $T(V) := \bigoplus_k V^{\bigotimes k}$ be the Tensor Algebra constructed from $V$ and $I(Q) \subset T(V)$, the two-sided ideal multiplicatively generated by $v \otimes v - Q(v)$. Its Clifford Algebra is defined as:
\begin{equation}
    C(Q) := T(V)/I(Q).
    \label{Clif_Algebra}
\end{equation}
Given $v,w \in V \subset C(Q)$, it is easy to see that

\begin{equation}
    vw + wv = 2B(v, w).
    \label{Clif_Relations}
\end{equation}
Furthermore, considering that each base of $T(V)$ is naturally projected in the Clifford Algebra by means of the natural projection map $\pi: T(V) \longrightarrow C(Q)$ s.t. $\pi(e_{i_1} \otimes ... \otimes e_{i_k}) = e_{i_1}...e_{i_k}$, where $\{e_i\}_{_{i=1,...,n}}$ is a base in $V$, we have that their elements, considered as elements of $C(Q)$, are also generators of $C(Q)$. However, it is easy to see that those that are linearly independent are only the $e_I = e_{i_1}e_{i_2}...e_{i_k}$ s.t. $I = \{i_1, ..., i_k \}$ and $1 \leq i_1 < i_2 < ... < i_k \leq n$. From this fact we obtain that dim$C(Q)$ = $2^n$.
\newline
In addition, since, by the Sylvester's Theorem, a quadratic form is uniquely determined by its signature, we denote $C(p, q) := C(I_{p,q})$, if $\mathbb{K} = \mathbb{R}$, and $C(n) := C(I_n)$, if $\mathbb{K} = \mathbb{C}$, where $I_n$ and $I_{p,q}$ are the signature matrices of each quadratic form and $p$ and $q$ are respectively the number of positive and negative eigenvalues, such that $p + q = n$.
\newline
\newline
It can be proved that every  Clifford Algebra benefits from the following property, called the universal property of Clifford Algebras. Let $E$ be an associative $\mathbb{K}$-algebra with unit element $1 \in E$ and let $j: V \longrightarrow E$, s.t. $j(v)^2 = Q(v)$ $\forall v \in V$, be a linear map. There exists a unique homomorphism of $\mathbb{K}$-algebras $J:C(Q) \longrightarrow E$ s.t. $J(v) = j(v)$ $\forall v \in V$. This homomorphism is such that $J(vw) := J(v)J(w)$, $\forall v,w \in V$ and for every kind of product of elements in $V$.
\newline
This property can be used to prove a lot of isomorphisms between Clifford Algebras. The most important are\footnote{The proofs of these isomorphisms can be found in \cite{Friedrich}.}:
\begin{itemize}
    \item $C(n) \cong C(p, q) \bigotimes_\mathbb{R} \mathbb{C}$
    \item $C(q, p+2) \cong C(p, q) \bigotimes_\mathbb{R} C(0,2)$
    \item $C(q+2, p) \cong C(p, q) \bigotimes_\mathbb{R} C(2, 0)$
\end{itemize}
Furthermore, it can be easily proven that $C(0, 1) \cong \mathbb{C}$ and $C(2) \cong M_2(\mathbb{C})$, where $M_2(\mathbb{C})$ is the algebra of $2 \times 2$ matrices with complex entries. From these facts and the previous isomorphisms directly follows that $C(n + 2) \cong C(n) \bigotimes_\mathbb{C} M_2(\mathbb{C})$ and, consequently, we obtain the isomorphisms:

\begin{equation}
    \begin{split}
    &\text{if } n = 2k\\
    & C(n) \cong M_2(\mathbb{C}) \otimes_\mathbb{C} M_2(\mathbb{C}) \otimes_\mathbb{C} ... \otimes_\mathbb{C} M_2(\mathbb{C}) \cong End(\mathbb{C}^2 \otimes_\mathbb{C} \mathbb{C}^2 \otimes_\mathbb{C} ... \otimes_\mathbb{C} \mathbb{C}^2) \cong End(\mathbb{C}^{2^k}),\\
    &\text{if } n = 2k + 1\\
    & C(n) \cong \big( M_2(\mathbb{C}) \otimes_\mathbb{C} ... \otimes_\mathbb{C} M_2(\mathbb{C}) \big) \oplus \big( M_2(\mathbb{C}) \otimes_\mathbb{C} ... \otimes_\mathbb{C} M_2(\mathbb{C}) \big) \cong End(\mathbb{C}^{2^k}) \oplus End(\mathbb{C}^{2^k}).
    \label{Isomorphisms}
    \end{split}
\end{equation}
This last result leads to important consequences. Since $C(p, q) \subset C(n)$, this isomorphism induces a non-trivial representation of $C(p, q)$ in $End(\mathbb{C}^{2^k})$ whether $n$ is even or not. As a consequence, $\mathbb{C}^{2^k}$ is a representation space for the Clifford Algebra $C(p,q)$. The elements of this space are called spinors\footnote{We note that this denomination is due only to historical reasons, as we will see later.}.
\newline
\newline
The Clifford Algebra benefits also from another property. Since, $\forall v \in V$, the map $u:V \longrightarrow C(Q)$ s.t. $u(v) = -i(v)$ (where $i: V \longrightarrow C(Q)$ is the inclusion map) is well-defined, there exists a unique algebra homomorphism $\alpha: C(Q) \longrightarrow C(Q)$ s.t. $\alpha^2 = Id_{C(Q)}$. This property allows to write the algebra as the direct sum of two subspaces $C(Q)^+$ and $C(Q)^-$, where $C(Q)^{\pm} := \{ x \in C(Q): \alpha(x) = \pm x \}$. It is easy to see that the first one is a subalgebra of $C(Q)$ and that its dimension is dim$C(Q)^+ = 2^{n-1}$. 
\newline
\newline
Last but not least, there are two other maps that can be defined on the Clifford Algebra. These are the antiinvolutions $\tau$ and $*$, which are defined using the antiinvolution $t: T(V) \longrightarrow T(V)$, s.t. $t(v_1 \otimes v_2 \otimes ... \otimes v_k) = v_k \otimes ... \otimes v_2 \otimes v_1$ and $t(x \otimes y) = t(y) \otimes t(x)$, and the map $\alpha$ defined before. 
\newline
More in detail, since $t$ preservers the ideal $I_Q$, it induces an antiinvolution on $C(Q)$. This antiinvolution is precisely $\tau$, which is defined as the map $\tau: C(Q) \longrightarrow C(Q)$ s.t. $\tau(v_1 v_2 ... v_k) = v_k ... v_2 v_1$ and $\tau(xy) = \tau(y)\tau(x)$. Instead, $*$ is defined as the composition of $\tau$ and $\alpha$, that is $* = \tau \circ \alpha: C(Q) \longrightarrow C(Q)$, s.t. $(v_1 v_2 ... v_k)^* = (-1)^k v_k ... v_2 v_1$.
\newline
This last map allows to define the well-known Spin group.

\section{The Group $Spin(Q)$}\label{Spin_Group_Sec}
The Group $Spin(Q)$ is defined as

\begin{equation}
    Spin(Q) : = \{ x \in C(Q)^+: xx^* = \epsilon, \hspace{0.3cm} xVx^{-1} \subset V \},  \hspace{1cm} \epsilon =
    \begin{cases}
        \pm 1 \text{ if } \mathbb{K} = \mathbb{R}\\
        1 \text{ if } \mathbb{K} = \mathbb{C}
    \end{cases}
    \label{Spin_Group}
\end{equation}
It is easy to see that this group is a subgroup of $(C(Q)^+)^\times$ ($x^{-1} = \epsilon x^*$ for every $x \in Spin(Q)$) and it can be proven that it is a Lie Group. Furthermore, one can also prove that, if $Q$ is given by $I_{p,q}$ with $p,q \geq 1$, this group is not connected.
\newline
This group has a very important property, which we state in the following 

\begin{theorem*}
    Let $B$ be a non-degenerate bilinear form on a vector space $V$ over a field $\mathbb{K} = \mathbb{R}$ or $\mathbb{C}$ and let $Q(x) := B(x,x)$. There exists a continuous surjective group homomorphism $\rho: Spin(Q) \longrightarrow SO(Q)$ s.t. ker$(\rho)$ = \{1, -1\}.
\end{theorem*}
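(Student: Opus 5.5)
The homomorphism $\rho$ will be the twisted adjoint action. For $x \in Spin(Q)$ define $\rho(x): V \longrightarrow V$ by $\rho(x)v := x v x^{-1}$. This is well defined because of the condition $xVx^{-1}\subset V$ in the definition \ref{Spin_Group}. Since $x$ is even, $\alpha(x) = x$, so this coincides with the twisted adjoint $\alpha(x) v x^{-1}$. The map $\rho(x)$ is clearly linear and invertible, with inverse $\rho(x^{-1})$. It preserves $Q$ because, by \ref{Clif_Relations}, $Q(v) = v^2$ in $C(Q)$, and
\[
(xvx^{-1})^2 = x v^2 x^{-1} = Q(v).
\]
The identity $\rho(xy) = \rho(x)\rho(y)$ is immediate, and continuity holds because $\rho$ is polynomial in the entries of $x$. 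This gives a continuous homomorphism into $O(Q)$. The remaining tasks are to show that the image lies in $SO(Q)$, that it is all of $SO(Q)$, and to compute the kernel.

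\emph{Kernel.} Suppose $xvx^{-1} = v$ for all $v\in V$, with $x\in C(Q)^+$. Choose an orthogonal basis $\{e_i\}$ with $Q(e_i)\neq 0$, which exists because $B$ is non-degenerate. Write $x = a + e_1 b$, where $a,b$ are built from monomials not containing $e_1$, and $a$ is even, $b$ odd. The relation $x e_1 = e_1 x$, together with $e_1 e_I = (-1)^{|I|} e_I e_1$ for $1\notin I$, forces $b=0$. Repeating this for each $i$ shows that $x$ contains no $e_i$, so $x = \lambda \in \mathbb{K}$. Then $xx^* = \lambda^2 = \epsilon$, so $\lambda = \pm1$ when $\mathbb{K}=\mathbb{C}$. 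When $\mathbb{K}=\mathbb{R}$, $\lambda^2=-1$ has no solution, so again $\lambda=\pm1$. Hence $\ker\rho = \{1,-1\}$.

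\emph{Surjectivity and image.} For a non-isotropic $u\in V$, the relation $uvu^{-1} = -v + 2\frac{B(u,v)}{Q(u)}u$, again from \ref{Clif_Relations}, shows that $v\mapsto -uvu^{-1}$ is the reflection $r_u$ in the hyperplane $u^\perp$. By the Cartan--Dieudonné theorem, every element of $O(Q)$ is a product of reflections, and elements of $SO(Q)$ are products of an even number $r_{u_1}\cdots r_{u_{2k}}$. The element $x = u_1\cdots u_{2k}$ is even and satisfies $\rho(x) = r_{u_1}\cdots r_{u_{2k}}$. Moreover $xx^* = \prod Q(u_i)$, so after rescaling each $u_i$ we can arrange $xx^*=\pm1$ over $\mathbb{R}$, or $=1$ over $\mathbb{C}$ by taking square roots. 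This proves surjectivity onto $SO(Q)$.

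For the image, note that every $x\in Spin(Q)$ satisfies $\rho(x) = \pm r_{u_1}\cdots r_{u_m}$ for some vectors $u_i$. We then need $\det\rho(x)=1$. I would show this by writing $\rho(x)$ as a product of reflections, as above, and setting $y=u_1\cdots u_m$. Then $\rho(x^{-1}y)$ acts as $\pm\mathrm{id}$ under the twisted action. The kernel argument, run for the twisted action on the whole group $C(Q)^\times$, forces $x^{-1}y$ to be a scalar. Since $x$ is even, $m$ is even, so the determinant is $1$.

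The step I expect to be the main obstacle is Cartan--Dieudonné together with this parity argument, since it needs the twisted action on general, not necessarily even, invertible elements. A clean treatment introduces the Clifford group $\Gamma(Q)=\{x\in C(Q)^\times: \alpha(x)Vx^{-1}\subset V\}$ and proves that its kernel under the twisted action is $\mathbb{K}^\times$, generalising the kernel computation above.
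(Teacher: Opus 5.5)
Your argument is correct, and it is the standard proof found in the references the paper cites (Lawson--Michelsohn, Friedrich). The paper itself gives no proof of this theorem. It only remarks that the homomorphism is $\rho(x)(w)=xwx^{-1}$, which is exactly your map, so your proposal supplies what the paper leaves to the literature. The following steps all go through:
\begin{itemize}
\item the kernel computation via $x=a+e_1b$;
\item the reflection identity $-uvu^{-1}=r_u(v)$;
\item surjectivity from Cartan--Dieudonn\'e, with the normalisation $xx^*=\prod Q(u_i)$;
\item the parity argument through the Clifford group with kernel $\mathbb{K}^\times$.
\end{itemize}

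There is one slip to fix in the last step: the $\pm$ should not be there, and as written the step would not go through. Apply Cartan--Dieudonn\'e directly to $\rho(x)\in O(Q)$, which gives $\rho(x)=r_{u_1}\cdots r_{u_m}$ exactly. The twisted action of each $u_i$ is $r_{u_i}$, and the twisted action of the even element $x$ is $\rho(x)$. Hence the twisted action of $x^{-1}y$ is exactly $\mathrm{id}$, and the twisted kernel $\mathbb{K}^\times$ gives $y=\lambda x$, so $m$ is even.

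If you only knew that the twisted action of $x^{-1}y$ were $\pm\mathrm{id}$, the conclusion that $x^{-1}y$ is a scalar would be false. The volume element $e_1\cdots e_n$ acts as $-\mathrm{id}$ under the twisted action in every dimension.

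Relatedly, give the twisted action its own symbol rather than writing $\rho(x^{-1}y)$, since $\rho$ is only defined on $Spin(Q)$ and is the untwisted conjugation there.
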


\noindent
The homomorphism can be proven to be the map $\rho(x)(w) := xwx^{-1}$, $\forall w \in V$ and  $\forall x \in Spin(Q)$.
\newline
This theorem has an important consequence. It shows that $Spin(Q)$ is a double cover of $SO(Q)$; fact that is directly related to the definition of Spin Bundles and connections on them. 
\newline
\newline
Since $Spin(Q)$ is a Lie Group, it owns an associated Lie Algebra. It is easy to see, that this Lie Algebra is defined by

\begin{equation}
    \mathfrak{spin}(Q) : = \{ x \in C(Q)^+: x + x^* = 0, \hspace{0.3cm}  xv + vx^* \in V \hspace{0.3cm} \forall v \in V \}.
    \label{Spin_Algebra}
\end{equation}
If we choose an orthonormal basis in $V$ and we construct the associated basis of $C(Q)^+$, we can see that, among all its generators, only the elements $e_i e_j$, with $1 \leq i < j \leq n$, are in $\mathfrak{spin}(Q)$\footnote{In particular, the elements $e_{i_1}e_{i_2}...e_{i_{2k}}$ with $k$ even or equal to 0 do not satisfy the first condition, while those with $k$ odd and $\neq 1$ do not satisfy the second.}. As a consequence, every element of $\mathfrak{spin}(Q)$ is generated only by them. Moreover, since they are linearly independent, they form a basis for $\mathfrak{spin}(Q)$. This leads to dim$\mathfrak{spin}(Q) = \frac{n(n-1)}{2}$. 
\newline
In addition, it is easy to see, exploiting the usual procedure, that the homomorphism $\rho$ between $Spin(Q)$ and $SO(Q)$ leads to an homomorphism of algebras $d\rho$ between $\mathfrak{spin}(Q)$ and $\mathfrak{so}(Q)$. This homomorphism inherits the surjectivity of $\rho$ and, since dim$\mathfrak{spin}(Q)=$ dim$\mathfrak{so}(Q)$, it becomes also injective. It immediately follows that the two algebras are isomorphic. This fact has the consequence that $\mathfrak{so}(Q)$ admits a representation in $C(Q)^+$. 
\newline 
\newline
The isomorphism just described can be used to map the basis of $\mathfrak{spin}(Q)$ to a basis of $\mathfrak{so}(Q)$ and vice versa. In particular, given the well-known basis of $\mathfrak{so}(Q)$ made up by the antisymmetric matrices $E_{ij}$, it can be proved that $d\rho(e_i e_j) = 2E_{ij}$, where $1 \leq i < j \leq n$. In practice, is more convenient to make the antisymmetry of the generators of $\mathfrak{spin}(Q)$ explicit and redefine the generators as $2e_i e_j = e_i e_j - e_j e_i = [e_i, e_j]$. This leads to $d\rho([e_i, e_j]) = 4 E_{ij}$. This fact will be useful when we compute the connection, because it will allow us to extend the sums not only on $i<j$.
\newline
\newline
Furthermore, it can be proven that the restriction to $Spin(Q)$ and to $\mathfrak{spin}(Q)$ of the representation $\kappa$ of $C(n)$ on $\mathbb{C}^{2^k}$, whether for $n = 2k$ or $n = 2k + 1$, is faithful. So the group and the algebra both admit a faithful representation on $\mathbb{C}^{2^k}$. 
\newline
In the end, since $\mathbb{R}^n \in C(n)$, an element $x \in \mathbb{R}^n$ can be viewed as an endomorphism on $\mathbb{C}^{2^k}$. This leads to the definition of Clifford multiplication between vectors and spinors, that is, a linear map $\mu: \mathbb{R}^n \bigotimes \mathbb{C}^{2^k} \longrightarrow \mathbb{C}^{2^k}$ s.t., $\forall x \in  \mathbb{R}^n$ and $\forall \psi \in \mathbb{C}^{2^k}$, $x \cdot \psi \equiv \mu(x \otimes \psi) := \kappa(x)\psi$. It can be easily proven that the Clifford multiplication is a homomorphism of $Spin(Q)$-representations, in the sense that, given $g \in Spin(Q)$, $x \in \mathbb{R}^n$ and $\psi \in \mathbb{C}^{2^k}$, $\kappa(g)(x \cdot \psi) = (\rho(g)x) \cdot (\kappa(g) \psi)$.
\newline
All these properties will be very useful for the definition of the Dirac operator and to study some properties of physical particles described by spinorial fields.

\section{Spin Structures}
As we have just seen and as we will see more in detail at the end of this chapter, by definition, spinors can be multiplied by vectors in $\mathbb{R}^n$. Unfortunately, there is no non-trivial representation $\lambda$ of the Linear or of the Orthogonal Group that is compatible with the Clifford multiplication, that is, $\forall A \in SO(Q)$, $\lambda(A)(x \cdot \psi) = (Ax) \cdot (\lambda(A) \psi)$.
\newline
As a consequence, if we defined spinorial fields on pseudoriemannian manifolds as sections of the vector bundle associated to the linear frame bundle or to the orthonormal frame bundle, as we do for usual vector fields, they would not be well-defined.
\newline
This fact leads to the need of defining Spin Structures and Spin Bundles.
\newline
\newline
In order not to dwell too much on the argument, we take for granted the reader's knowledge about fibre bundles, $G$-principal bundles and their associated vector bundles.
\newline
Moreover, for the same reason, we prefer not to spend much time on the problem of the existence of these structures and these bundles, but we bring to the attention of the reader their definitions and constructions, that will be useful in the next section to deal with the much more physical-related problem of defining a connection on them.
\newline
\newline
Since, as we have shown before, $Spin(Q)$ is related to $SO(Q)$, it is quite natural to define a spin structure so that is composed by a $Spin(Q)$-principal bundle linked in some way to an $SO(Q)$-principal bundle. This fact has also a physical justification. The two fibre bundles must be linked because we would like the components of a spinorial field and the components of the connection computed with respect to a local basis of $Spin(Q)$ (gauge in the $Spin(Q)$-bundle) to coincide to those computed with respect to a local orthonormal frame (gauge in the $SO(Q)$-bundle). This will prove to be very useful in the next chapters.
\newline
As a consequence, we write the

\begin{definition*}
    Let M be a pseudoriemannian manifold of dimension $n$, let $(E, \pi_E, M; SO(Q))$ be an $SO(Q)$-principal bundle over M. A spin structure on the principal bundle $E$ is a pair $(P, \Lambda)$ where
    \begin{itemize}
        \item $P$ is a $Spin(Q)$-principal bundle over $M$
        \item $\Lambda: P \longrightarrow E$ is a 2-fold covering for which the diagram\\
        
        \vspace{0.3cm}
        
        \begin{center}
            \begin{tikzcd}
                P \times Spin(Q) \ar[dd, "\Lambda \times\rho"] \ar[rr] & & P \ar[dd, "\Lambda"] \ar[rrd, "\pi_P"]\\
               & & & & X\\
               E \times SO(Q) \ar[rr] & & E \ar[rru, "\pi_E"]\\
            \end{tikzcd}
        \end{center}
        commutes. The rows represent the action of each group on the corresponding principal bundle and $\pi_P$ and $\pi_E$ are the projection maps of the two fibre bundles. 
    \end{itemize}
\end{definition*}
\noindent
Unfortunately, the existence of this structure is not always guaranteed, but it is related to topological properties of the $E$ fibre bundle. More precisely, the following theorem, whose proof (only for riemannian manifolds) can be found on \cite{Lawson} and on \cite{Friedrich}, holds:

\begin{theorem*}
    Let $E$ be an oriented principal bundle over a manifold $X$. Then, there exists a spin structure on $E$ if and only if the second Stiefel-Whitney class of $E$ is zero, that is $w_2(E) = 0$.
    \newline
    Furthermore, if $w_2(E) = 0$, then the distinct spin structures on $E$ are in one-to-one correspondence with the elements of $H^1(X;\mathbb{Z}_2)$.
\end{theorem*}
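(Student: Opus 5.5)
The natural route is Čech cohomology with coefficients in $\mathbb{Z}_2$, using the short exact sequence of groups
\begin{equation*}
1 \longrightarrow \mathbb{Z}_2 \longrightarrow Spin(Q) \overset{\rho}{\longrightarrow} SO(Q) \longrightarrow 1
\end{equation*}
supplied by the Theorem of Section \ref{Spin_Group_Sec}: $\rho$ is surjective with kernel $\{1,-1\}$. Since $\ker\rho$ is central, the standard non-abelian cohomology machinery yields an exact sequence of pointed sets
\begin{equation*}
H^1(X;\mathbb{Z}_2) \longrightarrow H^1(X;Spin(Q)) \longrightarrow H^1(X;SO(Q)) \overset{\delta}{\longrightarrow} H^2(X;\mathbb{Z}_2).
\end{equation*}
Here $H^1(X;G)$ classifies $G$-principal bundles over $X$.

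\textbf{Step 1 (cocycles).} Fix a good cover $\{U_\alpha\}$ of $X$, with all $U_\alpha$ and their finite intersections contractible. Describe $E$ by transition functions $g_{\alpha\beta}: U_{\alpha\beta}\to SO(Q)$ satisfying $g_{\alpha\beta}g_{\beta\gamma}g_{\gamma\alpha}=1$. Since $U_{\alpha\beta}$ is contractible and $\rho$ is a covering map, each $g_{\alpha\beta}$ lifts to some $\tilde g_{\alpha\beta}: U_{\alpha\beta}\to Spin(Q)$.

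\textbf{Step 2 (obstruction class).} Set $\tilde g_{\alpha\beta}\tilde g_{\beta\gamma}\tilde g_{\gamma\alpha}=:\epsilon_{\alpha\beta\gamma}$. This lies in $\ker\rho=\{\pm1\}$ and is locally constant. One checks three things:
\begin{itemize}
\item $\epsilon$ is a Čech 2-cocycle, which uses centrality of $\pm1$;
\item changing the lifts $\tilde g_{\alpha\beta}\mapsto \eta_{\alpha\beta}\tilde g_{\alpha\beta}$ changes $\epsilon$ by the coboundary $\delta\eta$;
\item changing $E$ by an isomorphism, $g_{\alpha\beta}\mapsto h_\alpha g_{\alpha\beta}h_\beta^{-1}$, does not change $[\epsilon]$, because each $h_\alpha$ lifts on the contractible $U_\alpha$.
\end{itemize}
So $[\epsilon]\in H^2(X;\mathbb{Z}_2)$ is an invariant of $E$.

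A spin structure $(P,\Lambda)$ is exactly a choice of lifts forming a $Spin(Q)$-cocycle. Indeed, such a cocycle defines $P$, and $\Lambda$ is induced by $\rho$ fibrewise, so the diagram in the Definition commutes automatically. Conversely, a spin structure gives such lifts. Hence a spin structure exists if and only if $[\epsilon]=0$.

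\textbf{Step 3 (identification with $w_2$).} This is the main obstacle. The step needed is $[\epsilon]=w_2(E)$. I would take the Čech definition of Stiefel–Whitney classes, in which $w_2$ is the obstruction to lifting along the universal double cover $Spin\to SO$. That is essentially definitional, given that $Spin(Q)\to SO(Q)$ is a double cover which is nontrivial on each component over the identity component; in the indefinite case $SO(Q)$ is disconnected and one works with the identity component. If instead $w_2$ is defined via characteristic classes of $BSO$, one must show that the class $\delta$ pulled back from the universal bundle generates $H^2(BSO;\mathbb{Z}_2)$. This follows because $\pi_1(SO)=\mathbb{Z}_2$ for $n\ge3$ and the double cover is the universal one. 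I would rely on \cite{Lawson} here.

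\textbf{Step 4 (classification).} Suppose two spin structures are given by lifts $\tilde g$ and $\tilde g'$. Their ratio $\eta_{\alpha\beta}=\tilde g'_{\alpha\beta}\tilde g_{\alpha\beta}^{-1}\in\{\pm1\}$ is a 1-cocycle, by centrality. Equivalence of spin structures, meaning isomorphisms of $P$ commuting with $\Lambda$, corresponds to changing $\eta$ by coboundaries $\eta_\alpha\eta_\beta$. Conversely, any 1-cocycle $\eta$ multiplied onto a given lift produces a new spin structure. This gives a free and transitive action of $H^1(X;\mathbb{Z}_2)$ on the set of spin structures, hence a bijection once a base spin structure is fixed.
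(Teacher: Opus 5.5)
The paper gives no proof of this statement. It is quoted essentially verbatim from Lawson--Michelsohn, and the reader is sent to \cite{Lawson} and \cite{Friedrich}, with the caveat that those proofs cover only the Riemannian case. Your \v{C}ech-cocycle argument is correct in that setting, but it is not the route Lawson--Michelsohn take; they never choose transition functions. They identify a spin structure with a double cover of the total space $P_{SO}(E)$ that is nontrivial on each fibre, i.e.\ a class in $H^1(P_{SO}(E);\mathbb{Z}_2)$ restricting to the generator of $H^1(SO_n;\mathbb{Z}_2)$. Both claims are then read off the low-degree exact sequence $0\to H^1(X;\mathbb{Z}_2)\to H^1(P_{SO}(E);\mathbb{Z}_2)\to H^1(SO_n;\mathbb{Z}_2)\to H^2(X;\mathbb{Z}_2)$ of the fibration $SO_n\to P_{SO}(E)\to X$, whose last map (the transgression) sends the generator to $w_2(E)$. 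That route turns your Step 3 into a known property of the transgression rather than a separate identification. The $H^1(X;\mathbb{Z}_2)$-torsor structure also comes for free from injectivity and exactness at $H^1(P_{SO}(E);\mathbb{Z}_2)$; the price is the Serre spectral sequence. Your route is elementary and constructive, and it is the picture the paper uses afterwards: local gauges $\sigma$ with $\Lambda\circ\sigma=\varepsilon$ are exactly your lifted cocycles. In it, however, $[\epsilon]=w_2(E)$ is either a definition, which makes the existence claim tautological, or needs the universal-bundle argument you sketch. That argument uses naturality of $[\epsilon]$ and $H^2(BSO(n);\mathbb{Z}_2)\cong\mathbb{Z}_2$. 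It also uses $[\epsilon]\neq0$ for the universal bundle, which holds because $ESO(n)$ is contractible while $Spin(n)\to SO(n)$ is nontrivial on fibres.

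One point in Step 3 needs more than a parenthesis. In indefinite signature, passing to the identity component is a hypothesis the statement needs, not a convenience. The paper's $Spin(Q)$ allows $xx^*=-1$, so it double covers all of $SO(3,1)$. The preimage of the maximal compact $S(O(3)\times O(1))\cong O(3)$ is then the double cover of $O(3)$ in which reflections lift to elements of order $2$: the lift $e_1e_4$ of $\mathrm{diag}(-1,1,1,-1)$ squares to $+1$. For $E=E^+\oplus E^-$ its obstruction is $w_2(E^+)=w_2(E)+w_1(E^+)^2$, not $w_2(E)$. Concretely, take $\mathbb{RP}^3\times\mathbb{R}$ with $TM\cong 4\gamma$ and one copy of $\gamma$ declared timelike. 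It is orientable and not time-orientable, and has $w_2(E)=0$ but $w_2(E^+)=a^2\neq0$, so it has no spin structure. On $SO_0(3,1)$ your argument goes through verbatim, since $\pi_1(SO_0(3,1))=\mathbb{Z}_2$. For $SO_0(p,q)$ with $p,q\ge 2$, the justification ``the double cover is the universal one'' fails, because $\pi_1(SO_0(p,q))\cong\pi_1(SO(p))\times\pi_1(SO(q))$. The conclusion still holds there: the obstruction is $w_2(E^+)+w_2(E^-)$, which equals $w_2(E)$ because $w_1(E^\pm)=0$.
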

\noindent
We see, then, that the existence of a spin structure is related to $H^1(X;\mathbb{Z}_2)$, the first cohomology group of the topological space $X$ with coefficients in the field $\mathbb{Z}_{2}$.
\newline 
Nevertheless, since we are physicists and we have proofs of the existence of fermions in our Universe, in this work we assume that this kind of structure exists.
\newline
\newline
From this structure we can construct the Spin Bundle, which is no less than a vector bundle associated to the $Spin(Q)$-principal bundle. 
\newline
In particular, given the representation $\kappa$ of the Spin Group on the vector space $\mathbb{C}^{2^k}$, we define the Spin Bundle associated to the spin structure $(P, \Lambda)$ as the vector bundle 

\begin{equation}
    S := P \times_\kappa \mathbb{C}^{2^k}.
    \label{Spin_Bundle}
\end{equation}
We immediately see that, since the Tangent Bundle is isomorphic to the vector bundle associated to the $SO(Q)$-principal bundle (that is $TM \cong E \times_{SO(Q)} \mathbb{R}^n \cong P \times_{\rho} \mathbb{R}^n$), the Clifford multiplication $\mu$ induces a bundle morphism of the associated bundles: $\mu : TM \bigotimes S \longrightarrow S$.
\newline
Therefore, if we defined spinorial fields as global sections of this vector bundle, this definition would be well-posed, since, as we have seen  in the previous section, the Clifford multiplication is a homomorphism of $Spin(Q)$-representations. So, we affirm that a spinorial field $\psi$ is a global section of $S$, that is $\psi \in \Gamma(S)$.
\newline 
\newline
Now that we have shown how spin structures and Spin Bundles are constructed and what spinorial fields really are, we would like to show how we can parallel transport them. This requires the definition of a connection on the bundles just constructed. In this way, we will be able to fully justify all the calculations done in the following chapters.

\section{Connections on Spin Bundles}
As we know from Differential Geometry, there is no natural way to parallel transport sections of a generic fibre bundle. It is for this reason that we need to define a connection on it.
\newline
Given a point $\xi$ on a fibre bundle $E$ with fibre $F$ and given $T_\xi E$ the tangent space to the bundle in that point, an infinitesimal connection is a linear surjective map $\omega_\xi: T_\xi E \longrightarrow T_\xi F$. It allows to have a decomposition of $TE$ in each point of $E$, in the sense that $T_\xi E \cong T_\xi F \bigoplus \mathcal{H}_\xi$, where $\mathcal{H}_\xi$ is called horizontal subspace. This decomposition is what allows to define a parallel transport and what ensures its uniqueness.
\newline
However, if we are working with a $G$-principal bundle $\mathcal{P}$, the presence of a right, free and transitive action of the group on the fibres $\mathcal{G}$ ensures that every vertical vector (that is every $X \in T_p \mathcal{G}$) is in a one-to-one correspondence with an element of the Lie Algebra of $G$. Therefore, the connection on these bundles can be viewed as a map $\omega_p: T_p \mathcal{P} \longrightarrow Lie(G)$. 
\newline
\newline
Now, let $Z$ be a connection on the $SO(Q)$-principal bundle $E$, so that $Z: TE \longrightarrow \mathfrak{so}(Q)$. 
\newline
The following theorem holds:

\begin{theorem*}
     Let $(P, \Lambda)$ be a spin structure defined on the $SO(Q)$-principal bundle $E$. If $d\Lambda$ maps horizontal vectors in horizontal vectors, then any $\tilde{Z}$ is a connection on $P$ if and only if it is a lift of the connection $Z$ defined on $E$, that is, iff the following diagram commutes
     \begin{center}
         \begin{tikzcd}
            TP \ar[dd, "d\Lambda"] \ar[rr, "\tilde{Z}"] & & \mathfrak{spin}(Q) \ar[dd, "d\rho"]\\
            \\
            TE \ar[rr, "Z"] & & \mathfrak{so}(Q)\\
        \end{tikzcd}
    \end{center}
\end{theorem*}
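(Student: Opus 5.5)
The plan is to make precise what ``$\tilde Z$ is a connection on $P$'' means here. Following the statement, $\tilde Z$ is the connection on $P$ whose horizontal distribution $\tilde{\mathcal H}$ is mapped by $d\Lambda$ into the horizontal distribution $\mathcal H$ of $Z$. We then show that for such a $\tilde Z$ the diagram commutes. Conversely, $\tilde Z := (d\rho)^{-1}\circ Z\circ d\Lambda$ is always a connection form on $P$ with exactly this property, and it is the unique one. Throughout we use two facts. First, $\Lambda$ is a 2-fold covering, so $d\Lambda_p: T_pP\to T_{\Lambda(p)}E$ is a linear isomorphism. Second, $d\rho:\mathfrak{spin}(Q)\to\mathfrak{so}(Q)$ is a Lie algebra isomorphism, as shown in Section \ref{Spin_Group_Sec}.

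The first step is to compare vertical vectors. For $X\in\mathfrak{spin}(Q)$ let $X^*_p=\frac{d}{dt}\big|_{0}\,p\exp(tX)$ be the fundamental vector field. Differentiating the commutative diagram of the spin structure definition, $\Lambda(p\,g)=\Lambda(p)\rho(g)$, at $g=\exp(tX)$ gives
\begin{equation*}
d\Lambda(X^*_p)=\big(d\rho(X)\big)^*_{\Lambda(p)} .
\end{equation*}
Hence $d\Lambda$ maps vertical vectors bijectively onto vertical vectors. Then decompose any $v\in T_pP$ as $v=v^{v}+v^{h}$ according to $\tilde Z$. Since $Z$ kills horizontal vectors, the hypothesis $d\Lambda(v^h)\in\mathcal H$ gives $Z(d\Lambda v)=Z(d\Lambda v^v)$. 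Writing $v^v=X^*_p$ with $X=\tilde Z(v)$, the identity above gives $Z(d\Lambda v)=d\rho(\tilde Z(v))$, so the diagram commutes. This also shows $\tilde Z=(d\rho)^{-1}\circ Z\circ d\Lambda$, which yields uniqueness.

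For the converse, define $\tilde Z$ by that formula and check the two axioms of a connection form. Reproduction of fundamental fields follows directly from the vertical identity: $\tilde Z(X^*)=(d\rho)^{-1}Z\big((d\rho X)^*\big)=X$. Equivariance requires $R_g^*\tilde Z=\mathrm{Ad}_{g^{-1}}\tilde Z$. Use $\Lambda\circ R_g=R_{\rho(g)}\circ\Lambda$, the equivariance of $Z$, and the intertwining relation $d\rho\circ\mathrm{Ad}_{g}=\mathrm{Ad}_{\rho(g)}\circ d\rho$, which comes from differentiating the homomorphism property of $\rho$. Then
\begin{equation*}
\tilde Z(dR_g v)=(d\rho)^{-1}\mathrm{Ad}_{\rho(g)^{-1}}Z(d\Lambda v)=\mathrm{Ad}_{g^{-1}}\tilde Z(v).
\end{equation*}
Smoothness is inherited from $\Lambda$ and $Z$. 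Finally, $\ker\tilde Z=(d\Lambda)^{-1}(\ker Z)$, so $d\Lambda$ indeed maps horizontal vectors to horizontal vectors.

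I expect the main obstacle to be interpretive rather than computational. The literal statement (``any $\tilde Z$ is a connection iff it is a lift'') cannot be read as a characterisation of all connections on $P$. It has to be read with the hypothesis on $d\Lambda$ attached to $\tilde Z$, and the write-up must be careful to keep the two directions straight under that reading. The only genuinely technical point is the intertwining relation between $\mathrm{Ad}$ and $d\rho$ together with the vertical-vector identity. Both follow by differentiating $\rho(g\exp(tX)g^{-1})=\rho(g)\exp(t\,d\rho X)\rho(g)^{-1}$ and $\Lambda(p\exp tX)=\Lambda(p)\exp(t\,d\rho X)$ at $t=0$.
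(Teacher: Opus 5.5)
Your proof is correct and follows essentially the paper's route: the fundamental-field identity $d\Lambda(X^\#_p)=(d\rho(X))^\#_{\Lambda(p)}$ handles vertical vectors, and the hypothesis that $d\Lambda$ sends $\tilde Z$-horizontal vectors to $Z$-horizontal ones handles the rest. Where the paper only says the converse ``follows immediately'', you verify that $(d\rho)^{-1}\circ Z\circ d\Lambda$ satisfies the equivariance axiom using $d\rho\circ\mathrm{Ad}_g=\mathrm{Ad}_{\rho(g)}\circ d\rho$, which makes that direction complete.
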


\begin{proof}
    Let $X \in \mathfrak{spin}(Q)$. Its associated vertical vector (fundamental field) in a point $p \in P$ is 

    \begin{equation}
        X^\#_p = \frac{d}{dt} \bigg|_{t = 0} p \hspace{1mm} \text{exp}(tX).
        \label{Fund_Field}
    \end{equation}
    Let us suppose that $\tilde{Z}$ is a connection on $P$. By definition, we have that $\tilde{Z}(X^\#_p) = X$.
    \newline
    If we compute $\Lambda^*(Z)$, we have
    \begin{equation}
        \begin{split}
            \Lambda^*(Z)(X^\#_p) &= Z(d\Lambda(X^\#_p)) = Z \bigg(\frac{d}{dt} \bigg|_{t = 0} \Lambda \big(p \hspace{1mm} \text{exp}(tX) \big)\bigg) = Z \bigg(\frac{d}{dt} \bigg|_{t = 0} \Lambda(p)\hspace{1mm} \rho \big(\text{exp}(tX) \big)\bigg) = \\
            & = Z \bigg(\frac{d}{dt} \bigg|_{t = 0} \Lambda(p)\hspace{1mm} \text{exp} \big(t \hspace{1mm} d \rho(X) \big) \big)\bigg) = d \rho(X) = d \rho \big( \tilde{Z}(X^\#_p) \big) 
        \end{split}
        \label{Lift}
    \end{equation}
    As a consequence, we have that, on vertical fields, $\tilde{Z}$ is a lift of $Z$. 
    \newline
    The inverse follows immediately.
    \newline
    Now, we have to prove that they coincide also on horizontal vectors. 
    \newline
    This follows immediately from the fact that, given an horizontal vector $v_h \in TP$, $d \Lambda(v_h) = w_h$, where $w_h$ is an horizontal vector $\in TE$. From this we can conclude that the two expressions for the connection coincide for every vector in $TP$.
\end{proof}

\noindent
Furthermore, it is easy to prove that, if the theorem holds, the lift is unique. 
\newline
\newline
From now on, for simplicity, we assume that the spin structure defined on the space-time manifold satisfies the hypothesis of the theorem\footnote{Otherwise, $\tilde{Z} = d\rho^{-1}(\Lambda^*(Z)) + \tilde{Z}'$, where $\tilde{Z}'$ is zero on vertical vectors and $-d\rho^{-1}(\Lambda^*(Z))$ on horizontal ones.}. This has a very important consequence on the connection defined on the Spin Bundle. In fact, if we further assume that the connection $Z$ is a $G$-connection\footnote{It can be proven that, in order to be well-defined, such a connection must be a metric connection. This condition guarantee that, once the torsion on the manifold has been fixed, the connection on the $SO(Q)$-principal bundle is uniquely determined without any further choice.}, $\tilde{Z}$ is uniquely determined. 
\newline
In fact, given a local gauge $\varepsilon: U \longrightarrow E$, we have that

\begin{equation}
    Z_\varepsilon := \varepsilon^*(Z) = \sum_{I < J} \omega^{IJ}E_{IJ},  
    \label{SO(Q)_Conn}
\end{equation}
where $\omega^{IJ} := \eta^{JK}\omega^I_{\hphantom{I}K}$ are the components of the connection in the chosen tetrad $e^I$, computed using the structure equations ($\eta_{IJ}$ are the components of the signature matrix of $Q$ and $T^I$ is the Torsion form in the chosen tetrad):

\begin{equation}
    \begin{split}
    &de^I + \omega^I_{\hphantom{I}J} \wedge e^J = T^I \\
    & \eta_{IK} \omega^K_{\hphantom{K}J} + \eta_{KJ} \omega^K_{\hphantom{K}I} = 0.
    \end{split}
    \label{Structure_Eqs}
\end{equation}
Therefore, if we choose a local gauge $\sigma: U \longrightarrow P$ and we notice that $\varepsilon := \Lambda \circ \sigma$ is a local gauge in $E$, calling $e^I$ the associated tetrad, we have that $\varepsilon^*(Z)$ has the expression given by eq. \ref{SO(Q)_Conn}. Then, if we consider that $\tilde{Z}$ is the lift of $Z$, thanks to the properties of the pullback, we obtain

\begin{equation}
    \begin{split}
    Z_\varepsilon = \varepsilon^*(Z) = (\Lambda \circ \sigma)^*(Z) = \sigma^* (\Lambda^*(Z)) = &\sigma^*(d\rho(\tilde{Z})) = d\rho(\sigma^*(\tilde{Z})) =: d \rho (\tilde{Z}_\sigma)\\
    & \Big \Downarrow\\
    \tilde{Z}_\sigma = d \rho^{-1}(Z_\varepsilon) = d \rho^{-1} \Bigg(\sum_{I < J} \omega^{IJ} & E_{IJ} \Bigg) = d \rho^{-1} \Bigg( \frac{1}{2}\sum_{I,J} \omega^{IJ}E_{IJ} \Bigg) =  \\
    = \frac{1}{2}\sum_{I,J} \omega^{IJ} d \rho^{-1}(E_{IJ}) = & \frac{1}{8}\sum_{I,J} \omega^{IJ}[\tilde{e}_{I}, \tilde{e}_{J}], 
    \end{split}
    \label{Spin(Q)_Conn}
\end{equation}
where $\tilde{e}_I$ are the elements of the multiplicative basis of $C(Q)$.
\newline
We see, then, that the components of the connection in the $Spin(Q)$-principal bundle with respect to the local gauge $\sigma$ are the same as those of the connection in the $SO(Q)$-principal bundle with respect to the corresponding local gauge $\varepsilon$. This is perfectly in agreement with what we have stated in the previous section. Therefore, when we choose a tetrad in the $SO(Q)$-principal bundle and we compute the components of the $SO(Q)$-connection with respect to it, we automatically compute the components of the $Spin(Q)$-connection with respect to any local gauge $\sigma$ such that $\Lambda \circ \sigma = \varepsilon$ (in general this is not unique).
\newline 
\newline
Now, we can analyze what happens to spinorial fields in the Spin Bundle.
\newline
From Differential Geometry we know that sections on a vector bundle $E = P \times_\rho V$ associated to a $G$-principal bundle $P$ are in one-to-one correspondence with equivariant maps $\alpha^\psi: P \longrightarrow V$. Since, in Theoretical Physics, we always start from an Action to build a theory, but sections, formally, are equivalence classes and dealing with them is very difficult, exploiting this correspondence, we might think to formulate the action in terms of equivariant maps. In particular, we define (in this case spinorial) local fields as local equivariant maps, that is $\psi_\sigma : = \alpha^\psi \circ \sigma$, where  $\sigma$ is a gauge in the $Spin(Q)$-principal bundle. Since this definition depends on the gauge chosen, we point out that the action must be composed by suitable gauge-invariant combinations of this term, so that the associated theory does not depend on an arbitrary gauge choice.
\newline
Moreover, once specified a connection on $P$, if we establish that the connection on the associated vector bundle is that induced by $P$, we know that this connection defines a covariant derivative which act on the sections of the vector bundle so that $\nabla \alpha^\psi = d\alpha^\psi + \rho_*(\omega) \alpha^\psi$, where $\rho_*$ is the representation of the algebra of the structure group on the vector space isomorphic to the fibre.
\newline
Therefore, in the gauge $\sigma$, we obtain that the local covariant derivative of the spinorial field is

\begin{equation}
      D_\sigma \psi_\sigma := \sigma^* (\nabla \alpha^\psi) =  \bigg(\partial_\mu \psi_\sigma + \frac{1}{8} \sum_{I,J}\omega^{IJ}_{\hphantom{IJ}\mu}[\kappa_*(\tilde{e}_I), \kappa_*(\tilde{e}_J)] \psi_\sigma \bigg) dx^\mu.
    \label{Cov_Dev_Spinors}
\end{equation}
In addition, from the general expression we also notice that, with this choice of connection on the Spin Bundle, since the Clifford multiplication is an homomorphism of $Spin(Q)$-representations, we have 

\begin{equation}
    D_\sigma (X_\sigma \cdot \psi_\sigma) = D_\sigma X_\sigma \cdot \psi_\sigma + X_\sigma \cdot D_\sigma \psi_\sigma.
    \label{Cliff_Mult_Der_Cov}
\end{equation}
This will turn very useful when we compute the conserved current of the spinor model analyzed.
\newline
\newline
We notice that the choice made about the connection on the Spin Bundle is not the only one possible. We could have chosen a connection that it is not induced by that in $P$. The choice made, however, is the most natural for two reasons. The first is that it is the simplest choice that can be made. The second is that, since in the Chapter \ref{Intro_GR} we established that the connection on the Tangent Bundle must be a metric one, it is quite natural that also the connection on the Spin Bundle is linked in some way to the metric connection.
\newline
\newline
Before introducing the notion of Weyl spinor, which will be very useful in the next chapter, we show another important result about the connection just defined.
\newline
We know that, given a generic metric connection with components $\omega^{IJ}$, this can be written as the sum of a Levi-Civita connection and a Contorsion form; that is $\omega^{IJ} = \tilde{\omega}^{IJ} + C^{IJ}$, where $C^{IJ}\wedge e_J = T^I$.  
If we assume that $T^I = 0$, i.e., $\omega^{IJ} = \tilde{\omega}^{IJ}$, in the coordinate basis this condition translates into $\Gamma^\mu_{\hphantom{\mu} \nu \rho} = \Gamma^\mu_{\hphantom{\mu} \rho \nu}$ and the $\Gamma^\mu_{\hphantom{\mu} \nu \rho}$ are given by eq. \ref{Christ_Symb}. This leads to the formula\footnote{It comes out that this formula still holds for a generic metric connection.}

\begin{equation}
    \Gamma^\mu_{\hphantom{\mu} \nu \rho} = e^\mu_{\hphantom{\mu}J}\partial_\nu e^J_{\hphantom{J}\rho} + e^\mu_{\hphantom{\mu}J} \omega^J_{\hphantom{J}I \nu}e^I_{\hphantom{I}\rho}.
    \label{Christ_Symb_Metr_Conn}
\end{equation}

\section{Weyl spinors}
Before showing how the whole mathematical formalism is applied to Physics, we want to explain what Weyl Spinors are, what properties they possess and how the connection acts on them.
\newline
\newline
In Section \ref{Spin_Group_Sec} we have defined the Spin Group and we have seen that $Spin(Q) \subset C(Q)^+ \subset C(n)^+$. In the case $n = 2k$, it is easy to prove that the element $\tilde{e}_1 \tilde{e}_2...\tilde{e}_{2k} \in \mathcal{Z}(C(n)^+)$; that is, it belongs to the center of the subalgebra $C(n)^+$. Hence, $\tilde{e}_1 \tilde{e}_2...\tilde{e}_{2k}$ commutes with all the elements of $Spin(Q)$. Therefore, the map

\begin{equation}
    f := i^{k-q} \kappa(\tilde{e}_1 \tilde{e}_2...\tilde{e}_{2k}): \mathbb{C}^{2^k} \longrightarrow \mathbb{C}^{2^k}
    \label{Weyl_End} 
\end{equation}
is an automorphism of $Spin(Q)$ representations, that is, $f(\kappa(g)\psi) = \kappa(g)f(\psi)$, $\forall g \in Spin(Q)$ and $\forall \psi \in \mathbb{C}^{2^k}$. Furthermore, it is easy to prove that $f^2 = I_{\mathbb{C}^{2^k}}$, since $(\tilde{e}_1 \tilde{e}_2...\tilde{e}_{2k})^2 = (-1)^{k-q}$. 
\newline
As a consequence, the space $\mathbb{C}^{2^k}$ decomposes in the two eigensubspaces of this automorphism;

\begin{equation}
    \Delta^{\pm} := \{\psi \in \mathbb{C}^{2^k}: f(\psi) = \pm \psi \},
    \label{Weyl_Subspaces}
\end{equation} 
respectively called right and left Weyl subspaces. 
\newline
Spinors in these eigenspaces are obviously called right and left Weyl spinors. 
\newline
\newline
It is quite straightforward to see that these subspaces are invariant under the action of the $Spin(Q)$-representation $\kappa$ and that dim$_\mathbb{C} \Delta^+ =$ dim$_\mathbb{C} \Delta^- = 2^{k-1}$. In addition, it can be easily proven that, given $\psi \in \Delta^{\pm}$, $x \cdot \psi \in \Delta^{\mp}$ $\forall x \in \mathbb{R}^{2k}$. Thus, Clifford multiplication induces an homomorphism from $\Delta^{\pm}$ to $\Delta^{\mp}$.
\newline
\newline
There exists a very important theorem concerning Weyl subspaces\footnote{The proof, only for the case of completely positive signature, can be found on \cite{Friedrich}.}:

\begin{theorem*}
    The $Spin(Q)$-representations $\Delta^{\pm}$ are irreducible.
\end{theorem*}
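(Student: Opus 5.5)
The strategy is to pass from the group to the algebra it generates. First I will show that the linear span of $\kappa(Spin(Q))$ inside $End(\mathbb{C}^{2^k})$ equals $\kappa(C(n)^+)$. Then I will show that $\kappa(C(n)^+)$ acts irreducibly on each of $\Delta^{\pm}$.

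Since any $Spin(Q)$-invariant subspace is invariant under the span of $\kappa(Spin(Q))$, the first step reduces the question to invariant subspaces of $C(n)^+$. To carry it out, I use that for $i\neq j$ in an orthonormal basis (suitably normalised, i.e. after complexification so that $\tilde e_i^2=1$), the elements $\exp(t\,\tilde e_i\tilde e_j)=\cos t+\sin t\,\tilde e_i\tilde e_j$ (or $\cosh/\sinh$ for mixed signature) lie in the identity component of $Spin(Q)$. Differentiating at $t=0$, or taking linear combinations for two values of $t$, shows that $1$ and each $\tilde e_i\tilde e_j$ lie in the span. Products of these, which are again group elements, give all $\tilde e_I$ with $|I|$ even. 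Hence the span is exactly $C(Q)^+$, and its complexification is $C(n)^+$.

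For the second step, I will use the isomorphism $C(n)^+\cong C(n-1)$ given by $\tilde e_i\tilde e_{n}\mapsto \tilde e_i$ for $i<n$, up to factors of $i$ that fix the signs of the squares; this follows from the universal property stated earlier. With $n-1=2k-1$ odd, the isomorphisms (\ref{Isomorphisms}) give
$$C(n)^+\cong End(\mathbb{C}^{2^{k-1}})\oplus End(\mathbb{C}^{2^{k-1}}).$$
The two simple summands are cut out by the central idempotents $\frac12(1\pm f)$, where $f$ is the volume element of (\ref{Weyl_End}), which is central in $C(n)^+$. On $\mathbb{C}^{2^k}=\Delta^+\oplus\Delta^-$, the idempotent $\frac12(1\pm f)$ projects onto $\Delta^\pm$. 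Each $\Delta^\pm$ therefore carries a module over $End(\mathbb{C}^{2^{k-1}})$ of dimension $2^{k-1}$. Since a full matrix algebra has, up to isomorphism, only one nonzero module of that dimension, namely the standard one, which is irreducible, the action of $C(n)^+$ on $\Delta^\pm$ is irreducible. One point must be checked here: each summand acts nontrivially. This holds because $\kappa$ is faithful on $C(n)$ and hence on $C(n)^+$, so neither summand can act by zero on its corresponding $\Delta^\pm$.

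An alternative to the matrix-algebra argument is a direct weight argument. Pair the basis as $\tilde e_{2j-1},\tilde e_{2j}$ and let $H_j=i\,\tilde e_{2j-1}\tilde e_{2j}$. These commute, square to $1$, and satisfy $f=\pm\prod H_j$. Their joint eigenlines in $\Delta^\pm$ are all distinct, and the operators $(\tilde e_{2a-1}\pm i\tilde e_{2a})(\tilde e_{2b-1}\pm i\tilde e_{2b})$, which lie in $C(n)^+$, move between any two of these lines. An invariant subspace is stable under the $H_j$, so it contains some eigenline, and by these moves it then contains all of them.

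I expect the main obstacle to be the signature bookkeeping in the indefinite case. One must check that the generators $\tilde e_i\tilde e_j$ actually arise from elements of $Spin(Q)$: for timelike–spacelike pairs the exponential involves $\cosh$ and $\sinh$, and in the Lorentzian case only the identity component is reached. One must also check that the factors of $i$ in $f$ and in the isomorphism $C(n)^+\cong C(n-1)$ are consistent. I would handle both by working over $\mathbb{C}$ from the start, where every signature reduces to $C(n)$ and $\mathfrak{spin}(Q)\otimes\mathbb{C}$ is the same for all $(p,q)$.
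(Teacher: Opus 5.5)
Your proof is correct. The paper gives no argument of its own here. It only cites \cite{Friedrich}, and explicitly only for positive-definite signature.

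Your route is the standard algebraic one. The complex span of $\kappa(Spin(Q))$ is $\kappa(C(n)^+)$, and $C(n)^+\cong C(n-1)\cong End(\mathbb{C}^{2^{k-1}})\oplus End(\mathbb{C}^{2^{k-1}})$, with the two simple factors cut out by the central idempotents $\frac{1}{2}(1\pm f)$. Each $\Delta^\pm$ is therefore the unique simple module of a full matrix algebra.

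What this buys over the citation is independence of the signature, which enters only in Step 1. There the real one-parameter subgroups $\cos t+\sin t\,\tilde{e}_i\tilde{e}_j$ or $\cosh t+\sinh t\,\tilde{e}_i\tilde{e}_j$ already lie in the identity component. Their span is an algebra, because the group is closed under products, and it contains $1$ and every $\tilde{e}_i\tilde{e}_j$, hence all of $C(Q)^+$. So you get irreducibility even under the identity component of $Spin(3,1)$, which is the case the thesis actually needs.

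Your weight argument, using $H_j=i\tilde{e}_{2j-1}\tilde{e}_{2j}$ and the two-index ladder operators, is an equally valid and more hands-on substitute for the matrix-algebra step. It needs one extra remark: each joint eigenspace of the $H_j$ is a line. This holds because $\tilde{e}_{2a-1}$ anticommutes with $H_a$, commutes with the other $H_b$, and squares to $1$. Hence all $2^k$ joint eigenspaces in $\mathbb{C}^{2^k}$ have the same dimension.

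Three small points of presentation.

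In Step 1 you must keep the real orthonormal basis with $\tilde{e}_i^2=\pm1$. Rescaling timelike vectors by $i$ so that $\tilde{e}_i^2=1$ would produce exponentials that are not in the real group. That normalisation is legitimate only once you have passed to the complex span, i.e.\ in Step 2.

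The universal property gives the map in the direction $C(n-1)\to C(n)^+$, $\tilde{e}_i\mapsto i\tilde{e}_i\tilde{e}_n$ (with all $\tilde{e}_i^2=1$). It is onto because $(i\tilde{e}_i\tilde{e}_n)(i\tilde{e}_j\tilde{e}_n)=\tilde{e}_i\tilde{e}_j$. It is bijective by counting dimensions; injectivity is not automatic, since $C(n-1)$ is not simple.

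Finally, the faithfulness check is redundant. $\frac{1}{2}(1\pm f)$ is the unit of its summand and acts as the identity on the nonzero space $\Delta^\pm$.
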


\noindent
This means that every element of the $Spin(Q)$-representation $\mathbb{C}^{2^k}$ can be written as combination of two Weyl spinors, on which the Spin Group acts irreducibly. 
\newline
Therefore, a Weyl spinor cannot be decomposed any further. 
\newline
\newline
It is worth to mention that a similar theorem exists also in odd dimensions, but without the decomposition in Weyl subspaces, since they exist only in even dimensions:

\begin{theorem*}
    Let $n = 2k + 1$. Then, the $Spin(Q)$-representation $\mathbb{C}^{2^k}$ is irreducible.
\end{theorem*}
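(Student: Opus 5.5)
The plan is to argue at the level of the Lie algebra and then pass to the group. Since $Spin(Q)$ acts through $\kappa$, any $Spin(Q)$-invariant subspace $W \subset \mathbb{C}^{2^k}$ is also invariant under the differentiated action of $\mathfrak{spin}(Q)$. We saw in Section \ref{Spin_Group_Sec} that this algebra is spanned by the elements $e_ie_j$ with $i<j$. Hence $W$ is invariant under the operators $\kappa(e_ie_j)$. The goal is to show that the associative algebra generated by these operators is all of $\kappa(C(n)^+)$, and that $\kappa(C(n)^+)$ already acts irreducibly on $\mathbb{C}^{2^k}$ when $n=2k+1$.

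The first step is the generation claim. The products $e_ie_j$ generate $C(n)^+$ as an associative algebra, since every even monomial $e_{i_1}\cdots e_{i_{2m}}$ is a product of pairs. Because we work over $\mathbb{C}$, the signature is irrelevant: we can rescale the $e_i$ by factors of $i$ to reduce to $C(n)$ with positive definite form, while spanning the same complex subalgebra. So $W$ is invariant under $\kappa(C(n)^+)$.

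The second step identifies the even subalgebra. The map $C(n-1)\to C(n)^+$ sending $e_i \mapsto e_ie_n$ for $i\le n-1$ respects the Clifford relations (up to sign/normalization, fixed by multiplying by $i$). By the universal property stated earlier, it extends to a homomorphism. Comparing dimensions, both sides have dimension $2^{n-1}$; injectivity follows either because $C(2k)$ is simple (being $End(\mathbb{C}^{2^k})$ by \eqref{Isomorphisms}) or because basis monomials map to distinct monomials. So $C(2k+1)^+\cong C(2k)\cong End(\mathbb{C}^{2^k})$.

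The main obstacle is checking that the representation $\kappa$ restricted to $C(n)^+$ is exactly this full matrix algebra acting on $\mathbb{C}^{2^k}$. In odd dimension, $\kappa$ is defined by projecting $C(n)\cong End(\mathbb{C}^{2^k})\oplus End(\mathbb{C}^{2^k})$ onto one summand. I would handle this by observing that the volume element $\omega=e_1\cdots e_n$ is central and odd, and that its eigenvalues $\pm1$ (after normalizing by a power of $i$) split the two summands. Multiplication by $\omega$ then maps $C(n)^+$ isomorphically onto $C(n)^-$. As a result, each summand projection restricted to $C(n)^+$ is injective, because an even element killed by one projection, multiplied by $\omega$, would lie in a proper ideal, contradicting simplicity of $C(2k)$. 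Counting dimensions, $\kappa(C(n)^+)$ is then all of $End(\mathbb{C}^{2^k})$.

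Since the full endomorphism algebra has no nontrivial invariant subspaces, $W$ is $0$ or everything, which proves irreducibility. If a continuity or connectedness issue arises because $Spin(Q)$ is not connected, it is harmless: invariance under the identity component already suffices, and the identity component contains $\exp(t\,e_ie_j)$.
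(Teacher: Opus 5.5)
Your argument is correct. The paper gives no proof of this statement: it is quoted directly after the even-dimensional theorem, and even there the paper only cites \cite{Friedrich}, and only for positive-definite signature.

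Your route is the standard Clifford-algebraic proof, and it makes the statement self-contained:
\begin{enumerate}
\item Pass from $Spin(Q)$-invariance to invariance under the complex span of the even monomials.
\item Identify $C(2k+1)^+\cong C(2k)\cong End(\mathbb{C}^{2^k})$ via $e_i\mapsto i\,e_ie_n$.
\item Conclude that $\kappa(C(n)^+)$ is the full matrix algebra.
\end{enumerate}
Because you complexify at the outset, the argument covers every signature $(p,q)$ and the case $\mathbb{K}=\mathbb{C}$ at once, which the paper's cited source does not claim to do.

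Two steps can be tightened.

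First, the Lie-algebra detour and the connectedness remark are unnecessary. For an orthonormal basis, each even monomial $x=e_{i_1}\cdots e_{i_{2m}}$ already satisfies $xx^*=\pm1$ and $xVx^{-1}\subset V$, since conjugation by each $e_i$ preserves $V$. So these monomials lie in $Spin(Q)$ itself, and any invariant subspace is directly a $C(n)^+$-submodule, with no differentiation needed.

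Second, your justification that $\kappa|_{C(n)^+}$ is injective (``multiplied by $\omega$, would lie in a proper ideal'') is muddled. Either of these clean versions works:
\begin{itemize}
\item $\kappa$ is unital, so its kernel in $C(n)^+$ is a proper two-sided ideal of the simple algebra $C(n)^+\cong C(2k)$, and hence is zero.
\item An even $x$ killed by one of the two central projections satisfies $x=c\,x\omega$ for some nonzero scalar $c$. This equates an even element with an odd one, so $x=0$.
\end{itemize}
Either way the volume element is not really needed, and the dimension count $2^{n-1}=2^{2k}$ then gives $\kappa(C(n)^+)=End(\mathbb{C}^{2^k})$ as you claim.
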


\noindent
For $n = 2k$, the decomposition in Weyl subspaces translates into a decomposition of subbundles; in the sense that the vector bundle $S$ splits into two subbundles $S^{\pm} := P \times_{\kappa_\pm} \Delta^\pm$.
\newline
This last piece of information leads to another important fact. 
\newline
In the previous section, we defined a connection $\kappa_*(\tilde{Z})$ on $S$. Since every spinor can be written as the direct sum of two Weyl spinors, whose subspaces are invariant under the action of the $Spin(Q)$ Group and under the action of the associated Lie Algebra, we see that the total connection can be split into two connections acting separately on the two subbundles. Hence, the covariant derivative splits into two covariant derivatives, each acting independently on only one of the two subbundles.
As a consequence, if there is no type of interaction between the two Weyl spinors in the system under study, they evolve freely and independently from each other and, in the action of a theory of this kind, we have to separate their free-evolution terms.
\newline
The Lagrangian has to be written, then, as an appropriate linear combination of suitable contractions composed by metric terms and a particular combination of only one Weyl spinorial field.
\newline
\newline
In the end, we point out that Weyl spinorial fields are very important in physics. For example, before the discovery of its mass, it was thought that the neutrino were a left Weyl spinorial field, due to its way of interacting with other matter components.

\section{The physical application of spinors}\label{PhysApplicSpin}
The mathematical construction shown so far is general. 
\newline
However, from the Theory of Relativity we know that we live in a four dimensional manifold equipped with a metric, whose signature matrix is $\eta = \text{diag}(-1, 1, 1, 1)$\footnote{As shown also in the previous chapter, in this work we decided to adopt the mostly-plus signature.}. The immediate implication of this fact is that the Clifford Algebra and the Spin Group we have to consider are respectively $C(3,1)$ and $Spin(3,1)$, whose space of spinors is $\mathbb{C}^4$. So, in our framework, spinors have 4 independent complex components. 
\newline
Moreover, by exploiting the fact that $C(3,1) \subset C(4) \cong M_2(\mathbb{C}) \otimes M_2(\mathbb{C})$, we can write a basis of $\mathbb{R}^4 \in C(3,1)$ in this representation starting from one of $M_2(\mathbb{C})$.
\newline
The most common choice for a basis of $M_2(\mathbb{C})$ concerns Pauli matrices and the Identity:

\begin{equation}
    I = 
    \begin{pmatrix}
        1 & 0\\
        0 & 1
    \end{pmatrix},
    \hspace{1cm}
    \sigma_x = 
    \begin{pmatrix}
        0 & 1\\
        1 & 0
    \end{pmatrix},
    \hspace{1cm}
    \sigma_y = 
    \begin{pmatrix}
        0 & -i\\
        i & 0
    \end{pmatrix},
    \hspace{1cm}
    \sigma_z = 
    \begin{pmatrix}
        1 & 0\\
        0 & -1
    \end{pmatrix}.   
    \label{Pauli_Matrices}
\end{equation}
From these we can define the sought basis as $\kappa_*(\tilde{e}_1) := i \gamma_0 := -i \sigma_x \otimes I$, $\kappa_*(\tilde{e}_2) := i \gamma_1 := -\sigma_y \otimes \sigma_x$, $\kappa_*(\tilde{e}_3) := i \gamma_2 := -\sigma_y \otimes \sigma_y$ and $\kappa_*(\tilde{e}_4) := i \gamma_3 := -\sigma_y \otimes \sigma_z$, where \{$\tilde{e}_1$, $\tilde{e}_2$, $\tilde{e}_3$, $\tilde{e}_4$\} is the corresponding basis of $\mathbb{R}^4 \in C(3,1)$ and  we have defined the gamma matrices

\begin{equation}
    \gamma_0 = 
    \begin{pmatrix}
        0 & -I\\
        -I & 0
    \end{pmatrix},
    \hspace{1cm}
    \gamma_1 = 
    \begin{pmatrix}
        0 & \sigma_x\\
        -\sigma_x & 0
    \end{pmatrix},
    \hspace{1cm}
    \gamma_2 = 
    \begin{pmatrix}
        0 & \sigma_y\\
        -\sigma_y & 0
    \end{pmatrix},
    \hspace{1cm}
    \gamma_3 = 
    \begin{pmatrix}
        0 & \sigma_z\\
        -\sigma_z & 0
    \end{pmatrix}.    
    \label{Gamma_Matrices}
\end{equation}
It can be proven that they satisfy the relations

\begin{equation}
    \gamma_I \gamma_J + \gamma_J \gamma_I = -2 \eta_{IJ}I.  
    \label{CliffMink}
\end{equation}
Hence, $\{i \gamma_1, i \gamma_2, i \gamma_3, i \gamma_4\}$ is the representation of the basis of $\mathbb{R}^4 \in C(3,1)$ we were looking for. 
\newline
The immediate consequence of this fact is that the local expression of the connection on the Spin Bundle becomes

\begin{equation}
    \kappa_*(\tilde{Z}_\sigma) = \frac{1}{8} \sum_{I,J}\omega^{IJ}_{\hphantom{IJ}\mu}[\kappa_*(\tilde{e}_I), \kappa_*(\tilde{e}_J)]dx^\mu = -\frac{1}{8} \sum_{I,J}\omega^{IJ}_{\hphantom{IJ}\mu}[\gamma_I, \gamma_J]dx^\mu =: -\frac{i}{2} \omega^{IJ}_{\hphantom{IJ}\mu}J_{IJ}dx^\mu,
    \label{Spin_Conn_Gamma}
\end{equation}
where we have defined the spin generators $J_{IJ} = -\frac{i}{4}[\gamma_I, \gamma_J]$ and we have adopted the Einstein convention on repeated indices.
\newline
\newline
Clearly, the expression chosen for the gamma matrices is not the unique possible.
\newline
The form given by eq. \ref{Gamma_Matrices} is called the Weyl form, since it allows to have clear distinction between the actions on the two Weyl subspaces. In fact, if we write in this basis the involution $f$ defined in the previous section, we see that it assumes the form

\begin{equation}
    \gamma_5 := i \gamma_0 \gamma_1 \gamma_2 \gamma_3 = 
    \begin{pmatrix}
        I & 0\\
        0 & -I
    \end{pmatrix},
\end{equation}
which shows explicitly what form left and right Weyl spinors assume.
\newline
P.A.M. Dirac, for example, chose another expression for the gamma matrices in its famous article \textit{The Quantum Theory of the Electron} \cite{Dirac}.
Indeed, he was the first to introduce gamma matrices and to show that, in 4-dimensions, spin-$\frac{1}{2}$ particles where well described by objects $\in \mathbb{C}^4$, named spinors after his work.
\newline
\newline
Another credit of the article by Dirac is the introduction of a particular operator, then called Dirac operator, that, mathematically, is the square root of the Laplacian. In curved space-time, this operator is formally defined, with respect to a local orthonormal frame $\tilde{e}_I$, as 

\begin{equation}
    (Di) \psi := \eta^{IJ}(\tilde{e}_I \cdot D_{\tilde{e}_J}) \psi = \eta^{IJ} \kappa_*(\tilde{e}_{I})D_{\tilde{e}_J} \psi.
    \label{Dirac_Op}
\end{equation}
In our framework, since $D_{\tilde{e}_J} = e^{\hphantom{J} \mu}_{J}D_\mu$, this operator assumes the form

\begin{equation}
    (Di) \psi := i\eta^{IJ} \gamma_I e^{\hphantom{J} \mu}_{J} D_\mu \psi = i \gamma^J e^{\hphantom{J} \mu}_{J}D_\mu \psi =: i \Gamma^\mu D_\mu \psi,
    \label{Dirac_Op_2}
\end{equation}
where we have defined the curved gamma matrices $\Gamma_\mu = e^J_{\hphantom{J}\mu} \gamma_J$, which satisfy

\begin{equation}
    \Gamma_\mu \Gamma_\nu + \Gamma_\nu \Gamma_\mu = -2 g_{\mu \nu}I.  
    \label{CliffOldMetr}
\end{equation}
It is easy to see that, in the case of a flat manifold ($e^J_{\hphantom{J}\mu} = \delta^J_{\hphantom{J}\mu}$), the Dirac operator reduces to $(Di)\psi = i \gamma^\mu \partial_\mu \psi = i \slashed{\partial}\psi$, that is the expression appearing in the usual Dirac equation.
\newline
\newline
Before proceeding to the next chapter, we mention an important property of the curved gammas. 
\newline
We have defined these matrices as contractions of the tetrad components with the gamma matrices. Since these lasts belong to the space of endomorphisms on $\mathbb{C}^4$, formally, the one form $\Gamma := e^J \gamma_J = \Gamma_\mu dx^\mu \in T_x^*M \bigotimes {\mathbb{C}^4}^* \bigotimes \mathbb{C}^4$, where ${\mathbb{C}^4}^*$ is the dual space of $\mathbb{C}^4$, composed of dual spinors\footnote{This definition is due to the fact that, as can be easily proven, $\bar{\psi}\psi$ is a scalar under $Spin(3,1)$ transformations.} $\bar{\psi} := \psi^\dagger \gamma^0$.
\newline
Therefore, we obtain

\begin{equation}
    \begin{split}
        D_\nu \Gamma_\mu &= \nabla_\nu (e^L_{\hphantom{L}\mu}) \gamma_L -\frac{i}{2}\omega^{IJ}_{\hphantom{IJ}\nu}J_{IJ}e^L_{\hphantom{L}\mu} \gamma_L + \frac{i}{2}e^L_{\hphantom{L}\mu} \gamma_L\omega^{IJ}_{\hphantom{IJ}\nu}J_{IJ} = \\
        & = \partial_\nu e^L_{\hphantom{L}\mu} \gamma_L - \Gamma^\rho_{\hphantom{\rho} \nu \mu} e^L_{\hphantom{L}\rho}\gamma_L -\frac{i}{2}\omega^{IJ}_{\hphantom{IJ}\nu}J_{IJ}e^L_{\hphantom{L}\mu} \gamma_L + \frac{i}{2}e^L_{\hphantom{L}\mu} \gamma_L\omega^{IJ}_{\hphantom{IJ}\nu}J_{IJ} = \\
        & = -\omega^{LI}_{\hphantom{LI} \nu} e_{I \mu} \gamma_L -\frac{i}{2}\omega^{IJ}_{\hphantom{IJ}\nu}J_{IJ}e^L_{\hphantom{L}\mu} \gamma_L + \frac{i}{2}e^L_{\hphantom{L}\mu} \gamma_L\omega^{IJ}_{\hphantom{IJ}\nu}J_{IJ} = \\
        & = -\omega^{LI}_{\hphantom{LI} \nu} e^J_{\hphantom{J} \mu} \eta_{IJ} \gamma_L - \frac{1}{2} \omega^{IJ}_{\hphantom{IJ}\nu}e^L_{\hphantom{L}\mu}(\eta_{IL}\gamma_J - \eta_{LJ}\gamma_I) =\\
        & = -\omega^{LI}_{\hphantom{LI} \nu} e^J_{\hphantom{J} \mu} \eta_{IJ} \gamma_L - \frac{1}{2} \omega^{IL}_{\hphantom{IL}\nu}e^J_{\hphantom{J}\mu}\eta_{IJ}\gamma_L + \frac{1}{2}\omega^{LI}_{\hphantom{LI}\nu}e^J_{\hphantom{J}\mu}\eta_{JI}\gamma_L = \\
        & =  -\omega^{LI}_{\hphantom{LI} \nu} e^J_{\hphantom{J} \mu} \eta_{IJ} \gamma_L + \frac{1}{2} \omega^{LI}_{\hphantom{LI}\nu}e^J_{\hphantom{J}\mu}\eta_{IJ}\gamma_L + \frac{1}{2}\omega^{LI}_{\hphantom{LI}\nu}e^J_{\hphantom{J}\mu}\eta_{IJ}\gamma_L = 0,
    \end{split}
    \label{Cov_Der_Curved_Gamma}
\end{equation}
where we have used eq. \ref{Christ_Symb_Metr_Conn} and the fact that $\gamma_L J_{IJ} = J_{IJ}  \gamma_L + i \eta_{IL}\gamma_J - i \eta_{LJ} \gamma_I$.
\newline
This fact has the direct consequence that $D_\mu \Gamma^\mu = D_\mu( g^{\mu \nu} \Gamma_\nu) = g^{\mu \nu} D_\mu \Gamma_\nu + \nabla_\mu(g^{\mu \nu}) \Gamma_\nu = 0$, since the connection is a metric one. This will be important in the next chapter.

\chapter{The background Spinor Model}\label{Spinor_Model}

Among all models cited in the Introduction, we focused our attention on the one presented by João Magueijo, T.G. Zlosnik and T.W.B. Kibble in the article \cite{Magueijo}.
\newline
The aim of this chapter is to introduce their model, focusing on what will be useful in the next chapters for dealing with cosmological perturbations and with spherically symmetric objects.

\section{The Lagrangian of the Model}

In this section, we follow the derivation presented in the article cited above. There, the authors emphasize that the difficulties of gravity in partaking in quantum field theory lead to consider General Relativity as the symmetry broken phase of gauge theories of groups such as the Poincaré Group or the de Sitter/anti-de Sitter Groups. This fact has the immediate effect that the connection defined on the Spin Bundle naturally takes the form shown in eq. \ref{Spin_Conn_Gamma}.
We see, then, that the choice previously made for mere simplicity regarding this connection is forced by basic principles. 
Moreover, this symmetry breaking leads also to consider a non-zero torsion in the formulation of a theory that concerns spinorial fields.
\newline
Therefore, the theory we are going to formulate must rely on a Langrangian function that, in the gravitational part, contains also terms related to torsion and, in the matter part, contains spinorial fields living on a Spin Bundle endowed with the connection defined in eq. \ref{Spin_Conn_Gamma}.
\newline
\newline
Since we are dealing with spinorial fields, it is more convenient to write the Lagrangian exploiting the one-form formalism, i.e., by making use of tetrads, connections and curvature forms defined on the $SO(3,1)$-bundle. These quantities, by construction, are independent of the chart chosen and dependent only on the reference frame chosen. Hence, the general covariance is guaranteed if we construct terms that are Lorentz-invariant, since the transformations from one reference frame to another belong to the $SO(3,1)$ Group, that is exactly the Special Lorentz Group. Moreover, for simplicity, we restrict our choice to Lagrangians that are polynomials in the fields and their derivatives.
\newline
\newline
We begin by writing the gravitational part of the total action. Since we are dealing with a non-zero torsion, we can take into account other terms besides the usual Palatini term, which we have seen to be the unique possible choice in the absence of torsion. It turns out that there is a unique term that could be considered\footnote{We could have taken into account also a Cosmological Constant term, but we have decided to neglect it for two reasons. First, we would like to see if the spinorial field we are going to discuss can describe also DE. Second, in the case the first purpose is not satisfied, neglecting $\Lambda$ simplifies considerably our calculations. We do not exclude that this contribution can be considered in future works.}, called the Holst term \cite{Holst}. 
\newline
Therefore, given $\Omega_{IJ} := d\omega_{IJ} + \omega_{IL} \wedge \omega^L_{\hphantom{L}J}$ the components of the known curvature form on the $SO(3,1)$-principal bundle, the gravitational part of the action is

\begin{equation}
    \begin{split}
    S_G :=& -\frac{1}{32 \pi G}\int \Bigg( \epsilon_{IJKL} + \frac{2}{\gamma}\eta_{IK}\eta_{JL} \Bigg) e^I \wedge e^J \wedge \Omega^{KL} =\\
    =&-\frac{1}{16 \pi G}\int d^4x \hspace{1mm} \sqrt{-g} \hspace{1mm} \Bigg( R + \frac{1}{2\gamma}\varepsilon^{\alpha \beta \mu \nu} R_{\alpha \beta \mu \nu}\Bigg),
    \end{split}
    \label{Grav_Action_Model}
\end{equation}
where $\varepsilon^{\alpha \beta \mu \nu} = \frac{1}{\sqrt{-g}}\tilde{\epsilon}^{\alpha \beta \mu \nu} = - \frac{1}{\sqrt{-g}}\epsilon^{\alpha \beta \mu \nu}$ is the curved Levi-Civita symbol, $\gamma$ is called the Immirzi parameter and $R_{\alpha \beta \mu \nu} := g_{\alpha \sigma}R^\sigma_{\hphantom{\sigma} \beta \mu \nu} = g_{\alpha \sigma} e^\sigma_{\hphantom{\sigma}I} e^J_{\hphantom{J}\beta}e^K_{\hphantom{K}\mu}e^L_{\hphantom{L}\nu}\Omega^I_{\hphantom{I}JKL}$ are the components of the Riemann tensor defined on the Tangent Bundle.
\newline
It can be proven that all the other possible terms that are polynomials in $\Omega^{KL}$ and $e^I$ are boundary terms, which do not contribute to the equations of motion due to Hamilton Principle.
\newline
\newline
In order to write the matter action, instead, we are going to exploit the Weyl spinorial fields defined in the previous chapter. In fact, since our purpose is to write a Lorentz-invariant action, the more convenient way is to write it using the fundamental parts that transform under this group. Therefore, since the Lorentz Group is linked by the homomorphism $\rho$ to the Spin Group, it is natural to write such a matter action using the irreducible representations of this latter group. These are Weyl spinors. 
\newline
\newline
As anticipated in the previous chapter, if we consider only the kinetic part, the possible combinations of Weyl terms must be separated from each other. Hence, the purpose is to build an action composed of terms made of a single Weyl spinorial field each, that transform as scalars and vectors\footnote{For simplicity, we do not consider tensorial terms.} under Lorentz transformations. It is easy to see that scalar terms composed only by a single Weyl spinorial field cannot be constructed. Therefore, we consider only vectorial ones. From the transformation properties of the gamma matrices in the chosen representation, it can be derived that the combinations of Weyl spinorial fields

\begin{equation}
    \begin{split}
    &K^{I(l)} := \phi^\dagger\bar{\sigma}^I D^{(l)}\phi,\\
    &K^{I(r)} := \chi^\dagger \sigma^I D^{(r)}\chi,
    \end{split}
    \label{Weyl_Terms}
\end{equation}
(where $\chi$ and $\phi$ are respectively the right and left Weyl spinorial fields, $D^{(l),(r)}$ are the covariant derivatives defined on the two Weyl Spin Subbundles, while $\sigma^I := (I, \sigma^i)$ and $\bar{\sigma}^I := (I, -\sigma^i)$), transform as vectors under $SO(3,1)$ transformations.
\newline 
Hence, we can write the matter action using these terms, contracting them properly with the tetrad components. 
\newline
Furthermore, we require that the Lagrangian we are going to build reduces, in flat space, to the Dirac one, as expected from QFT in a Minkowski space-time. Therefore, we take into account only contributions that are linear in $K^{(l),(r)}$.
\newline
Since $K^{(l),(r)}$ are complex and we are looking for a real action, the most general and simple action we can write satisfying all our requirements is $S_{M_{free}} = S_{(l)} + S_{(r)}$, where

\begin{equation}
    \begin{split}
    &S_{(l)} := \int e_{IJKL} e^I \wedge e^J \wedge e^K \wedge \bigg(a_{(l)}\big(K^{L(l)} + K^{L(l)^*} \big) + ib_{(l)} \big(K^{L(l)} - K^{L(l)^*} \big) \bigg), \\
    &S_{(r)} := \int e_{IJKL} e^I \wedge e^J \wedge e^K \wedge \bigg(a_{(r)}\big(K^{L(r)} + K^{L(r)^*} \big) + ib_{(r)} \big(K^{L(r)} - K^{L(r)^*} \big) \bigg),
    \end{split}
    \label{Weyls_Actions}
\end{equation}
with $a_{(l)}$, $a_{(r)}$, $b_{(l)}$ and $b_{(r)}$ real constants. 
\newline
\newline
From the properties of $\bar{\sigma}^I$ and $\sigma^I$ under $Spin(3,1)$ transformations we can obtain their own properties under $\mathfrak{spin}(3,1)$ transformations. These give 

\begin{equation}
    \begin{split}
    &K^{I(l)} +  K^{I(l)^*} = D(\phi^\dagger\bar{\sigma}^I\phi),\\
    &K^{I(r)} + K^{I(r)^*} = D(\chi^\dagger \sigma^I\chi),
    \end{split}
    \label{Weyl_Terms_Complex_Conj}
\end{equation}
where the covariant derivative is that of an element in the $SO(3,1)$-bundle.
\newline
Furthermore, since $\epsilon_{IJKL} \Lambda^I_{\hphantom{I}M} \Lambda^J_{\hphantom{J}N} \Lambda^K_{\hphantom{K}O} \Lambda^L_{\hphantom{L}P} = \epsilon_{MNOP}$ for every $SO(3,1)$ transformation, it is easy to see that, for every $\mathfrak{so}(3,1)$ transformation, 

\begin{equation}
    \begin{split}
        \epsilon_{MJKL}\omega^M_{\hphantom{M}I} + \epsilon_{IMKL}\omega^M_{\hphantom{M}J} + &\epsilon_{IJML}\omega^M_{\hphantom{M}K} + \epsilon_{IJKM}\omega^M_{\hphantom{M}L} = 0\\
        & \Big \Downarrow\\
        D(e_{IJKL}) := de_{IJKL} - \epsilon_{MJKL}\omega^M_{\hphantom{M}I} - &\epsilon_{IMKL}\omega^M_{\hphantom{M}J} - \epsilon_{IJML}\omega^M_{\hphantom{M}K} - \epsilon_{IJKM}\omega^M_{\hphantom{M}L} = 0.
    \end{split}
    \label{Levi-Civita_Cov_Der}
\end{equation}
Therefore, considering that $De^I := de^I + \omega^I_{\hphantom{I}J} \wedge e^J = T^I$, we obtain that

\begin{equation}
    \begin{split}
    &\int e_{IJKL} e^I \wedge e^J \wedge e^K \wedge \bigg(a_{(l)}\big(K^{L(l)} + K^{L(l)^*} \big) + a_{(r)}\big(K^{L(r)} + K^{L(r)^*} \big) \bigg) = \\
    =& \int e_{IJKL} e^I \wedge e^J \wedge e^K \wedge D \big(a_{(l)}\phi^\dagger \bar{\sigma}^L \phi + a_{(r)}\chi^\dagger \sigma^L \chi \big) = \\
    & 3\int e_{IJKL} T^I \wedge e^J \wedge e^K   \big(a_{(l)}\phi^\dagger \bar{\sigma}^L \phi + a_{(r)}\chi^\dagger \sigma^L \chi \big) + \text{boundary term},
    \end{split}
    \label{Torsion_Terms}
\end{equation}
where the boundary term is the total differential of the integrand, which, thanks to Hamilton Principle, gives a null contribution to the equations of motion. So, we can neglect it.
\newline
\newline
Now, the last piece we can add to the total Lagrangian is a potential term given by combinations of the two Weyl spinorial fields. Given $U(\phi, \chi)$ a scalar function, we define

\begin{equation}
    S_U = \frac{1}{4!} \int U(\phi, \chi) e_{IJKL}e^I \wedge e^J \wedge e^K \wedge e^L.
    \label{Potential_action}
\end{equation}
Therefore, the total matter action becomes $S_M = S_{(l)} + S_{(r)} + S_U$.
\newline
However, the action presented has a strong dependence on the representation chosen for the gamma matrices. Hence, we would like to rewrite it in a general form, using Dirac spinorial fields and the general expression for gamma matrices. This can be done by redefining the $\chi$ field in a suitable manner and by exploiting the properties of Dirac spinors.
\newline
In particular, if we define $\chi' := \sqrt{\frac{b_{(r)}}{b_{(l)}}}\chi$ and $\psi = (\phi, \chi')$, it is easy to see that, in the chosen representation, $\bar{\psi}\gamma^ID\psi = \phi^\dagger \bar{\sigma}^ID^{(l)}\phi + \chi'^\dagger\sigma^ID^{(r)}\chi'$. Hence, the matter action takes the form 

\begin{equation}
    \begin{split}
    S_M =& -\frac{1}{12} \int e_{IJKL}e^I \wedge e^J \wedge e^K \wedge  \bigg(i \big( \bar{\psi} \gamma^L D \psi - (D\bar{\psi)}\gamma^L \psi\big) \bigg) +\\
    &-\frac{1}{4}\int e_{IJKL}T^I \wedge e^J \wedge e^K \wedge  \big(\alpha V^L + \beta A^L \big) + \frac{1}{4!} \int U(\psi) e_{IJKL}e^I \wedge e^J \wedge e^K \wedge e^L,
    \end{split}
    \label{Matter_Action}
\end{equation}
where we have assumed that $U$ is a scalar function of $\psi$ and we have defined 

\begin{equation}
    \begin{split}
    &V^L:= \bar{\psi}\gamma^L\psi = \phi^\dagger \bar{\sigma}^L \phi + \chi'^\dagger \sigma^L \chi', \hspace{1cm} A^L:= \bar{\psi}\gamma^5\gamma^L\psi = \phi^\dagger \bar{\sigma}^L \phi - \chi'^\dagger \sigma^L \chi',\\
    &b_{(l)} := -\frac{1}{12}, \hspace{1cm} b_{(l)}\alpha := \frac{1}{2} \bigg(a_{(l)} + a_{(r)}\frac{b_{(l)}}{b_{(r)}} \bigg) \hspace{0.5cm} \text{and} \hspace{0.5cm} b_{(l)}\beta := \frac{1}{2} \bigg(a_{(l)} - a_{(r)}\frac{b_{(l)}}{b_{(r)}} \bigg). 
    \end{split}
    \label{Definitions}
\end{equation}
We see that, with the value chosen for $b_{(l)}$, in flat space-time (i.e., in Special Relativity), we obtain the usual Dirac Action for a particle subjected to the potential $U(\psi)$. In fact,

\begin{equation}
    \begin{split}
    S_M =& -\frac{i}{2} \int \big( \bar{\psi} \Gamma^\mu D_\mu \psi - (D_\mu\bar{\psi)}\Gamma^\mu \psi\big) \sqrt{-g} \hspace{1mm} d^4x
    -\frac{1}{2}\int T^\mu_{\hphantom{\mu}\mu \nu}  \big(\alpha V^\nu + \beta A^\nu \big) \sqrt{-g} \hspace{1mm} d^4x +\\
    & + \int U(\psi) \sqrt{-g} \hspace{1mm} d^4x
    \end{split}
    \label{Matter_Action_Coord}
\end{equation}
(where $V^\nu = \bar{\psi}\Gamma^\nu\psi$, $A^\nu = \bar{\psi}\gamma^5\Gamma^\nu\psi$ and $T^\mu_{\hphantom{\mu} \rho \nu} : = e^\mu_{\hphantom{\mu}I}T^I_{\hphantom{I} \rho \nu}$ are the components of the Torsion tensor defined on the Tangent Bundle), which, considering the case of zero torsion and $g_{\mu \nu} = \eta_{\mu \nu}$, and remembering that in this case $\bar{\psi}\gamma^\mu\psi$ is conserved, reduces to the Dirac action, as requested.
\newline
\newline
In conclusion, we state that the total action of this theory is the sum of the two actions shown in equations \ref{Grav_Action_Model} and \ref{Matter_Action}. However, if one wants a spinorial field model which takes into account also a weak interaction, it is possible to add to this total action a term $S_{int}$ which includes the appropriate gauge-fields suitably combined with the spinorial field. 
\newline
Moreover, we could have added other terms coupling spinor scalars or generic tensors with the metric or the torsion, like $\bar{\psi}\psi T^IT_I$, but, for simplicity, we have not considered them in this model.
\newline
\newline
Before deriving the equations of motion from this action, it is necessary to do a clarification.
\newline
The reader might naturally ask why we did not write our Lagrangian directly in terms of Dirac spinorial fields, since the Dirac Lagrangian naturally separates the two evolution terms for Weyl spinorial fields, as requested before. 
\newline
If we had followed such a procedure, surely we would have lost some information, such as the term proportional to the torsion, which should have been added by hand, without any justification. Instead, formulating the Lagrangian through the irreducible representations of $Spin(3,1)$ ensures that this term arises naturally and that no other information is lost.

\section{The equations of motion}

As usual, we can derive the equations of motion of the dynamical variables by varying the total action with respect to them and assuming its stationarity. For convenience, we work with eq. \ref{Matter_Action}, where $S_M$ is expressed with respect to the orthonormal reference frame.
\newline
\newline
The first equation of motion that we derive is that of the connection $\omega^{KL}$:

\begin{equation}
    \begin{split}
    0 =  \frac{\delta S}{\delta \omega^{KL}} =& \frac{1}{16 \pi G} \bigg( \epsilon_{IJKL} + \frac{2}{\gamma} \eta_{I[K}\eta_{L]J} \bigg)T^I \wedge e^J - \frac{1}{4!}\epsilon_{IMNP} e^I \wedge e^M \wedge e^N \tilde{\epsilon}^{DP}_{\hphantom{DP}KL}A_D + \\
    & + \frac{1}{4} \epsilon_{[K|MNQ} e^M \wedge e^N \wedge e_{|L]}\big( \alpha V^Q + \beta A^Q \big).
    \end{split}
    \label{Eq.Connection}
\end{equation}
As we have shown in the previous chapter, a generic metric connection $\omega^{IJ}$ can be written as the sum of the Levi-Civita connection $\tilde{\omega}^{IJ}$ and a Contorsion Form $C^{IJ}$ such that $C^I_{\hphantom{I}J} \wedge e^J = T^I$. From this one-form we can define what is known as Contorsion Scalar, that is, $C^I_{\hphantom{I}JK} e^K := C^I_{\hphantom{I}J}$.
\newline
We notice that, by solving eq. \ref{Eq.Connection}, we can express it in terms of the spinorial field

\begin{equation}
    \begin{split}
    C_{TLK} = & \frac{4\pi G\gamma^2}{\gamma^2 + 1} \bigg[ \tilde{\epsilon}^D_{\hphantom{D}TLK} \frac{1}{2} \bigg( A_D + \frac{1}{\gamma}\big(\alpha V_D + \beta A_D \big) \bigg) +\\ 
    &- \frac{1}{\gamma} A_{[L}\eta_{T]K} + \alpha V_{[L}\eta_{T]K} + \beta A_{[L}\eta_{T]K} \bigg].
    \end{split}
    \label{Contorsion}
\end{equation}
This last equation is the keystone of the model. In fact, it allows to express the part of the connection related to the torsion in terms of the spinorial bilinears. In this way, the whole model can be recast as a zero-torsion-model with the addition of an effective potential, as we will see.
\newline
\newline
The second equation of motion we can derive is that of the tetrad $e^I$:

\begin{equation}
    \begin{split}
    0 =  \frac{\delta S}{\delta e^{I}} =& -\frac{1}{16 \pi G} \epsilon_{JIKL} e^J \wedge \tilde{R}^{KL} + \frac{i}{4} \epsilon_{IJKL} e^J \wedge e^K \wedge \big( \bar{\psi} \gamma^L \tilde{D}\psi - (\tilde{D}\bar{\psi})\gamma^L \psi \big) +\\
    & + \frac{U}{6} \epsilon_{MJIL} e^J \wedge e^M \wedge e^L - \frac{1}{16 \pi G} \bigg( \epsilon_{JIKL} + \frac{2}{\gamma} \eta_{JK}\eta_{IL} \bigg) C^K_{\hphantom{K}MP}C^{ML}_{\hphantom{ML}Q} e^J \wedge e^P \wedge e^Q + \\
    & - \frac{1}{2}\epsilon_{IKNL} \bigg( \frac{1}{4} \tilde{\epsilon}^{EL}_{\hphantom{EL}AB}A_E + \frac{1}{2}\delta^L_{\hphantom{L}A} \big( \alpha V_B + \beta A_B \big)\bigg) C^{AB}_{\hphantom{AB}P} e^N \wedge e^K \wedge e^P,
    \end{split}
    \label{Eq.Tetrad}
\end{equation}
where the "\textasciitilde"  above some terms means that they have been computed with the Levi-Civita connection (that is, the Contorsion has been factored out).
\newline
\newline
In the end, the last equation of motion we can derive is that of the spinorial field $\psi$:

\begin{equation}
    \begin{split}
    0 = \frac{\delta S}{\delta \bar{\psi}} =& -4i \epsilon_{IJKL} e^J \wedge e^K \wedge e^I \wedge \gamma^L D \psi - 6 \alpha \epsilon_{IJKL}e^J \wedge e^K \wedge T^I \gamma^L \psi +\\
    &- 6(1 + \beta) \epsilon_{IJKL} e^J \wedge e^K \wedge T^I \gamma^5  \gamma^L \psi + \frac{\delta U}{\delta \bar{\psi}}e_{IJKL}e^I \wedge e^J \wedge e^K \wedge e^L.
    \end{split}
    \label{Eq.Dirac_Field}
\end{equation}
\newline
\newline
The last two equations can be rewritten in standard tensor notation (more suitable for doing explicit calculations). While the second equation can be easily reformulated in tensor notation, obtaining

\begin{equation}
    i\Gamma^\mu \tilde{D}_\mu \psi = \frac{\delta W}{\delta \bar{\psi}},
    \label{Dirac_Equation_Model}
\end{equation}
the first requires some manipulation. In fact, in order to obtain an appropriate equation, we have to symmetrize the tensor equation that follows directly from eq. \ref{Eq.Tetrad}. This leads to

\begin{equation}
    \begin{split}
        \frac{1}{8 \pi G}\tilde{G}_{\mu \nu} = -\frac{i}{4} \big( \bar{\psi} \Gamma_{(\mu} \tilde{D}_{\nu)} \psi - \tilde{D}_{(\nu} \bar{\psi}\Gamma_{\mu)}\psi \big) +& \frac{i}{2} \big( \bar{\psi} \Gamma^{\lambda} \tilde{D}_{\lambda} \psi - \tilde{D}_{\lambda} \bar{\psi}\Gamma^{\lambda}\psi \big)g_{\mu \nu} - W g_{\mu \nu},\\
        \end{split}
    \label{Einstein_Equations_Model}
\end{equation}
where $W$ is defined as

\begin{equation}
    W := U + \frac{3 \pi G \gamma^2}{2(1 + \gamma^2)} \bigg[ \bigg( 1 - \beta^2 + \frac{2}{\gamma}\beta \bigg) A_I A^I - \alpha^2 V_I V^I - 2 \alpha \bigg( \beta - \frac{1}{\gamma} \bigg)A_I V^I \bigg].
    \label{Effective_Potential}
\end{equation}
This equation can be further simplified by using the Dirac equation \ref{Dirac_Equation_Model}, which, assuming that $U$ is a function only of $\bar{\psi}\psi$, leads to

\begin{equation}
    \frac{1}{8 \pi G}\tilde{G}_{\mu \nu} = -\frac{i}{4} \big( \bar{\psi} \Gamma_{(\mu} \tilde{D}_{\nu)} \psi - \tilde{D}_{(\nu} \bar{\psi}\Gamma_{\mu)}\psi \big) + \bigg( W + \frac{\partial U}{\partial (\bar{\psi}\psi)}\bar{\psi}\psi -2U \bigg) g_{\mu \nu}.
    \label{Ein_Equations_Model_Coord}
\end{equation}
Before proceeding with the analysis of the model, we would like to highlight two important results already obtained by it.
\newline
The first is that eq. \ref{Dirac_Equation_Model} is precisely the Dirac equation in curved space-time. It is easy to see that, in a flat manifold, if torsion is neglected and $U = m \bar{\psi}\psi$, it reduces to the usual Dirac equation.
\newline
The second is that, as anticipated before, the model has been recast as an effective model without torsion and with the addition of the effective potential $W$. This has been made possible by eq. \ref{Contorsion}, which states that torsion effects on the connection are related to the axial and vector current densities of the spinorial field. The influence of torsion on the equations of motion for the metric, then, can be reformulated in terms of their mutual and self-interaction. Therefore, if the spinorial field did not interact with itself, there would not be contributions due to torsion (this is the case of $\gamma = 0$).
\newline
\newline
Clearly, the case $\gamma = 0$ is not the unique case possible. There are many other cases obtained by varying the parameters $\alpha$ and $\beta$. For a full account of all the possibilities we refer the reader to \cite{Magueijo}.

\section{Classical spinorial fields}

As we remarked at the beginning of this chapter, the spinorial field that appears in the equations is a quantum field. Hence, eq. \ref{Dirac_Equation_Model} must be regarded as an operator equation. 
\newline
Therefore, also eq. \ref{Ein_Equations_Model_Coord} must be regarded in the same way. Nevertheless, we can assume that the classical gravitational field that we observe satisfies this equation with the terms on the right-hand side substituted with their expectation values. 
\newline
However, as remarked in \cite{Magueijo}, in \cite{Dolan} it is stated that in the first-order formulation of gravity there is an ambiguity in taking expectation values, since it would be probably more natural to take them in eq. \ref{Eq.Connection}. This, then, would yield an equation where the classical gravitational field is sourced by terms like $\langle \bar{\psi} \rangle \langle \psi \rangle$ and not $\langle \bar{\psi} \psi \rangle$, as assumed earlier and as expected from the second-order formalism with four-fermion interaction.
\newline
Since these two expectation values in general are not equal, in order to circumvent any ambiguity, the authors decided to consider what are known as \textit{classical spinorial fields}. These are quantum fields described by a spinorial operator $\hat{\psi}$ which are in a quantum state where $\langle f(\hat{\psi}) \rangle \approx f(\langle\hat{\psi}\rangle)$ for every function $f$. In this way, it is easy to see that the above ambiguity is bypassed. 
\newline
As stated in \cite{Picon}, this kind of fields can be treated as Dirac spinors.
\newline
It is straightforward to notice that, since quarks and leptons are not classical on cosmological scales, the fields we are considering must be independent from the fields of the Standard Model. If this were not the case, then we could not make the assumption of a classical spinorial field.
\newline
\newline
This assumption considerably simplifies the equations. In fact, Dirac spinors satisfy what are known as Pauli-Fierz relations:

\begin{equation}
\psi \bar{\chi} = \frac{1}{4}\bigg[\bar{\chi}\phi I - \big(\bar{\chi}\gamma_L\psi \big)\gamma^L + \frac{1}{2}\big(\bar{\chi}\sigma_{LS}\psi \big)\sigma^{LS} + \big(\bar{\chi}\gamma^5 \gamma_L\psi \big)\gamma^5 \gamma^L + \big(\bar{\chi}\gamma^5\psi \big)\gamma^5 \bigg],
    \label{Pauli-Fierz}
\end{equation}
where $\sigma_{LS} := \frac{i}{2}[\gamma_L, \gamma_S]$. These can be used to obtain the values of $\langle A_I A^I\rangle = \langle A_I \rangle \langle A^I\rangle$ and the other contractions in $W$. After some calculations, it turns out that

\begin{equation}
    \langle A_I A^I\rangle = - \langle V_I V^I\rangle = (E^2 + B^2),
    \label{Value_Contractions}
\end{equation}
where $E := \langle \bar{\psi}\psi \rangle$ and $B:= -i\langle \bar{\psi}\gamma^5\psi \rangle$, while $\langle A_I V^I\rangle = 0$.
\newline
As a consequence, $W$ becomes

\begin{equation}
    W = U + \frac{3 \pi G \gamma^2}{2(\gamma^2 + 1)}\bigg( 1 - \beta^2 + \alpha^2 + \frac{2}{\gamma} \beta\bigg) \big( E^2 + B^2 \big) := U + \xi \big( E^2 + B^2 \big),
    \label{W_New}
\end{equation}
where $\xi$, now, includes all possible torsion effects and it can be positive, negative or zero.
\newline
\newline
Before applying the model to the cosmological background case, however, it is natural to ask ourselves what physical states satisfy the condition that characterizes a \textit{classical spinorial field}. 
\newline
At this point a problem arises. In fact, the vacuum expectation value of some quantities, like $\bar{\psi}\psi$, diverges. Hence, a more correct definition for a classical spinorial field would be in terms of the renormalized expectation values:

\begin{equation}
    \langle...\rangle_{ren} := \bra{s}...\ket{s} - \bra{0}...\ket{0}.
    \label{Renorm_Expect}
\end{equation}
After this correction, we can state that a spinorial field, in order to be called \textit{classical}, in the case of the density $\bar{\psi}\psi$ for example, must satisfy a relation of this kind:

\begin{equation}
    \Bigg| \frac{\langle\bar{\psi}\psi \rangle_{ren} - \langle\bar{\psi} \rangle_{ren} \langle\psi \rangle_{ren}}{\langle\bar{\psi} \rangle_{ren} \langle\psi \rangle_{ren}}\Bigg| \ll 1.
    \label{Classical_Spinors_Cond}
\end{equation}
Since, in the bosonic case, the states that satisfy this condition have large occupation numbers, while, in the fermionic case, the largest occupation number of a state is one, due to Pauli exclusion principle, it is believed that there is no fermionic physical state that satisfy this relation.
\newline
This is not true. 
\newline
Let us suppose to have a free spinorial field in flat space-time and let us define the physical state

\begin{equation}
    \ket{s} := A\ket{0} + B\ket{1},
    \label{State_Class}
\end{equation}
where $A, B$ are two complex numbers s.t. $|A|^2 + |B|^2 = 1$, $\ket{0}$ is the vacuum and $\ket{1}$ is the one-fermion state. Considering that, in the canonical quantization, a spinorial field of this kind can be expanded in terms of creation and annihilation operators as

\begin{equation}
    \psi := \sum_k u_k a_k + v_k b^\dagger_k
    \label{Quantum_Field}
\end{equation}
(where $u_k$ and $v_k$ are normalized Dirac spinors, i.e., $\bar{u}_ku_k = -\bar{v}_kv_k = 1$ and $\bar{u}_kv_k = \bar{v}_ku_k = 0$), we can immediately state that condition \ref{Classical_Spinors_Cond}, in this case, translates into

\begin{equation}
    \Bigg| \frac{|B|^2 - |A|^2 |B|^2}{|A|^2 |B|^2}\Bigg| \ll 1 \Longrightarrow |B|^2 \ll 1.
    \label{Classic_Spinors_Cond_State}
\end{equation}
This condition is perfectly realizable. In particular, this result implies that a free fermionic field in flat space-time can be treated classically if its state differs very little from the renormalized vacuum. We can suppose that a condition of this kind can be found also in our case\footnote{As we will see, the solution for the cosmological background in the case $U(\bar{\psi}\psi) = m \bar{\psi}\psi$ and zero torsion is proportional to a free spinorial field in flat space-time. Therefore, we can assert that the quantum state in which this spinorial field must be in order to be called \textit{classical}, in this case, is exactly given by condition \ref{Classic_Spinors_Cond_State}.}. 
\newline
Furthermore, it could be supposed that a \textit{classical spinorial field} does not describe only a single fermion in this kind of quantum state. In fact, there exist some cases where the behavior of interacting fermions in some quantum states is well described by a single, relativistic, classical scalar field. Hence, we could think that some systems of strongly interacting spinorial fields can be described by a single, relativistic, \textit{classical spinorial field}. 
\newline
\newline
In conclusion, we say that, in all the cases we are going to study, the gravitational field is sourced by the renormalized expectation values of the spinor bilinears, which in turn are formed by renormalized expectation values of the classical spinorial field. Hence, considering the analysis made so far, our calculations aim at understanding how states of matter well described by a classical spinorial field influence the geometry of our Universe.

\section{Cosmological equations}\label{Cosmo_Eq_Sec}
Clearly, the simplest case to which we can apply the model presented is the cosmological one. This section is crucial for the whole work presented in this master's thesis, since it sets the basis for what we are going to do in the next chapters. 
\newline
\newline
As it is stated in \cite{Magueijo}, due to the presence of the space-like current $A^I$ which identifies a preferred spatial direction, one might think that equations \ref{Dirac_Equation_Model} and \ref{Ein_Equations_Model_Coord} do not admit a maximally symmetric solution such as the FLRW one. However, it has been proven in \cite{Isham} that such a solution exists as long as the spatial curvature $K$ is zero\footnote{This assumption could seem quite restrictive. However, we remember that observations \cite{Planck} give $\Omega_{K0} = 0.0007 \pm 0.0019$. Therefore, all the results obtained from these calculations can be applied to our Universe with a very good approximation.}.
\newline
\newline
We remember that the FLRW metric with $K = 0$ is given by the line element

\begin{equation}
    ds^2 = -dt^2 + a^2(t) \delta_{ij}dx^idx^j.
    \label{FLRW_metric}
\end{equation}
As we remarked in Chapter \ref{Spinor_Bundles}, in order to find the components of the connection defined on the Spin Bundle, it is necessary to choose a reference frame and then solve eqs. \ref{Structure_Eqs}. Since the connection we want to compute is the Levi-Civita one, we have to impose $T^I = 0$. 
\newline
The simplest tetrad we can choose, in this case, is 

\begin{equation}
    \begin{split}
    & e^0 = dt, \\
    & e^i = a(t)dx^i.
    \end{split}
    \label{Tetr}
\end{equation}
This gives $\omega^{0i} = He^i$ (the other components being zero). Hence,  eq. \ref{Dirac_Equation_Model} becomes

\begin{equation}
    \gamma^0 \bigg(\dot{\psi} + \frac{3}{2}H \psi \bigg) = -i \frac{\delta W}{\delta \bar{\psi}},
    \label{Dirac_Eq_Model_Cosmo}
\end{equation}
while equations \ref{Ein_Equations_Model_Coord} become

\begin{equation}
    \begin{split}
    &H^2 = \frac{8 \pi G}{3}W =: \frac{8 \pi G}{3}\rho_{eff},\\
    &-2 \frac{\ddot{a}}{a} - H^2 = 8 \pi G [U'E - U + \xi(E^2 + B^2)] =:  8 \pi G p_{eff}.
    \end{split}
    \label{Eins_Eqs_Model_Cosmo}
\end{equation}
The Dirac equation and its complex conjugate can be used to obtain some equations for the quadratic invariants of the spinorial field. More in detail, they lead to

\begin{equation}
    \begin{split}
    &\dot{E} + 3HE = 4 \xi B A^0,\\ 
    &\dot{B} + 3HB = -4 \xi E A^0 -2 U'A^0,\\
    & \dot{A^0} + 3HA^0 = 2U'B,\\
    &\dot{V}^0 + 3HV^0 = 0.
    \end{split}
    \label{Quadratic_Inv_Eqs}
\end{equation}
It is easy to see that, if we multiply the first equation by $E$, the second by $B$ and the third by $A^0$ and we sum them all together, we obtain the integral of motion

\begin{equation}
    a^6[E^2 + B^2 + (A^0)^2] = const =:M^2.
    \label{Integral_Motion}
\end{equation}
This result is independent of the form of $U$ chosen and can be used to obtain the explicit expression of the scale parameter $a$ as a function of the cosmic time $t$.

\section{The DM case and its bouncing solution}\label{DM_Bounce}
In general, eqs. \ref{Quadratic_Inv_Eqs} cannot be solved analytically. Therefore, we have to make some assumptions on the spinorial field under analysis.
\newline
\newline
The simplest assumption would be that of a Weyl spinorial field. However, it is quite straightforward to see that these fields have no effect on the Einstein equations in this case, since they produce vanishing $E$ and $B$. Therefore, if there are no other matter components, they lead to Minkowski space-time.
\newline
Hence, we must consider different types of spinorial fields.
\newline
\newline
The second simplest assumption is $A^0 = 0$, since, if $U' \neq 0$, this choice leads to $B = 0$, as can be immediately read from the third of the equations \ref{Quadratic_Inv_Eqs}. In our representation for gamma matrices this translates into $|\phi|^2 = |\chi|^2$, i.e., solutions without parity violation.
\newline
The immediate consequence of this choice can be derived from eq. \ref{Integral_Motion}, which gives $E = \frac{M}{a^3}$.
\newline
The function that is left to be fixed is $U$. 
\newline
Since the aim of the present work is describing DM, as it will be clear soon, the right choice is $U(E) = mE$. This leads to

\begin{equation}
    \begin{split}
    & W = mE + \xi E^2,\\
    & p_{eff} = \xi E^2.
    \end{split}
    \label{DM_backCond_B0}
\end{equation}
We immediately see that, in the case $\xi = 0$, the Einstein equations give $a(t) \propto t^{\frac{2}{3}}$, which is the same behavior that the scale parameter follows in presence of DM. This justifies the choice made.
\newline
In the case of $\xi < 0$, instead, the first of the eqs. \ref{Eins_Eqs_Model_Cosmo} gives

\begin{equation}
    a(t) = \bigg[ M \bigg( \frac{|\xi|}{m} + \frac{3mt^2}{4} \bigg)\bigg]^{\frac{1}{3}}.
    \label{a_bounce}
\end{equation}
As can be seen in fig. \ref{Boucing_Parameter}, the behavior of the scale parameter in this case is peculiar. In fact, when $t$ goes to zero, $a$ does not go to zero, as in standard cosmology, but approaches a finite value. Hence, torsion causes a bounce of the scale parameter, as anticipated in the Introduction\footnote{For more detailed comments on this case and an analysis of the case where $\xi > 0$ or of the cases where $B$ and $A^0$ are different from zero we refer the reader to the original article.}.

\begin{figure}[H]
\centering
  {\includegraphics[width=0.65
\linewidth]{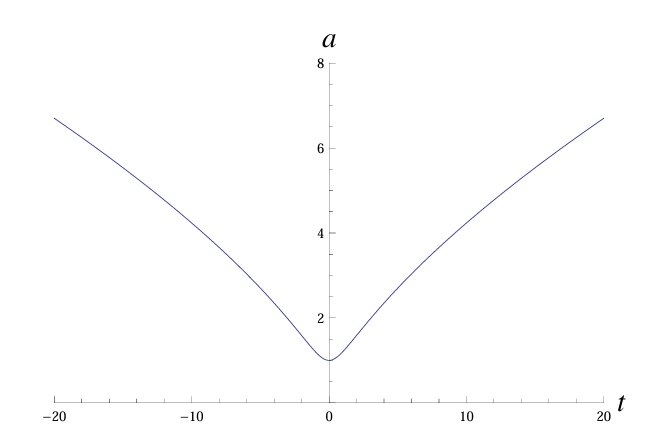}}
  \caption{Behavior of the scale parameter in case $A^0 = B = 0$, $U(E) = mE$ and $\xi < 0$.
  \newline
  (Image taken from \cite{Magueijo})}
  \label{Boucing_Parameter}
\end{figure}

\section{The DE case}
In the previous section, we analyzed the case where $U(E) = mE$, and we have seen that it well-describes DM.
\newline
We would like to see if this model can also describe DE by means of a suitable choice of $U$. The condition to satisfy in order for the spinorial field to act as DE is $p_{eff} = -\rho_{eff}$. 
\newline
This translates into

\begin{equation}
    \begin{split}
    &U'E = - 2 \xi \big( E^2 + B^2\big)\\ 
    &\hphantom{ahdkfkfk} \Downarrow \\
    & U = - \xi E^2-2\xi B^2 \ln|E| + C,
    \end{split}
    \label{Pot_DE}
\end{equation}
where $C$ is an integration constant, related to the cosmological constant $\Lambda$, as we will see shortly.
\newline
From this expression for $U$ we can immediately derive the expressions for $W$ and $\mathcal{E} \equiv p_{eff}$:

\begin{equation}
    \begin{split}
    & W = C + \xi B^2 \big( 1 - 2 \ln|E|\big),\\
    & \mathcal{E} = -C - \xi B^2 \big( 1 - 2 \ln|E|\big) = - W.
    \end{split}
    \label{DE_backCond}
\end{equation}
The reader should notice that, in these calculations, we considered $B$ to be constant because $U$ is a function only of $E$.
\newline
\newline
As in the previous case, we can consider $A^0 = 0$. This, in the presence of torsion, implies $B = 0$. Therefore, the first of the Einstein equations becomes

\begin{equation} 
    H^2 = \frac{8 \pi G}{3}W = \frac{8 \pi G}{3}C \Longrightarrow a(t) \propto e^{\sqrt{\frac{8 \pi G}{3}C}t}.
    \label{a_DE_B0}
\end{equation}
We immediately see that the result obtained in this case is the same that would be obtained by considering the usual cosmological constant $\Lambda$ without the addition of any matter component. From this we conclude that $C = \rho_{\Lambda} := \frac{\Lambda}{8 \pi G}$.
\newline
\newline
The case where $A^0 \neq 0$ and $\xi \neq 0$, instead, leads to a contradiction. In fact, after some calculation, we find that $E$, $B$ and $A^0$ must be constants. However, they lead to an expression for the scale parameter which has the same behavior as that in eq. \ref{a_DE_B0}. This contradicts eq. \ref{Integral_Motion}.
\newline
On the other hand, it is immediate to notice that, in the case of $A^0 \neq 0$ and $\xi = 0$, the results obtained for $W$, $\mathcal{E}$ and $a$ are the same as those obtained in the case with torsion and with $A^0 = 0$, without the occurrence of any contradiction. 
\newline
\newline
In conclusion, we can state that a classical spinorial field well-describes DE only in two cases: the zero-torsion case and the $A^0 = 0$ case. In both of them the constant $C$ that enters in the expression of $U$ is related to the cosmological constant $\Lambda$. 
\newline
This means that, if also its cosmological perturbations behave in the same way as those of DE do, this model can be explored in future works as a valid alternative to the well-known geometrical description of DE given by $\Lambda$.

\section{The Perfect Fluid SET and the RF at rest}\label{PF_SET_RF_Rest}
Before dealing with cosmological perturbations, we make some considerations about the SET of the model presented. Furthermore, we show, in the case of DM without torsion, how the components of the spinorial field with respect to the Reference Frame (RF) specified by the tetrad in eq. \ref{Tetr} are almost entirely determined by the fact that the Coordinate RF is that with respect to which the "spinorial fluid" is at rest, since this imposes some constraints on the spinorial field in question.
\newline
\newline
From eq. \ref{Ein_Equations_Model_Coord} we can state that the SET of the spinorial fluid we are considering is\footnote{From now on, since all torsion terms are included in the parameter $\xi$, we take for granted that every covariant derivative is computed using the Levi-Civita connection. Hence, we avoid to write "\textasciitilde" wherever it was written before.} 

\begin{equation} 
    T_{\mu \nu} = -\frac{i}{4}\Big[\bar{\psi} \Gamma_{(\mu} D_{\nu)} \psi - D_{(\nu} \bar{\psi} \Gamma_{\mu)} \psi \Big] + g_{\mu \nu} \mathcal{E},
    \label{Spinor_SET}
\end{equation}
where we have defined, as in the previous section, $\mathcal{E} := W + U'E - 2U$.
\newline
We have seen in Section \ref{Cosmo_Eq_Sec} that, in the cosmological background case, this tensor is diagonal. Therefore, it could come quite natural to identify it with that of a perfect fluid\footnote{We will see in Chapter \ref{SET_Decomp} that this assumption is not valid in the general case and that the decomposition of the SET of a spinorial fluid needs some extra care.} in its rest-frame.
\newline
We remember that the SET of a perfect fluid is 

\begin{equation} 
    T_{\mu \nu} = (\rho + P)u_\mu u_\nu + g_{\mu \nu}P.
    \label{PF_SET}
\end{equation} 
Since, in the cosmological background case, we have that $W \equiv \rho$ and $\mathcal{E} \equiv p$, after a suitable choice of $u_\mu$, the SET of eq. \ref{Spinor_SET} can be naturally rewritten in the form of eq. \ref{PF_SET} as

\begin{equation} 
    T_{\mu \nu} = (W + \mathcal{E})u_\mu u_\nu + g_{\mu \nu} \mathcal{E}.
    \label{Spinor_SET_PF}
\end{equation}
Another evidence for this identification can be found in the trace of the tensor.
In fact, the trace of the perfect fluid SET is $T^{\mu}_{\hphantom{\mu} \mu} = 3P - \rho$, while, for the spinorial fluid, using the Dirac equation, we obtain 

\begin{equation}
    \begin{split}
    T^{\mu}_{\hphantom{\mu} \mu} &= -\frac{i}{2}\Big[\bar{\psi} \Gamma^{\mu} D_{\mu} \psi - D_{\mu} \bar{\psi} \Gamma^{\mu} \psi \Big] +  4\mathcal{E} = \\
    &= -\frac{1}{2} \Big[ \bar{\psi} \frac{\delta W}{\delta \bar{\psi}} + \Big( \frac{\delta W}{\delta \bar{\psi}}\Big)^\dagger \Gamma^0 \psi \Big] + 4 \mathcal{E} = - (\mathcal{E} + W) + 4\mathcal{E} = 3 \mathcal{E} - W.
    \end{split}
    \label{Spinor_SE_trace}
\end{equation}
However, in order to complete the identification of the spinorial SET with that of a perfect fluid we need to find a suitable four-velocity $u^\mu$. It must be a properly normalized time-like vector related to the fluid. Among the several possible choices, the simplest and the most convenient is the curved vector current density $V^\mu = \bar{\psi} \Gamma^\mu \psi$, since, as can be proven using the results obtained at the end of the previous chapter and the Dirac equation, it is also a conserved current: $\nabla_\mu V^\mu = 0$.
\newline
Therefore, given that $V^\mu V_\mu = -(E^2 + B^2)$, we can identify
\begin{equation} 
    u^\mu := \frac{V^{\mu}}{\sqrt{E^2 + B^2}}.
    \label{u_mu}
\end{equation}
This decision, at present, might seem quite arbitrary, but, in the next chapters, it will prove to be the most clever choice to make in order to easily decompose the SET of a spinorial fluid.
\newline
\newline
As we have anticipated at the beginning of this section, this choice allows to almost entirely determine the spinorial field in the DM case with zero torsion.
\newline
In fact, if the assumption of a perfect fluid in its rest-frame is correct, as it will prove to be, we must have that, in the frame specified by our choice of coordinates, $u^i = 0$. Hence, $V^i = 0$.
\newline
Furthermore, in the DM case with zero torsion, eq. \ref{Dirac_Eq_Model_Cosmo} can be solved directly, obtaining 

\begin{equation}
    \psi(t) = a^{-\frac{3}{2}}e^{-imt\gamma^0}\psi_0,
    \label{Solution_DM_Cosmo_T0}
\end{equation}
where $\psi_0$ is a constant spinor set by initial conditions.
\newline

\noindent
This leads immediately to

\begin{equation}
    V^i = \bar{\psi}\Gamma^i\psi = a^{-2}\big( \bar{\psi}_0\gamma^i\psi_0 \cos(2mt) + i \bar{\psi}_0 \gamma^0\gamma^i\psi_0 \sin(2mt) \big). 
    \label{V_i}
\end{equation}
Therefore, we see that, since the spinor $\psi_0$ has 8 real independent components and the condition $V^i = 0$ leads to 6 independent real conditions specified by $\bar{\psi}_0\gamma^i\psi_0 = 0$ and $\bar{\psi}_0 \gamma^0\gamma^i\psi_0 = 0$, we determine all but two components of the spinorial field in the RF specified by eq. \ref{Tetr}.
\newline
\newline
The remaining components are arbitrary and can be fixed by choosing the direction of the axial current density in the Coordinate RF. In fact, since $A^\mu$ and $V^\mu$ are orthogonal four-vectors, we can state that $A^0 = 0$. Furthermore, it can be proven that requiring $A^{2,3} = 0$, for example, is equivalent to imposing two real conditions on $\psi_0$. This determines uniquely $\psi(t)$ in the RF specified by the chosen tetrad\footnote{We remember that specifying some components with respect to an $SO(3,1)$ RF is equivalent to specifying some components with respect to a $Spin(3,1)$ RF, as remarked in Chapter \ref{Spinor_Bundles}.}.
\newline
\newline
In addition, remembering the discussion made in Section \ref{DM_Bounce}, since $A^0 = 0$, we can state that, in the frame chosen, $\psi(t)$ does not violate parity. This could be expected, since the Lagrangian of the model, in the absence of torsion, does not violate parity.
\newline
\newline
Clearly, if it had not been possible to choose the initial conditions of the spinorial field in order that $V^i = 0$, we could not have assumed that the spinorial background SET could be rewritten as that of a Perfect Fluid. Therefore, it is more correct to say that the spinorial background SET can be written in the form given by eq. \ref{PF_SET} if and only if the initial conditions can be chosen in order to satisfy $V^i = 0$\footnote{We have shown it for DM without torsion, but it can be also shown for DE, with or without torsion.}.
\newline
Hence, in Chapter \ref{SET_Decomp}, we will have to prove that the spinorial SET, in the case of the cosmological background, reduces to that of a Perfect Fluid if and only if, in the Coordinate RF, the vector current density has only one non-vanishing component: $V^0$.

\chapter{Cosmological Perturbations}\label{Cosm_Perturb}

In the previous chapter, we introduced a model for a spinorial field in a curved space-time, and we presented some results obtained by it in the context of the cosmological background. In particular, we focused on two cases of interest: DM and DE\footnote{Naturally, there exist several other possibilities that have not been taken into account and could be studied in future works.}.
\newline
In this chapter, our aim is to analyze these cases from the perspective of scalar cosmological perturbations. 
\newline
The reasons behind this choice are mainly two. First, the study of perturbations is useful in order to verify the stability of the results obtained at the background level. Then, if the perturbations related to these results behave in the same way as those of the known DM and DE, we can state that the spinorial field we are studying is a good candidate for DM or DE. 

\section{Perturbations Theory}\label{Perturb_Th}

Before exposing our findings, we believe that the reader should have an overview about how perturbations of the background metric and of the other background quantities are treated in Cosmology. 
\newline
In doing this, we follow \cite{Piattella}. Therefore, although the results in Section \ref{Scal_Pert_SET} are presented as functions of the cosmic time, in all the other sections of this chapter we decided to adopt the conformal time $\eta$, defined by

\begin{equation}
    \eta := \int_0^t \frac{dt'}{a(t')}.
    \label{Conf_Time}
\end{equation}
\newline
\newline
We remember that the FLRW metric with $K = 0$, written using the conformal time, is given by the line element

\begin{equation}
    ds^2 = a^2(\eta)(-d\eta^2 + \delta_{ij}dx^idx^j).
    \label{FLRW_metric_Conf_Time}
\end{equation}
In order to study the perturbations of this metric tensor, it is natural to define a new metric that differs very little from it. That is, given the background metric $\bar{g}$, we define
\begin{equation}
    g(p) := \bar{g}(p) + \delta g(p) \hspace{1cm} \forall p \in \mathcal{M},
    \label{Metric_Perturb}
\end{equation}
where $|\delta g(p)| \ll |\bar{g}(p)|$ and $\mathcal{M}$ is the manifold on which $g$ is defined.
\newline
The problem is that, mathematically, we cannot define two metrics on the same manifold, but we need two different manifolds, the physical one (where $g$ is defined) and the background one (where $\bar{g}$ is defined). This gives rise to a problem. If the two metrics are defined on different manifolds, the quantity $\delta g(p) := g(p) - \bar{g}(\hat{p})$ (where $\hat{p}$ is a point on the background manifold) is not well-defined, since the difference between two tensors defined on distinct manifolds is not well-defined. Therefore, we need a map (more precisely, a diffeomorphism) between the points of the two manifolds in order to make this difference well-defined. This map, besides, would allow to use the same charts for the two manifolds and, hence, the same systems of coordinates. The problem is that this diffeomorphism is arbitrary and establishes a gauge, giving rise to the so-called problem of the gauge, which we will see in the next section.
\newline
Once we have chosen this map, the quantity\footnote{More precisely, since we are dealing with tensors, we should consider the pullback of the background metric with respect to the map chosen. That is, given the map $\mathcal{D}$, the correct definition would be $\delta g(p) := g(p) - (\mathcal{D}^{-1})^*\bar{g}(p)$.} $\delta g(p) := g(p) - \bar{g}(p)$, $\forall p \in \mathcal{M}$, becomes well-defined. 
\newline
Naturally, we assume that the metric defined on the physical manifold is such that it is always possible to choose a gauge in which $|\delta g(p)| \ll |\bar{g}(p)|$.
\newline
Therefore, considering that, chosen a system of coordinates on the background manifold, we can use the same system in the physical one, locally the physical metric assumes the form

\begin{equation}
    g_{\mu \nu} = a^2(\eta) 
    \begin{pmatrix}
        -[1 + 2 \psi(\eta, \mathbf{x})] & & & w_i(\eta, \mathbf{x}) & \\
        \\
        w_i(\eta, \mathbf{x}) & & & \delta_{ij} [1 + 2 \phi(\eta, \mathbf{x})] + \chi_{ij} & \\
    \end{pmatrix},
    \hspace{1cm} \delta^{ij}\chi_{ij} = 0,
    \label{Pert_Metric}
\end{equation}
where $|\phi|, |\psi|, |w_i|, |\chi_{ij}| \ll 1$.
\newline
Consequently, since we are interested in first order perturbations, in all the computations presented in this and the next chapters, we neglected all non-linear terms formed by these quantities.
\newline
Furthermore, it can be easily proven that $\psi$ and $\phi$ transform as scalars under $SO(3)$, $w_i$ as a vector and $\chi_{ij}$ as a tensor. Hence, they can be viewed respectively as scalars, a vector and a tensor in the three-dimensional Euclidean space.
\newline
\newline
Using the local form of the physical metric presented in eq. \ref{Pert_Metric}, which can be written as $g_{\mu \nu} = a^2(\eta)(\eta_{\mu \nu} + h_{\mu \nu})$, we can compute the perturbed mixed Einstein Tensor.
\newline
In particular, its components are:

\begin{equation}
    2a^2 \delta G^0_{\hphantom{0}0} = -6 \mathcal{H}^2 h_{00} + 4 \mathcal{H}\partial^{k}h_{k0} + 2 \mathcal{H} (\delta^{kj}h_{kj})' + \nabla^2(\delta^{kj}h_{kj}) - \partial^k \partial^l h_{kl},
    \label{Eins_Tens_00}
\end{equation}

\begin{equation}
    2a^2 \delta G^0_{\hphantom{0}i} = 2 \mathcal{H} \partial_i h_{00} + \nabla^2 h_{0i}) - \partial_i \partial^k h_{k0} + \partial_i (\delta^{kj}h_{kj})' - \partial^k h_{ki}',
    \label{Eins_Tens_0i}
\end{equation}

\begin{equation}
    \begin{split}
        2a^2 \delta G^i_{\hphantom{i}j} = & \bigg[ - 4 \frac{a''}{a}h_{00} - 2 \mathcal{H}h_{00}' - \nabla^2 h_{00} + 2 \mathcal{H}^2 h_{00} - 2 \mathcal{H} (\delta^{kj}h_{kj})' + \nabla^2 (\delta^{kj}h_{kj}) +\\
        &- \partial^k \partial^l h_{kl} + 2 \partial^k h_{k0}' + 4 \mathcal{H} \partial^k h_{k0} - (\delta^{kj}h_{kj})'' \bigg] \delta^i_{\hphantom{i}j} + \partial_i \partial_j h_{00} - \nabla^2 h_{ij} +\\
        & + \partial_j \partial^k h_{ki} + \partial_i \partial^k h_{kj} - \partial_i \partial_j (\delta^{kj}h_{kj}) + h_{ij}'' + 2 \mathcal{H}h_{ij}' - (\partial_j h_{0i}' + \partial_i h_{0j}') + \\
        & - 2 \mathcal{H}(\partial_j h_{0i} + \partial_i h_{0j}),
    \end{split}
    \label{Eins_Tens_ij}
\end{equation}
where $\partial^k := \delta^{ik}\partial_i$, $\nabla^2 := \delta^{ij}\partial_i \partial_j$ is the Euclidean Laplacian, $'$ denotes the derivative with respect to the conformal time and $\mathcal{H} := \frac{a'}{a}$ is the conformal Hubble parameter.
\newline
Obviously, the same perturbation method must be applied to the SET, in order to have the complete perturbed Einstein equations. However, this requires some extra care, as we will see in the next sections.

\section{The problem of the gauge}
This section aims to shed light on the problem of the gauge cited in the previous section. 
\newline
\newline
We stated that, in order to correctly define the perturbations of the FLRW metric, we need a map which links the points on the physical manifold to the points on the background one. This map is arbitrary, that is, it depends on our choice. Therefore, quantities that depend on the map chosen cannot be physical. This leads to the need of defining gauge-invariant quantities, which, by definition, have a physical meaning. In order to construct these quantities, however, we need to know how the quantities defined in eq. \ref{Pert_Metric} transform under a change of the map chosen.
\newline
\newline
Being $\mathcal{D}$ the map between the background manifold and the physical one and $\varphi$ a chart on the background manifold, in terms of local sections, the definition of the components of the perturbed metric is mathematically\footnote{In the sense that this is the formal definition of the local expressions of the components of the perturbed metric given in eq. \ref{Pert_Metric}.} described by

\begin{equation}
    \begin{split}
        (\mathcal{D} \circ \varphi^{-1})^*\delta g_{\mu \nu}(\mathcal{D} \circ \varphi^{-1}(x)) := (\mathcal{D} \circ \varphi^{-1})^*g_{\mu \nu}(&\mathcal{D} \circ \varphi^{-1}(x)) - (\mathcal{D} \circ \varphi^{-1})^*(\mathcal{D}^{-1})^*\bar{g}_{\mu \nu}(\varphi^{-1}(x))\\
        &\Big\Downarrow\\
        (\mathcal{D} \circ \varphi^{-1})^*\delta g_{\mu \nu}(\mathcal{D} \circ \varphi^{-1}(x)) = (\mathcal{D} \circ \varphi^{-1})^*&g_{\mu \nu}(\mathcal{D} \circ \varphi^{-1}(x)) - (\varphi^{-1})^*\bar{g}_{\mu \nu}(\varphi^{-1}(x)).
    \end{split}
    \label{Change_of_gauge}
\end{equation}
We see, then, that changing the map $\mathcal{D}$ is equivalent to a change of chart in the physical manifold, keeping the same chart on the background one. Therefore, $(\mathcal{D} \circ \varphi^{-1})^*g_{\mu \nu}(\mathcal{D} \circ \varphi^{-1}(x))$ change following the usual change of chart's transformation rules for tensors, while $(\varphi^{-1})^*\bar{g}_{\mu \nu}(\varphi^{-1}(x))$ remain unchanged.
\newline
Therefore, taking into account that, in order to preserve the smallness of the perturbation, the change of map must be infinitesimal\footnote{This means that it must correspond to an infinitesimal change of coordinates on the physical manifold.} and denoting with $\xi^\mu$ the vector field generator of the infinitesimal transformation, we obtain that the transformation rule for the perturbed metric\footnote{In order to simplify our notation, we have omitted the pullbacks.} is (the hatted quantities are computed in the new gauge)

\begin{equation}
    \delta g_{\mu \nu} = \hat{\delta g}_{\mu \nu} + (\partial_\alpha \bar{g}_{\mu \nu})\xi^\alpha + (\partial_\mu \xi^\rho)\bar{g}_{\rho \nu} + (\partial_\nu \xi^\rho)\bar{g}_{\rho \mu}.
    \label{Gauge_Transf}
\end{equation}
This relation can be immediately translated into the transformation rules for $\psi$, $\phi$, $w_i$ and $\chi_{ij}$:

\begin{equation}
    \begin{split}
        &\hat{\psi} = \psi - \mathcal{H}\xi^0 - \xi^{0'}, \hspace{3cm}  \hat{w}_i = w_i - \zeta_i' + \partial_i \xi^0,\\
        &\\
        &\hat{\phi} = \phi - \mathcal{H}\xi^0 -\frac{1}{3} \partial_l \xi^l, \hspace{2.6cm}  \hat{\chi}_{ij} = \chi_{ij} - \partial_j \zeta_i + \partial_i \zeta_j + \frac{2}{3}\delta_{ij} \partial_l \xi^l,
    \end{split}
    \label{Gauge_Transf_Specified}
\end{equation}
where $\zeta_i := \delta_{ij} \xi^j$, that is different from $\xi_i := a^2 \delta_{ij} \xi^j = a^2 \zeta_i$.
\newline
\newline
Obviously, the transformation rule \ref{Gauge_Transf} holds for any symmetric 2-covariant tensor defined on the physical manifold. Therefore, it can be applied also to the SET, as we will see later in this chapter.
\newline
\newline
Before constructing gauge-invariant quantities, however, we need to introduce another tool that is very useful in the treatment of cosmological perturbations: the Scalar-Vector-Tensor (SVT) decomposition.

\section{The Scalar-Vector-Tensor decomposition}\label{SVT_Decomp}

As just partially anticipated, this decomposition is very useful when we are dealing with linear perturbations, as the cosmological ones. 
\newline
It is based on the Helmholtz theorem, which states that, under certain conditions of regularity, Euclidean vectors and tensors can be decomposed in multiple parts that behave as scalars, vectors and tensors. Therefore, we can apply it to $w_i$ and $\chi_{ij}$\footnote{It can be proven that they satisfy the conditions of regularity cited above.}.
\newline
\newline
More precisely the theorem states that every vector $\mathbf{w}$ in Euclidean space can be split in two parts: one irrotational ($\mathbf{w}^{||}$) and one divergenceless ($\mathbf{w}^{\perp}$), where

\begin{equation}
    \nabla \times \mathbf{w}^{||} = 0 \hspace{1cm} \text{and} \hspace{1cm} \nabla \cdot \mathbf{w}^{\perp} = 0.
    \label{Irr_DivLess}
\end{equation}
Therefore, since the irrotational part $\mathbf{w}^{||}$ can be written as the gradient of a scalar quantity $w$, every component of a vector in the Euclidean space, given a vector $\mathbf{S}$ such that $\nabla \cdot \mathbf{S} = 0$, can be split in this way: 

\begin{equation}
    w_i = \partial_i w + S_i.
    \label{Split_Vector}
\end{equation}
Regarding an Euclidean 3-tensor, instead, the theorem states that it can be split in three different parts: a longitudinal one $\chi_{ij}^{||}$, an orthogonal one $\chi_{ij}^{\perp}$ and a transverse one $\chi_{ij}^{T}$, defined as

\begin{equation}
    \epsilon^{ijk} \partial_l \partial_j \chi_{lk}^{||} = 0, \hspace{1cm} \partial^i \partial^j \chi_{ij}^{\perp} = 0 \hspace{1cm} \text{and} \hspace{1cm} \partial^j \chi_{ij}^{T} = 0.
    \label{Irr_DivLess_Transv}
\end{equation}
These conditions, given a scalar $\mu$ and a divergenceless vector $\mathbf{A}$, translate into 

\begin{equation}
    \begin{split}
        \chi_{ij}^{||} = \bigg( \partial_i \partial_j - \frac{1}{3}\delta_{ij} \nabla^2 \bigg)2 \mu, \hspace{1cm} \text{a} & \text{nd} \hspace{1cm} \chi_{ij}^{\perp} = \partial_j A_i + \partial_i A_j\\
        &\Big \Downarrow\\
        \chi_{ij} = \bigg( \partial_i \partial_j - \frac{1}{3}\delta_{ij} \nabla^2 \bigg)2 \mu &+ \partial_j A_i + \partial_i A_j +\chi_{ij}^{T},
    \end{split}
    \label{Split_Tensor}
\end{equation}
where $\chi_{ij}^{T}$ cannot be decomposed into any scalar or vector. So, it is purely a tensor perturbation.
\newline
From these results we see that, in the treatment of linear perturbations, as cosmological ones, scalar, vector and tensor parts do not mix up. Therefore, they can be independently analyzed.
\newline
\newline
Furthermore, the SVT decomposition can be applied also to the vector $\xi^\mu$ in order to extract from eq. \ref{Gauge_Transf_Specified} the transformation equations for $w$, $\mu$, $S_i$, $A_i$ and $\chi_{ij}^T$.
\newline
More in detail, denoting $\xi^0 = \alpha$ and $\zeta_i = \partial_i \beta + \epsilon_i$ ($\partial^i \epsilon_i = 0$), we get

\begin{equation}
    \begin{split}
        \hat{\psi} = \psi - \mathcal{H}\alpha - \alpha', \hspace{2cm}&  \hat{w} = w - \beta' + \alpha,\\
        \hat{\phi} = \phi - \mathcal{H}\alpha -\frac{1}{3} \nabla^2 \beta, \hspace{1.5cm}&  \hat{\mu} = \mu - \beta,\\
        &\\
        \hat{S}_i = S_i - \epsilon_i', \hspace{3cm}& \hat{A}_i = A_i - \epsilon_i,
    \end{split}
    \label{Separ_Gauge_Transf}
\end{equation}
whereas $\hat{\chi}^T_{ij} = \chi^T_{ij}$, i.e., it is gauge-invariant.
\newline
\newline
Given these relations, we can finally define some gauge-invariant quantities by combining these functions in a suitable manner. The most important ones are the Bardeen potentials:

\begin{equation}
    \Psi := \psi + \frac{1}{a}[(w - \mu')a]' \hspace{1.5cm} \text{and} \hspace{1.5cm} \Phi := \phi + \mathcal{H}(w - \mu') - \frac{1}{3} \nabla^2 \mu.
    \label{Bardeen_Potentials}
\end{equation}
In particular, the first represents the Newtonian potential to which matter is subjected in every point in space-time. Therefore, it gives information about the local gravitational Newtonian field present in a specific region of space-time.
\newline
\newline
The same decomposition naturally applies to the components of the perturbed SET. This allows to form other gauge-invariant quantities useful in the analysis of cosmological perturbations.
\newline
\newline
In conclusion, we can state that the SVT decomposition brings along a double profit. It allows to analyze each kind of perturbation independently of the others and it helps in constructing gauge-invariant quantities like $\Psi$ and $\Phi$.
\newline
\newline
However, the gauge problem carries also a profit, since it allows us to simplify calculations. In fact, since the total number of independent perturbations is 10 and a gauge transformation is defined by 4 independent quantities, we can always choose these quantities in such a way that 4 of the 10 perturbations go to zero. 
\newline
There are several choices that  can be made. In particular, in this master's thesis, we decided to carry out all calculations in what is known as the Newtonian gauge, which is realized by setting $w$, $\mu$ and $A_i$ to zero.
\newline
It is immediate to notice that, in this gauge, all the remaining components of the metric are gauge-invariant, since $\psi$ and $\phi$ coincide with their respective Bardeen Potentials and $S_i$ coincides with $W_i : = S_i - A_i'$, which is gauge-invariant. Furthermore, it can be proven that this gauge is completely fixed. Therefore, its convenience is quite evident.

\section{Scalar perturbations of the spinorial SET}\label{Scal_Pert_SET}
The time has come for us to analyze the perturbations of the spinorial SET. As anticipated, we could analyze independently all kinds of perturbations by starting, for example, from scalar ones, then move to vector ones and, in the end, move to tensor ones.
\newline
This is the standard approach.
\newline
Unfortunately, it has been proven (see \cite{Farnsworth}) that when fermions couple to gravity, they do not admit this kind of decomposition, since some scalar perturbations act as sources of gravitational waves or some vector terms combine to form a scalar perturbation.
\newline
Nevertheless, in this master's thesis, we decided to focus our attention on scalar perturbations alone and analyze what influences they have on our Universe. We will see that, despite this enormous simplification, it is quite difficult to treat them. Hence, we will need another tool that can help us in this analysis. This is the (1+1+2)-decomposition reported in Chapter \ref{SET_Decomp}.
\newline
\newline
We remember that, if we consider only scalar perturbations, the physical metric in the Newtonian gauge is specified by the following line element:

\begin{equation}
    ds^2 = -\big(1 + 2 \Psi \big)dt^2 + a(t)^2 \big(1 + 2 \Phi \big)\delta_{ij} dx^i dx^j.  
    \label{ScalarMet}
\end{equation}
In order to compute the SET for the spinorial fluid in this case, we need to choose a tetrad\footnote{It is worth to remember that the SET is completely independent of the tetrad chosen. Therefore, it is better to choose the simplest tetrad possible, in order to simplify our calculations.} and, then, compute the connection components in this orthonormal RF. The most natural choice is 

\begin{equation}
    \begin{split}
    & e^0 = \big( 1 + \Psi \big)dt =: \tilde{e}^0 + \delta e^0,\\
    & e^i = a(t) \big( 1 + \Phi \big)dx^i =: \tilde{e}^i + \delta e^i,
    \end{split}
    \label{PertTetr}
\end{equation}
where $\tilde{e}^0$ and $\tilde{e}^i$ are the elements of the tetrad given in eq. \ref{Tetr}.
\newline
\newline
From this tetrad we can also define the perturbed curved gamma matrices $\delta \Gamma_\mu$. In fact, the curved gamma matrices in this RF are given by

\begin{equation}
   \Gamma_\mu = e^L_{\hphantom{L} \mu} \gamma_L = \tilde{e}^L_{\hphantom{L} \mu} \gamma_L + \delta e^L_{\hphantom{L} \mu} \gamma_L := \tilde{\Gamma}_\mu + \delta \Gamma_\mu,
   \label{Gamma_Pert}
\end{equation}
where $\tilde{\Gamma}_\mu$ are the curved gamma matrices computed with the background tetrad.
\newline
Since also the connection can be split in the sum of a background part and a perturbation\footnote{The explicit form of the connection is derived in Section \ref{Conn_Perturbed} of Appendix \ref{Computation_Scalar_Pert}.}, we can write the total SET as the sum of a background part, whose explicit form is given in eq. \ref{Eins_Eqs_Model_Cosmo}, and a perturbation, whose form is

\begin{equation}
    \begin{split}
   \delta T_{\mu \nu} =  &-\frac{i}{4}\Big[ \delta \bar{\psi} \tilde{\Gamma}_{(\mu} \tilde{D}_{\nu)}\psi + \bar{\psi} \tilde{\Gamma}_{(\mu} \tilde{D}_{\nu)}\delta \psi -\frac{i}{2}\bar{\psi} \tilde{\Gamma}_{(\mu} \delta \omega^{IJ}_{\hphantom{I}\hphantom{J}\nu)} J_{IJ} \psi + \bar{\psi} \delta \Gamma_{(\mu} \tilde{D}_{\nu)}\psi + \\
   & - \tilde{D}_{(\nu} \delta \bar{\psi} \tilde{\Gamma}_{\mu)}\psi - \tilde{D}_{(\nu} \bar{\psi} \tilde{\Gamma}_{\mu)}\delta \psi -\frac{i}{2}\bar{\psi} \delta \omega^{IJ}_{\hphantom{I}\hphantom{J}(\nu} J_{IJ} \tilde{\Gamma}_{\mu)}\psi - \tilde{D}_{(\nu} \bar{\psi} \delta \Gamma_{\mu)}\psi \Big] + \\
   &+ \delta g_{\mu \nu} \mathcal{E} + g_{\mu \nu} \delta \mathcal{E},
    \end{split}
    \label{SET_Pert}
\end{equation}
where, the "\textasciitilde" above some quantities denotes that these are computed using the background tetrad or the background connection. 
\newline
Using the explicit forms of the curved gamma matrices, of the connection  and of the metric, we can obtain the explicit forms of the components of this tensor. For convenience, since in eqs. \ref{Eins_Tens_00}, \ref{Eins_Tens_0i} and \ref{Eins_Tens_ij} we reported the mixed components of the Einstein tensor, we report the mixed components of the SET. These are:

\begin{equation}
   \delta T^0_{\hphantom{0}0} = -\frac{i}{2}\Big[ \delta \bar{\psi} \tilde{\Gamma}^{0} \dot{\psi} + \bar{\psi} \tilde{\Gamma}^{0} (\dot{\delta \psi}) - ( \dot{\delta \bar{\psi}}) \tilde{\Gamma}^{0}\psi - \dot{\bar{\psi}} \tilde{\Gamma}^{0}\delta \psi \Big] + \Psi \Big[ W + \mathcal{E}  \Big] + \delta \mathcal{E},  
    \label{SETens00Mix}
\end{equation}

\begin{equation}
    \begin{split}
    \delta T^0_{\hphantom{0} i} = &\frac{i}{4}\Big[ \delta \bar{\psi} \tilde{\Gamma}_{i} \dot{\psi} + \bar{\psi} \tilde{\Gamma}_{(0} \partial_{i)}\delta \psi - \dot{\bar{\psi}} \tilde{\Gamma}_i \delta \psi - \partial_{(i} \delta \bar{\psi} \tilde{\Gamma}_{0)} \psi + \\
    & + \delta^{jl} \partial_l \Big( \Psi - \Phi \Big) \bar{\psi} \gamma_0 \gamma_i \gamma_j \psi + \partial_i \Big( \Psi - \Phi \Big) \bar{\psi} \tilde{\Gamma}_0 \psi \Big],
    \end{split}
    \label{SETens0iMix}
\end{equation}

\begin{equation}
    \delta T^i_{\hphantom{i} j} = -\frac{i}{4}g^{im}\Big[ \bar{\psi} \tilde{\Gamma}_{(m} \partial_{j)}\delta \psi  - \partial_{(j} \delta \bar{\psi} \tilde{\Gamma}_{m)}\psi \Big] + \delta^i_{\hphantom{i}j} \delta \mathcal{E}.
    \label{SETensijMix}
\end{equation}
The derivation of these expressions can be found in Section \ref{SET_Derivation} of the Appendix \ref{Computation_Scalar_Pert}, where all the calculations are reported in detail.
\newline
It is quite straightforward to see that these expressions are very complicated and cannot be used for any kind of calculation. Therefore, we need to find a way to make our calculations easier and obtain some relevant result. This will be done in the next section.
\newline
\newline
Nonetheless, these expressions can be used to obtain an equation which relates the density perturbation with the pressure one.
\newline
In fact, even though we do not know what kind of fluid the spinorial one is and how its SET can be decomposed, for every kind of SET related to a fluid, in absence of dissipation\footnote{In Chapter \ref{SET_Decomp}, we will justify why we do not expect to have dissipative effects in the spinorial fluid.}, $T^\mu_{\hphantom{\mu}\mu} = 3P - \rho$. This, given that $T^\mu_{\hphantom{\mu}\mu} = \tilde{T}^\mu_{\hphantom{\mu}\mu} + \delta T^\mu_{\hphantom{\mu}\mu} = 3P - \rho = 3\tilde{P} + 3\delta P - \tilde{\rho} - \delta \rho$, leads to

\begin{equation}
    \delta T^\mu_{\hphantom{\mu}\mu} = 3\delta P - \delta \rho.
    \label{Pert_SET_Trace}
\end{equation}
As a consequence, by exploiting the perturbed Dirac equation

\begin{equation}
    i\delta \Gamma^\mu \tilde{D}_\mu \psi +\frac{1}{2}\tilde{\Gamma}^\mu \delta \omega^{IJ}_{\hphantom{IJ}\mu} J_{IJ}\psi + i\tilde{\Gamma}^\mu \tilde{D}_\mu \delta \psi = \delta \Bigg( \frac{\delta W}{\delta \bar{\psi}}\Bigg),
    \label{PertDirEq}
\end{equation}
we can arrive at the following relation:

\begin{equation}
     \delta \rho = 3 \delta P + U' \delta E - U'' E \delta E - 2  \delta \mathcal{E}.
    \label{PressDens_Final}
\end{equation}
This relation is completely general, in the sense that it holds for any type of potential function $U(E)$. Its derivation can be found in Section \ref{Rel_Dens_Press} of Appendix \ref{Computation_Scalar_Pert}.
\newline
It should be remarked that, although this result has been derived considering only scalar perturbations, it remains valid also when all kinds of perturbations are simultaneously taken into account. In fact, it can be derived from the general form of the perturbed SET \ref{SET_Pert} and the general forms of the Dirac equation \ref{Dirac_Equation_Model} and the perturbed Dirac equation \ref{PertDirEq}.
\newline
Obviously, we can apply it to our cases of interest: DM and DE.
\newline
Given that 

\begin{equation}
    \begin{split}
    & \delta W = U' \delta E + 2 \xi\big( E \delta E + B \delta B\big), \\
    & \delta \mathcal{E} =  2 \xi\big( E \delta E + B \delta B\big) + U'' E \delta E, 
    \end{split}
    \label{PertOfWandE_Text}
\end{equation}
taking into account the expressions of $U$ used in the previous chapter (remember that for DE, in case $\xi \neq 0$, $B = 0$), we obtain

\begin{equation}
    \begin{split}
        & \delta \rho = 3 \delta P + m \delta E - 4 \xi \big( E \delta E + B \delta B\big) \hspace{3cm} \text{(DM)},\\
        & \delta \rho = 3 \delta P \hspace{7.6cm} \text{(DE)}.
    \end{split}
    \label{PressDens_DM_DE}
\end{equation}
These relations will be fundamental in the next chapter in order to prove that the decomposition of the SET used is correct.

\section{The (1+3)-decomposition}\label{RF_Assumption}

As already stated, the expressions obtained for the components of the SET are very complicated and cannot be used for doing any calculation other than that already carried out.
\newline
For this reason, we need to find a different way to face the problem of perturbations of the SET.
\newline
\newline
As we will prove in the next chapter, under certain conditions (certainly satisfied by the spinorial field), every SET, in the absence of dissipation, can be decomposed in this way:

\begin{equation}
    T_{\mu \nu} = \rho u_\mu u_\nu + Q_\mu u_\nu + Q_\nu u_\mu + \Pi_{\mu \nu} + P h_{\mu \nu},
    \label{Real_Fluid_SET}
\end{equation}
where $Q_\mu$ is the heat transfer, which satisfies $Q_\mu u^\mu = 0$, $\Pi_{\mu \nu}$ is the anisotropic stress, which is traceless and satisfies $\Pi_{\mu \nu}u^\nu = 0$, and $h_{\mu \nu} = g_{\mu \nu} + u_\mu u_\nu$.
\newline
This kind of decomposition is called (1+3)-decomposition, since it sorts the SET in quantities that are parallel to $u^\mu$ and quantities which live in the hyperplane orthogonal to it\footnote{For more details about this decomposition see \cite{Maartens}.}.
\newline
We will see later that the SET of a spinorial fluid can be further decomposed thanks to the presence of the axial current density, which is always orthogonal to the four-velocity.
\newline
Nevertheless, for now, we work using this decomposition and we see how far we can go.
\newline
\newline
Considering this kind of decomposition for the spinorial SET and taking into account the previous  assumption that the background SET can be recast as that of a perfect fluid (hence, $Q_\mu$ and $\Pi_{\mu \nu}$ are all first order quantities), we arrive at the form of the perturbed SET:

\begin{equation}
     \delta T_{\mu \nu} = (\delta \rho + \delta P)\tilde{u}_\mu \tilde{u}_\nu + (\rho + P)\delta u_{(\nu} \tilde{u}_{\mu)} + \delta g_{\mu \nu} P + \tilde{g}_{\mu \nu} \delta P + \Pi_{\mu \nu} + Q_{(\mu} \tilde{u}_{\nu)}.
    \label{SET_Pert_RF}
\end{equation}
Clearly, also the components of $\delta T_{\mu \nu}$ depend on the gauge chosen. Therefore, we have to deal with the problem of the gauge. 
\newline
As it has been done at the beginning of this chapter, we assume that the quantities that appear in eq. \ref{SET_Pert_RF} depend on conformal time.
This, considering that the relations of change of gauge for the perturbed SET are the same as those obtained for the metric and defining $Q_i =: aq_i$, $\Pi_{ij} =: a^2\pi_{ij}$ and $\delta u_i =: a v_i$, leads directly to

\begin{equation}
    \begin{split}
        &\hat{\delta \rho} = \delta \rho - \tilde{\rho}'\xi^0, \hspace{1.5cm}  \hat{v}_i = v_i + \partial_i \xi^0, \hspace{1.5cm}  \hat{q}_i = q_i,\\
        &\\
        & \hspace{1.8cm} \hat{\delta P} = \delta P - \tilde{P}'\xi^0, \hspace{1.9cm}  \hat{\pi}_{ij} = \pi_{ij}.
    \end{split}
    \label{Gauge_Transf_Specified_SET}
\end{equation}
Then, we can apply the SVT decomposition also to these quantities and obtain relations similar to those in eq. \ref{Separ_Gauge_Transf}. From these relations, we can construct some gauge-invariant quantities related to the SET. Moreover, we can see how the density constrast $\delta : = \frac{\delta \rho}{\rho}$ modifies under a change of gauge and, considering the behavior of the Hubble parameter, arrive at the important conclusion that, on sub-horizon scales, the density contrast is gauge-invariant.
\newline
\newline
Unfortunately, right now, we have no way to determine how $\rho$ and $P$, in the case of a generic spinorial fluid, can be expressed in terms of bilinears of the spinorial field. Therefore, we do not know how to express $\delta \rho$ and $\delta P$ neither. Nevertheless, we can make a reasonable assumption.
\newline
Considering what we have seen in the background case, we can reasonably suppose that, in general, $\rho = W$ and $P = \mathcal{E}$\footnote{In the next chapter, we will see that these expressions for $\rho$ and $P$ must be corrected by adding two terms related to the axial and the vector current densities.}. From this assumption we derive that

\begin{equation}
    \delta \rho = \delta W \hspace{1.5cm} \text{and} \hspace{1.5cm} \delta P = \delta \mathcal{E}.
    \label{Pert_Dens_Press}
\end{equation}
Therefore, eq. \ref{SET_Pert_RF} becomes

\begin{equation}
     \delta T_{\mu \nu} = (\delta W + \delta \mathcal{E})\tilde{u}_\mu \tilde{u}_\nu + (W + \mathcal{E})\delta u_{(\nu} \tilde{u}_{\mu)} + \delta g_{\mu \nu} \mathcal{E} + \tilde{g}_{\mu \nu} \delta \mathcal{E} + \Pi_{\mu \nu} + Q_{(\mu} \tilde{u}_{\nu)} \hspace{2mm} .
    \label{SET_Pert_PF}
\end{equation}
We can see that, if we compute the trace using the properties of $\delta g_{\mu \nu}$ and $\delta u_{\mu}$, we arrive at the relation \ref{PressDens_Final}, as expected.
\newline
\newline
Since the background fluid is at rest, the background four-velocity has only one non-null component. In particular, in the system of coordinates of conformal time, it is easy to see that

\begin{equation}
    \tilde{u}^0 = \frac{1}{a} \hspace{1.5cm} \text{and} \hspace{1.5cm} \tilde{u}_0 = -a.
    \label{Four_Velocity_Conf}
\end{equation}
From this condition we can derive some identities between the components of the perturbed SET and the quantities that define the real fluid, like $Q_\mu$ or $\Pi_{\mu \nu}$.
\newline
However, before deriving them, it is necessary to rewrite the components of the SET obtained in the previous section as functions of the conformal time and not of the cosmic one. 
\newline
Considering that the curved gamma in the new metric are $\hat{\Gamma}^0 := \hat{g}^{00} \hat{e}^{L}_{\hphantom{L}0} \gamma_L = \frac{1}{a}\tilde{\Gamma}^0$ and $\hat{\Gamma}^i = \tilde{\Gamma}^i$ and considering that $\dot{\psi} = \frac{1}{a}\partial_\eta \psi$ (which we denote with $\frac{1}{a} \psi'$), we have

\begin{equation}
    \delta \hat{T}^0_{\hphantom{0}0} = -\frac{i}{2}\Big[ \delta \bar{\psi} \hat{\Gamma}^{0} \psi' + \bar{\psi} \hat{\Gamma}^{0} (\delta \psi)' - ( \delta \bar{\psi})' \hat{\Gamma}^{0}\psi - \bar{\psi}' \hat{\Gamma}^{0}\delta \psi \Big] + \Psi \Big[ W + \mathcal{E}  \Big] + \delta \mathcal{E}, 
    \label{T00_Conf}
\end{equation}

\begin{equation} 
    \begin{split}
    \delta \hat{T}^0_{\hphantom{0}i} = & \frac{i}{4a^2}\Big[ \delta \bar{\psi} \hat{\Gamma}_{i} \psi' + \bar{\psi} \hat{\Gamma}_{(0} \partial_{i)}\delta \psi - \bar{\psi}' \hat{\Gamma}_i \delta \psi - \partial_{(i} \delta \bar{\psi} \hat{\Gamma}_{0)} \psi + \\
    & + \delta^{jl} \partial_l \Big( \Psi - \Phi \Big) \bar{\psi} \hat{\Gamma}_0 \gamma_i \gamma_j \psi + \partial_i \Big( \Psi - \Phi \Big) \bar{\psi} \hat{\Gamma}_0 \psi \Big],
    \end{split}
    \label{dT_0i_Conf}
\end{equation}

\begin{equation}
    \delta \hat{T}^i_{\hphantom{i}j} = -\frac{i}{4}\hat{g}^{im}\Big[ \bar{\psi} \hat{\Gamma}_{(m} \partial_{j)}\delta \psi  - \partial_{(j} \delta \bar{\psi} \hat{\Gamma}_{m)}\psi \Big] + \delta^i_{\hphantom{i}j} \delta \mathcal{E},
    \label{dT_ij_Conf}
\end{equation}
where, now, all the quantities are functions of $(\eta, \mathbf{x})$.
\newline
Furthermore, we remember that, in the RF chosen, $Q_0 = 0$, $\Pi_{\mu 0} = 0$ and $\delta u^0 = -\frac{\delta g_{00}}{2a^3}$. 
\newline
Given all these results, we can easily obtain the identities we are looking for:

\begin{equation}
    \begin{split}
    \delta \hat{T}^0_{\hphantom{0}0} = & - \big(\delta W + \delta \mathcal{E} \big) + \big(W + \mathcal{E} \big)\big( \delta u_0 \tilde{g}^{00}\tilde{u}_0 + \tilde{u}^0 \delta u_0 \big) + \tilde{g}^{0 \mu} \delta g_{\mu 0} \mathcal{E} + \delta \mathcal{E} + \\ 
    & +\big( W + \mathcal{E} \big) \tilde{u}_\mu \delta g^{\mu 0} \tilde{u}_0 + \delta g^{0 \mu} \tilde{g}_{\mu 0} \mathcal{E} = \\
    = & - \delta W + 2\big(W + \mathcal{E} \big) \frac{\delta u_0}{a} + \big( W + \mathcal{E} \big) a^2 \delta g^{0 0} = - \delta W, 
    \end{split}
    \label{DT00_Flu}
\end{equation}

\begin{equation}
    \delta \hat{T}^0_{\hphantom{0}i} = \big(W + \mathcal{E} \big) av^s_i \tilde{u}^0 + a q^s_i \tilde{u}^0 = \big(W + \mathcal{E} \big) v^s_i + q^s_i, 
    \label{DT0i_Flu}
\end{equation}

\begin{equation}
    \delta \hat{T}^i_{\hphantom{i}j} = \delta^i_{\hphantom{i}j} \delta \mathcal{E} + \tilde{g}^{i \mu} \delta g_{\mu j} \mathcal{E} + \big( \Pi^i_{\hphantom{i}j} \big)^s + \tilde{g}_{j \mu} \delta g^{\mu i} \mathcal{E} = \delta^i_{\hphantom{i}j} \delta \mathcal{E} + \big( \Pi^i_{\hphantom{i}j} \big)^s,
    \label{DTij_Flu}
\end{equation}
where the "$^s$" above some quantities specifies that we are considering only their scalar part, according to the SVT decomposition and the fact that we are dealing only with scalar perturbations. Clearly, in order to obtain their complete expressions, we need to compute also their vector and tensor parts, given respectively by vector and tensor perturbations.
\newline
These identities, not only allow us to find the explicit expressions of $v^s_i$, $q^s_i$ and $ \big( \Pi^i_{\hphantom{i}j} \big)^s$ in terms of the spinorial field, but give us also the conditions that this one must satisfy in order for the assumption made before to be valid. In fact, eq. \ref{DT00_Flu} states that $\delta W$ should be able to be expressed in terms of bilinears containing the spinorial field's perturbation or its derivative with respect to the conformal time. Moreover, eq. \ref{DTij_Flu} leads to

\begin{equation}
     \big( \pi^i_{\hphantom{i}j} \big)^s = \big( \Pi^i_{\hphantom{i}j} \big)^s = -\frac{i}{4}\hat{g}^{im}\Big[ \bar{\psi} \hat{\Gamma}_{(m} \partial_{j)}\delta \psi  - \partial_{(j} \delta \bar{\psi} \hat{\Gamma}_{m)}\psi \Big]
    \label{Pi_ij_Conf}
\end{equation}
(where $\big( \pi^i_{\hphantom{i}j} \big)^s := \delta^{il}\big( \pi_{lj} \big)^s$), which, since the trace of $\big( \pi^i_{\hphantom{i}j} \big)^s$ in this RF must be zero, states that the spinorial field must satisfy also the following condition:

\begin{equation}
    \bar{\psi} \hat{\Gamma}^{i} \partial_{i}\delta \psi  - \partial_{i} \delta \bar{\psi} \hat{\Gamma}^{i}\psi = 0.
    \label{trace0}
\end{equation}

\section{The Einstein equations for scalar perturbations}

The assumption made provides us with a simplified form of the SET and, therefore, of the Einstein equations, making calculations easier. 
\newline
In fact, using the expressions for the perturbed Einstein tensor given in eqs. \ref{Eins_Tens_00}, \ref{Eins_Tens_0i} and \ref{Eins_Tens_ij} considering only scalar perturbations, we arrive at the equations:

\begin{equation} 
    3 \mathcal{H} \Phi' - 3 \mathcal{H}^2 \Psi - \nabla^2 \Phi = 4 \pi G a^2 \delta W, 
    \label{EE_00}
\end{equation}

\begin{equation} 
    \Phi'' + 2 \mathcal{H}\Phi' - \mathcal{H} \Psi' - 2 \frac{a''}{a}\Psi + \mathcal{H}^2 \Psi - \frac{1}{3} \nabla^2 \big( \Phi + \Psi \big) = -4 \pi G a^2 \delta \mathcal{E},
    \label{EE_P}
\end{equation}

\begin{equation} 
    - \Big(\partial^i \partial_j  - \frac{1}{3} \delta^i_{\hphantom{i}j}\nabla^2 \Big) \big( \Phi + \Psi \big) = 8 \pi G a^2 \big( \pi^i_{\hphantom{i}j} \big)^s,
    \label{EE_ij}
\end{equation}

\begin{equation} 
    \partial_i \big( \Phi' - \mathcal{H} \Psi \big) = 4 \pi G a^2 \Big( \big(W + \mathcal{E} \big)v^s_i  + q^s_i \Big).
    \label{EE_0i}
\end{equation}
These are general; in the sense that they do not depend on the explicit form of the function $U$.
\newline
Clearly, they can be specialized for the two cases of our interest, DM and DE, using the equations \ref{PertOfWandE_Text} and the appropriate expression for $U$. Then, other matter components can be added in order to obtain the complete behavior of the Bardeen Potentials and of the density contrasts and other general results. 
\newline
It is worth to remark that, in these cases, independently of the matter component we want to describe, $q^s_i$ and $\big( \pi^i_{\hphantom{i}j} \big)^s$ cannot be put to zero at will, as it is usually done with DM and DE, since these two conditions might be incompatible. In fact, for now, we do not have the explicit expression of $q^s_i$. Hence, we are not able yet to see what part of $\delta \hat{T}^0_{\hphantom{0}i}$ represents $v^s_i$ and what part $q^s_i$, in order to study this compatibility.

\chapter{A decomposition for the SET}\label{SET_Decomp}

In the previous chapter, we computed the explicit expressions of the components of the perturbed SET, taking into account only scalar perturbations. We saw that, even considering a single type of perturbation, finding a solution to the Einstein equations is very hard. Therefore, we had to make some assumptions based on the fact that every SET, under certain conditions, can be decomposed in the way shown in eq. \ref{Real_Fluid_SET}.
\newline
In this chapter, we will see under what conditions this decomposition can be applied and why the properties possessed by the spinorial field allow for a further decomposition of its SET, as anticipated at the end of Chapter \ref{Cosm_Perturb}. Furthermore, employing the results obtained in \cite{Fabbri}, we will see how the thermodynamic components of the SET that appear in this decomposition can be written in terms of the bilinears of the spinorial field.
\newline
In the end, we will be able to derive the adiabatic speed of sound of the pressure perturbation in the context of scalar perturbations.
\newline
In order to simplify calculations, we consider the case $\xi = 0$ and $U(E) = mE$.

\section{The polar form of spinors}\label{Polar_form}
Before discussing the decomposition of the SET, it is necessary to introduce a representation in which every spinor can be expressed, commonly referred to as the polar form. This representation will be very useful to prove that the quantities derived in Section \ref{Ident_Fabbri} lead to the SET obtained in Section \ref{Cosmo_Eq_Sec} in the case of the cosmological background and to achieve an explicit expression of the adiabatic speed of sound, in the context of scalar perturbations, in Section \ref{Adiab_SS}. 
\newline
\newline
We have seen in Chapter \ref{Spinor_Model} that $V^I$ is time-like. Therefore, we can always perform a Lorentz transformation on every spinor in order that its spatial components vanish. Furthermore, remembering that, in such a case, $A^0 = 0$, we can always perform rotations around two of the three spatial axes of the RF in order to align $A^I$ with the remaining spatial axis. These transformations uniquely determine every spinor up to a global phase, which can be eliminated by employing the rotation around the remaining axis, and a phase difference between the chiral parts. All these conditions lead to the fact that every spinor can be expressed 
as 

\begin{equation}
    \psi = \phi e^{-\frac{i}{2}\beta \gamma^5} L^{-1}
    \begin{pmatrix}
        1 \\
        0 \\
        1 \\
        0
    \end{pmatrix},
    \label{Polar_Form_Eq}
\end{equation}
where $L$ is the Lorentz transformation that it is necessary to perform in order to pass from a generic RF to that where $V^I$ and $A^I$ have only one non-vanishing component each and the global phase is vanishing, and $\beta$, called the chiral angle, is the phase difference between chiral parts.
\newline
Moreover, using the properties of Lorentz transformations, it is easy to see that

\begin{equation}
    \begin{split}
        E = 2 \phi^2 \cos{\beta} \hspace{3cm}& B = - 2 \phi^2 \sin{\beta} \\
        A^\mu A_\mu = -V^\mu V_\mu = 4 &\phi^4
    \end{split}
    \label{Identities_Polar}
\end{equation}
Hence, $\phi$ is related to the norm of $V^\mu$ and $A^\mu$. 
\newline
\newline
It is interesting to compute the values of $\phi$ and $\beta$ in the case of the cosmological background.
\newline
At the end of Chapter \ref{Spinor_Model}, we supposed that the background SET can be written as that of a perfect fluid if the Coordinate RF is that at rest\footnote{We will prove that this assumption is correct in the forthcoming sections.}. In that case, $V^i = 0$. This leads to $B = 0$ and, hence, to $\beta = 0, \pi$. Furthermore, from eq. \ref{Integral_Motion}, we obtained $E = \frac{M}{a^3}$, which, considering that the sign difference can be included in $M$, leads to $\phi = \frac{\sqrt{|M|}}{2 a^\frac{3}{2}}$.
\newline
\newline
The polar form and the results derived can be used also to express $\delta E$ and $\delta B$ in terms of the perturbations of $\phi$ and $\beta$.
\newline
In fact, by perturbing eq. \ref{Polar_Form_Eq} and considering only first order quantities, we can arrive at\footnote{This relation is valid for $\beta = 0, \pi$.}

\begin{equation}
    \delta \psi (\mathbf{x}, t) =  \bigg(\frac{\delta \phi}{\phi} - i\frac{\delta \beta }{2}\gamma^5 + \delta (L^{-1})L\bigg) \psi(t).
    \label{Perturbed_Spinor}
\end{equation}
Then, perturbing the identities $L L^{-1} = I$ and $L \gamma^5L^{-1} = \gamma^5$ leads to the relations 

\begin{equation}
    \begin{split}
        &L\hspace{0.5mm}\delta (L^{-1}) + \delta L\hspace{0.5mm} L^{-1} = 0,\\
        & L \hspace{0.5mm}\gamma^5\hspace{0.5mm}\delta (L^{-1}) + \delta L\hspace{0.5mm}\gamma^5\hspace{0.5mm} L^{-1} = 0,
    \end{split}
    \label{Lorentz-Identities}
\end{equation}
which can be used to obtain the identities

\begin{equation}
    \begin{split}
        &\delta E = 2 E \frac{\delta \phi}{\phi},\\
        & \delta B = - \delta \beta \hspace{0.5mm}E.
    \end{split}
    \label{deltaE_deltaB}
\end{equation}
These relations will be very useful in the last part of this chapter, when we derive the adiabatic speed of sound in the context of scalar perturbations.

\section{The (1+1+2)-decomposition of a SET}\label{112_Decomp}
In the previous chapter, we stated that every SET can be written in the form of eq. \ref{Real_Fluid_SET}. In this chapter, we explain why this is possible and why, in the case of the spinorial fluid, the SET can be further decomposed.
\newline
\newline
From linear algebra, we know that, given a vector space $V$ equipped with a scalar product and given a vector $w \in V$, we can always write $V$ as the direct sum of the subspace generated by $w$ ($W$) and its orthogonal complement ($W^\perp$). Therefore, if we construct the symmetric tensor product of $V$ with itself, due to the properties of the tensor product, we get

\begin{equation}
    V \odot V = (W \odot W) \oplus (W \odot W^\perp) \oplus (W^\perp \odot W^\perp).
    \label{Symm_Tens_Product}
\end{equation}
Hence, every tensor in $V \odot V$ can be written as the sum of elements in these tensor subspaces.
\newline
This reasoning can be directly transposed to Tangent Spaces. 
\newline
\newline
Let us suppose that we have a time-like vector field $u \in \Gamma(T\mathcal{M})$, where $\mathcal{M}$ is the space-time\footnote{The reasons behind the choice of a time-like vector field are mainly two. First, if the vector field were null, the decomposition would not be well-defined, because the subspace spanned at each point of $\mathcal{M}$ by such a vector field would be orthogonal to itself. Then, if the vector field were space-like, we could not find a physical rest-frame and, hence, in the case of the SET, we could not express the tensor in the system in which the fluid is at rest.}. Given a point $p \in \mathcal{M}$, $u(p) \in T_p \mathcal{M}$, which is a vector space. Therefore, we can decompose it as the direct sum of the vector subspace spanned by $u(p)$ and its orthogonal complement. Then, analogously to the previous example, we can construct the symmetric tensor product $T_p \mathcal{M} \odot T_p \mathcal{M}$ and decompose it in the same way as we did with $V \odot V$.
\newline
This is what can be done locally. In order for this decomposition to be valid also globally, extending to a decomposition of $T\mathcal{M} \odot T \mathcal{M}$, the section $u$ must be well-defined all over the entire space-time. Unfortunately, in reality, this is never possible, since there exist singularities where $u$ cannot be defined.
\newline
Nevertheless, we will assume that this decomposition can always be applied in a suitable neighborhood of any point in which $u$ is well-defined. In this way, $TU := \coprod_{p \in U} T_p \mathcal{M}$ can be written as the direct sum of the two vector subbundles with fibres respectively given by the span of $u(p)$ in each point $p$ and its orthogonal complement. This directly leads to the possibility of decomposing also its symmetric tensor product. Obviously, due to the properties of the dual map, this decomposition applies to the dual symmetric tensor product.
\newline
\newline
Thus, considering what we have discussed so far, if we can associate a time-like vector field to a generic fluid, we can decompose its SET in the way just shown and this decomposition assumes a physical meaning. Fortunately, such a vector field exists: the four-velocity. As a consequence, in every neighborhood of each point where the four-velocity of the fluid $u^\mu$ is defined, its SET can be decomposed in this way: 

\begin{equation}
    T_{\mu \nu} = \rho u_\mu u_\nu + Q_\mu u_\nu + Q_\nu u_\mu + \tilde{\Pi}_{\mu \nu},
    \label{Real_Fluid_SET_Intermedio}
\end{equation}
where $Q^\mu u_\mu = 0$ and $\tilde{\Pi}^{\mu \nu} u_\nu = 0$, since they respectively belong to $span(u^\mu)^\perp$ and its symmetric tensor product. 
\newline
This expression can be rewritten in a more convenient way by making the trace of $\tilde{\Pi}_{\mu \nu}$ explicit. 
\newline
Before doing this, however, we need to express the quantities defined as  contractions containing the SET. For this purpose, we define the projector $h^\mu_{\hphantom{\mu}\nu} := \delta^\mu_{\hphantom{\mu}\nu} + u^\mu u_\nu$, which is the projector on the subspace orthogonal to $u^\mu$, as can be easily verified. 
\newline
It is easy to see that

\begin{equation}
    \rho = T_{\mu \nu}u^\mu u^\nu,
    \label{Rho_Definition}
\end{equation}

\begin{equation}
    Q_\mu = -(T_{\mu \nu} u^\nu + \rho u_\mu) = -T_{\lambda \nu}u^\nu( \delta^\lambda_{\hphantom{\lambda}\mu}+ u^\lambda u_\mu) = -T_{\lambda \nu}u^\nu h^\lambda_{\hphantom{\lambda}\mu},
    \label{Q_Definition}
\end{equation}

\begin{equation}
    \tilde{\Pi}_{\mu \nu} = T_{\mu \nu} - \rho u_\mu u_\nu - Q_{(\mu}u_{\nu)} = T_{\lambda \tau} h^\lambda_{\hphantom{\lambda}\mu} h^\tau_{\hphantom{\tau}\nu}. 
    \label{Pi_tilde_Definition}
\end{equation}
From the last equation we see that, if we compute the trace of $\tilde{\Pi}_{\mu \nu}$ and we exploit the properties of the projector $h^\mu_{\hphantom{\mu}\nu}$, we obtain $\tilde{\Pi}_{\mu \nu} g^{\mu \nu} = T_{\lambda \tau} h^{\lambda \tau}$. Remembering that the traceless part of a tensor $\tilde{\Pi}_{\mu \nu}$ is obtained by subtracting to that tensor $\frac{1}{G}Tr(\tilde{\Pi}_{\mu \nu})G_{\mu \nu}$, where $G$ is the trace of $G_{\mu \nu}$ and $G_{\mu \nu}$ is the metric tensor in the subspace on which $\tilde{\Pi}_{\mu \nu}$ is defined, and considering that in our case $G_{\mu \nu} = h_{\mu \nu}$, we arrive at the definitions

\begin{equation}
    P := \frac{1}{3}T_{\lambda \tau} h^{\lambda \tau}, 
    \label{P_Definition}
\end{equation}

\begin{equation}
    \Pi_{\mu \nu} := T_{\lambda \tau}( h^\lambda_{\hphantom{\lambda}\mu} h^\tau_{\hphantom{\tau}\nu} - \frac{1}{3}h_{\mu \nu}h^{\lambda \tau}). 
    \label{Pi_Definition}
\end{equation}
These definitions lead directly to the decomposition

\begin{equation}
    T_{\mu \nu} = \rho u_\mu u_\nu + Q_\mu u_\nu + Q_\nu u_\mu + P h_{\mu \nu} + \Pi_{\mu \nu},
    \label{SET_13}
\end{equation}
which is exactly that given in eq. \ref{Real_Fluid_SET}.
\newline
It is worth to remark that, in the case the fluid we are describing were dissipative, we should add a term called bulk viscosity. However, we do not expect the spinorial fluid to be dissipative, since it is described by a fundamental field, which, by definition, has no internal structure.
\newline
Furthermore, we see that $T^\mu_{\hphantom{\mu}\mu} = 3P - \rho$, confirming what we have stated in Chapter \ref{Cosm_Perturb}, since this decomposition is valid for every SET.
\newline
\newline
As anticipated earlier, if we explicitly write this tensor in the RF in which $u^i = 0$ and we apply the results of relativistic thermodynamics, we can deduce the physical meaning of the various quantities which appear in eq. \ref{SET_13}. In fact, in this RF, we have

\begin{equation}
    T^\mu_{\hphantom{\mu}\nu} = 
    \begin{pmatrix}
        -\rho & u^0Q_1 & u^0Q_2 & u^0Q_3 \\
        \\
        u_0Q^1 & P + \Pi^1_{\hphantom{1}1}& \Pi^1_{\hphantom{1}2} & \Pi^1_{\hphantom{1}3} \\\\
        u_0Q^2 & \Pi^2_{\hphantom{2}1} & P + \Pi^2_{\hphantom{2}2} & \Pi^2_{\hphantom{2}3}\\\\
        u_0Q^3 & \Pi^3_{\hphantom{3}1} & \Pi^3_{\hphantom{3}2} & P+ \Pi^3_{\hphantom{3}3}\\
    \end{pmatrix}.
    \label{SET_13_RF}
\end{equation}
From this expression it can be easily seen that $\rho$ represents the energy density, $P$ the isotropic pressure, $Q^\mu$ the energy transfer and $\Pi_{\mu \nu}$ the stress deviator tensor, also called anisotropic stress tensor, as foreseen in the previous chapter.
\newline
\newline
Now, let us suppose that we can associate to the fluid under study another vector field $s$; this time, a space-like vector field. Let us suppose that, given a point $p \in \mathcal{M}$, $s(p) \perp u(p)$. Since the connection defined on space-time is a metric one, we have that the orthogonality between vector fields is preserved along the manifold. Hence, we can state that $s \perp u$ wherever they are both defined. This means that $s \in span(u)^\perp$. Therefore, analogously to the previous case, we can decompose every subspace $span(u(p))^\perp$ into two subspaces: the one spanned by $s(p)$ and its orthogonal complement. Obviously, this decomposition translates in a further decomposition of every possible symmetric tensor product that includes $span(u(p))^\perp$.
\newline
As in the previous case, if both the vector fields were well-defined on the entire manifold, this decomposition would translate into a decomposition of the fibre bundles. However, since this cannot be the case, we assume that this decomposition can always be applied in a suitable neighborhood of every point $p$ where $u$ and $s$ are both defined.
\newline
This means that the quantities which appear in eq. \ref{SET_13} belonging to $span(u)^\perp$ can be further decomposed. In particular, we have 

\begin{equation}
    Q_\mu = Qs_\mu + \tilde{Q}_\mu,
    \label{Q_Decomposition}
\end{equation}

\begin{equation}
    h_{\mu \nu} = s_\mu s_\nu + N_{\mu \nu},
    \label{h_Decomposition}
\end{equation}

\begin{equation}
    \Pi_{\mu \nu} = \Pi s_\mu s_\nu + \Pi_{(\mu}s_{\nu)} + \tilde{\pi}_{\mu \nu},
    \label{Pi_Decomposition}
\end{equation}
where $\tilde{Q}_\mu u^\mu = \tilde{Q}_\mu s^\mu = 0$, $\Pi_{\mu}u^\mu = \Pi_{\mu}s^\mu = 0$, $\tilde{\pi}_{\mu \nu}u^\nu = \tilde{\pi}_{\mu \nu}s^\nu = 0$ and $N_{\mu \nu} := g_{\mu \nu} + u_\mu u_\nu - s_\mu s_\nu$ is also the projector on the orthogonal complement of both subspaces spanned by the two vector fields.
\newline
\newline 
It is easy to see that 

\begin{equation}
    Q = -T_{\lambda \tau}u^\tau h^\lambda_{\hphantom{\lambda}\mu}s^\mu =  -T_{\lambda \tau}u^\tau s^\lambda,
    \label{Q_Scalar_Definition}
\end{equation}

\begin{equation}
    \tilde{Q}_\mu = -T_{\lambda \tau}u^\tau h^\lambda_{\hphantom{\lambda}\mu} +  T_{\lambda \tau}u^\tau s^\lambda s_\mu = -T_{\lambda \tau}u^\tau N^\lambda_{\hphantom{\lambda}\mu},
    \label{Q_tilde_Definition}
\end{equation}

\begin{equation}
    \Pi = -\frac{1}{3}T_{\lambda \tau}(N^{\lambda \tau} - 2 s^\lambda s^\tau),
    \label{Pi_Scalar_Definition}
\end{equation}

\begin{equation}
    \Pi_\mu = T_{\lambda \tau}N^\lambda_{\hphantom{\lambda}\mu}s^\tau,
    \label{Pi_Vector_Definition}
\end{equation}

\begin{equation}
    \tilde{\pi}_{\mu \nu} = T_{\lambda \tau}\bigg(N^\lambda_{\hphantom{\lambda}\mu}N^\tau_{\hphantom{\tau}\nu} - \frac{1}{3}N^{\lambda \tau}N_{\mu \nu} \bigg).
    \label{pi_tilde_Definition}
\end{equation}
We can factor out the trace from $\tilde{\pi}_{\mu \nu}$, as done before with $\Pi_{\mu \nu}$.
\newline
In particular, since $\tilde{\pi}_{\mu \nu}$ lives in the 2-dimensional hyperplane orthogonal to $u^\mu$ and $s^\mu$, its traceless part is $\pi_{\mu \nu} := \tilde{\pi}_{\mu \nu} - \frac{1}{2}Tr(\tilde{\pi}_{\mu \nu})N_{\mu \nu}$. Hence, considering that $\tilde{\pi} := \tilde{\pi}_{\mu \nu} g^{\mu \nu} = \frac{1}{3}T_{\lambda \tau}N^{\lambda \tau}$, $\Pi_{\mu \nu}$ can be written as:

\begin{equation}
    \Pi_{\mu \nu} = \Pi s_\mu s_\nu + \Pi_{(\mu}s_{\nu)} + \frac{1}{2}\tilde{\pi}N_{\mu \nu} + \pi_{\mu \nu}, 
    \label{Pi_Decomposition_2}
\end{equation}
where 

\begin{equation}
    \pi_{\mu \nu} = T_{\lambda \tau}\bigg(N^\lambda_{\hphantom{\lambda}\mu}N^\tau_{\hphantom{\tau}\nu} - \frac{1}{2}N^{\lambda \tau}N_{\mu \nu} \bigg). 
    \label{Pi_Tensor_Decomposition_2}
\end{equation}
Consequently, in this decomposition, called the (1+1+2)-decomposition, the SET is expressed as

\begin{equation}
    T_{\mu \nu} = \rho u_\mu u_\nu + P(N_{\mu \nu} + s_\mu s_\nu) + Qu_{(\mu} s_{\nu)} + \tilde{Q}_{(\mu} u_{\nu)} + \Pi s_\mu s_\nu + \Pi_{(\mu} s_{\nu)} + \frac{1}{2}\tilde{\pi}N_{\mu \nu} + \pi_{\mu \nu}.
    \label{SET_112_Intermedio}
\end{equation}
However, at first sight, it appears that this SET has 11 independent components. This happens because we did not take into account the fact that $\tilde{\pi}$ and $\Pi$ are related, since the total trace of $\Pi_{\mu \nu}$ must vanish. In fact, considering also this condition, we get $\Pi = -\tilde{\pi}$, which leads to the SET

\begin{equation}
    T_{\mu \nu} = \rho u_\mu u_\nu + P(N_{\mu \nu} + s_\mu s_\nu) + Qu_{(\mu} s_{\nu)} + \tilde{Q}_{(\mu} u_{\nu)} - \frac{\Pi}{2}( N_{\mu \nu} - 2 s_\mu s_\nu) + \Pi_{(\mu} s_{\nu)} + \pi_{\mu \nu}.
    \label{SET_112}
\end{equation}
This is the most general way to decompose the SET of a fluid to which we can associate two vectors with the same properties of $u^\mu$ and $s^\mu$. 
\newline
Considering the explicit form of the SET given in eq. \ref{SET_13_RF}, we see that $\Pi_\mu$ are the shear stresses acting on the surfaces orthogonal to the direction of $s^\mu$, $\Pi$ is the anisotropic pressure acting orthogonally to these surfaces and $\tilde{\pi}_{\mu \nu}$ is the remaining part of the anisotropic stress tensor. Similarly, $Q$ is the energy transfer along the direction of $s^\mu$ and $\tilde{Q}^\mu$ are those orthogonal to it.
\newline
\newline
It is natural to ask ourselves if we can apply this decomposition to the spinorial fluid we are studying. Obviously, the answer is affirmative.
\newline
In particular, there must be a vector field which plays the role of the four-velocity of this fluid and a vector field which plays the role of $s^\mu$.
\newline
As anticipated in Chapter \ref{Spinor_Model}, the most natural choice for $u^\mu$ is the normalized vector current density $V^\mu$. This choice is due not only to mathematical convenience or the fact that $V^\mu$ is a conserved current, but also to physical reasons. In fact, the time component of this time-like vector field represents the probability density of the spinorial field, while its spatial components represent the probability density current of the spinorial field. Hence, this vector is somehow related to the "motion" of the spinorial fluid, as a four-velocity should be.
\newline
After this choice, it is straightforward to identify $s^\mu$ with the normalized axial current density $A^\mu$, since this vector is always orthogonal to the vector current density.  
\newline
In conclusion, we can state that, wherever $V^\mu$ and $A^\mu$ are well-defined, the SET of a spinorial fluid can always be expressed  in the form of eq. \ref{SET_112}.

\section{Thermodynamical components of a spinorial fluid}\label{Ident_Fabbri}

We have just proven that the spinorial SET can be decomposed employing the (1+1+2)-decomposition and we have seen how its thermodynamical properties are defined in terms of contractions of the SET with $N^\mu_{\hphantom{\mu} \nu}$, $u^\mu$ and $s^\mu$. 
\newline
It is natural to ask ourselves how these quantities can be expressed in terms of the spinor bilinears.
\newline
In this section, we answer to this question.
\newline
All the results and definitions, and most of the notations, are directly taken from the article \cite{Fabbri}. We refer the reader interested in their derivation to it, where almost all the calculations are reported.
\newline 
\newline
Before exposing the results derived by the authors of the article, however, it is necessary to define some quantities (in all definitions, $\epsilon_{\mu \nu} := \epsilon_{\mu \nu \rho \sigma}u^\rho s^\sigma$): 

\begin{equation}
    \Omega := \frac{1}{2} \nabla_\mu u_\nu \epsilon^{\mu \nu}, \hspace{3cm}  \xi := \frac{1}{2}\nabla_\mu s_\nu \epsilon^{\mu \nu},
    \label{Scalars_Spin_Field_Def}
\end{equation}

\begin{equation}
    \begin{split}
        \Sigma^\mu := \frac{1}{2}N^{\mu \nu}s^\rho (\nabla_\nu u_\rho + \nabla_\rho u_\nu), \hspace{1.5cm}&  \Omega_\mu := \frac{1}{2}N_{\mu \nu}\epsilon^{\nu \rho \sigma \tau} u_\rho\nabla_\sigma u_\tau,\\ 
        \\
        \mathcal{A}^\mu := N^{\mu \nu} u^\rho \nabla_\rho u_\nu, \hspace{2.5cm} & \hspace{0.5cm} a^\mu := N^{\mu \nu} s^\rho \nabla_\rho s_\nu,
    \end{split}
    \label{Vectors_Spin_Field_Def}
\end{equation}

\begin{equation}
    \Sigma_{\mu \nu} := \frac{1}{2}(N^\rho_{\hphantom{\rho}\mu}N^\sigma_{\hphantom{\sigma}\nu} + N^\rho_{\hphantom{\rho}\nu}N^\sigma_{\hphantom{\sigma}\mu} - N_{\mu \nu}N^{\rho \sigma})\nabla_\rho u_\sigma,
    \label{Tensor_Spin_Field_Def}
\end{equation}

\begin{equation}
    \begin{split}
        \dot{\beta} := u^\mu \nabla_\mu \beta, \hspace{1cm} \hat{\beta} := s^\mu &\nabla_\mu \beta, \hspace{1cm} \delta_\mu \beta := N^\nu_{\hphantom{\nu}\mu}\nabla_\nu \beta,\\
        \\
        \delta_\mu \ln{\phi^2} := &N^\nu_{\hphantom{\nu}\mu}\nabla_\nu \ln{\phi^2},
    \end{split}
    \label{Dir_Der_Spin_Field_Def}
\end{equation}
where $\beta$ and $\phi$ are the parameters which appear in the polar form \ref{Polar_Form_Eq}.
\newline
It turns out that it is possible to express the thermodynamical properties of the spinorial fluid in terms of these quantities.
In particular, we have

\begin{equation}
    \begin{split}
        \rho = 2 \phi^2 \bigg(m \cos{\beta} + \Omega + \frac{\hat{\beta}}{2}\bigg), \hspace{2cm} & \hspace{1.3cm} P = \frac{1}{3}\phi^2 (2 \Omega + \hat{\beta}),\\
        Q = \phi^2 (\xi + \dot{\beta}), \hspace{3.3cm} & \hspace{1.3cm} \Pi = -\frac{2}{3}\phi^2 ( \Omega - \hat{\beta}),
    \end{split}
    \label{Scalars_Thermo}
\end{equation}

\begin{equation}
    \tilde{Q}^\mu = -\frac{1}{2} \phi^2 \epsilon^{\mu \nu}(2 \mathcal{A}_\nu - a_\nu - \delta_\nu \ln{\phi^2}), \hspace{1cm} \Pi^\mu = -\frac{1}{2} \phi^2(\Sigma_\nu \epsilon^{\nu \mu} + \Omega^\mu + \delta^\mu \beta),
    \label{Vectors_Thermo}
\end{equation}

\begin{equation}
    \pi^{\mu \nu} = -\frac{1}{2}\phi^2(\Sigma^\mu_{\hphantom{\mu}\lambda}\epsilon^{\lambda \nu} + \Sigma^\nu_{\hphantom{\nu}\lambda}\epsilon^{\lambda \mu}).
    \label{Tensor_Thermo}
\end{equation}
From these identities we can deduce lots of information.
\newline
First of all, we see, as anticipated in the previous chapter, that $\rho$ is not simply equal to $mE$, but it receives a contribution also from the vorticity $\Omega$ and from the derivative of $\beta$ in the direction of the axial current density. The same reasoning is valid for $P$. Then, we can notice that the component of the energy transfer parallel to the axial current density depends on the derivative of $\beta$ along the four-velocity of the fluid and on the twist $\xi$. Hence, we can expect that it does not vanish only in the case the underlying space-time is dynamic. In the end, we can see that the anisotropic pressure $\Pi$ depends on $\Omega$ and $\hat{\beta}$ in a way that the total pressure along the direction of $s^\mu$, $p_s := p + \Pi$, depends only on the derivative of $\beta$ along the direction of $s^\mu$, as it is reasonable\footnote{For more detailed comments about the thermodynamical quantities just defined, we refer the reader to the original article.}.
\newline
\newline
Now, the first thing we can do is applying these results to the cosmological background case in order to see if the assumption made in Chapter \ref{Spinor_Model} about the Perfect Fluid SET is correct. If this were the case, we could also confirm that the Coordinate RF, associated to the choice of coordinates made, is the RF with respect to which the fluid is at rest. 
\newline
Then, we can apply these results to the case of cosmological perturbations and see if we recover the first of the relations \ref{PressDens_DM_DE}, in order to confirm that eqs. \ref{Scalars_Thermo}, \ref{Vectors_Thermo} and \ref{Tensor_Thermo} are correct.

\section{The spinorial SET in the cosmological background}\label{Cosmo_SET_Decomp}

In Section \ref{PF_SET_RF_Rest} of Chapter \ref{Spinor_Model}, we supposed that the SET of the spinorial fluid, when computed in the cosmological background, under certain conditions, can be recast as that of a Perfect Fluid; more precisely, in absence of torsion, as that of a Relativistic Dust.
\newline
Given the results obtained in the previous section, we can verify if this assumption is correct. 
\newline
In order to do this, we have to verify that all the quantities defined in eqs. \ref{Scalars_Spin_Field_Def}, \ref{Vectors_Spin_Field_Def}, \ref{Tensor_Spin_Field_Def} and \ref{Dir_Der_Spin_Field_Def} vanish in this context. This has been done in detail in Appendix \ref{112_Decomp_Cosmo_Back}.
\newline
More in detail, there, we have proven that, in the context of the cosmological background, the SET of a spinorial field can be rewritten as that of a Relativistic Dust if and only if the Coordinate RF, specified by the choice of coordinates made, is the rest-frame for the fluid.
\newline
This totally confirms the assumption made in the previous chapters, since, as we have already stated before, if initial conditions did not guarantee,  for example, $\bar{\psi}_0 \gamma^i \psi_0 = 0$, we could not assume that the SET, in the background case, could be rewritten as that of a Relativistic Dust.
\newline
\newline
Therefore, if we establish that the background SET is that of a Relativistic Dust, we can consider the background fluid at rest in the calculation of the explicit expressions of the thermodynamical quantities in the context of generic cosmological perturbations.
\newline
However, the calculation of these quantities requires a lot of work.
\newline
Hence, we decided to focus our attention only on the relation between the perturbed density and the perturbed pressure. This has been done for two additional reasons. First, we want to verify that we arrive at the same relation between the two perturbations obtained in Chapter \ref{Cosm_Perturb}. Then, the analysis of these quantities might be useful to derive an explicit expression of the adiabatic speed of sound in terms of the parameters which characterize the spinorial field.

\section{The adiabatic speed of sound for scalar perturbations}\label{Adiab_SS}

We have just seen that, in the case of the cosmological background, the only non-vanishing component of the SET is $\tilde{\rho} = \pm 2 \tilde{\phi}^2 m = m \tilde{E}$. 
Therefore, considering the expressions given in eq. \ref{Scalars_Thermo}, we have that $\delta \rho = m \delta E + 2 \tilde{\phi}^2 \Omega + \tilde{\phi}^2 \hat{\delta \beta}$, where $\Omega$ and $\hat{\delta \beta}$ are first order quantities, and that $3\delta P = 2 \tilde{\phi}^2 \Omega + \tilde{\phi}^2 \hat{\delta \beta}$.
\newline 
As a consequence, we immediately obtain

\begin{equation}
    \delta \rho = m \delta E + 3 \delta P,
    \label{Dens_Press_Rel_Fabbri}
\end{equation}
which is easy to see that it coincides with the first of the equations \ref{PressDens_DM_DE} with $\xi = 0$.
\newline
This result verifies that the expressions found for the pressure and the density are correct, since eq. \ref{PressDens_DM_DE} has been derived without exploiting any kind of decomposition, while using only the general expression of the perturbed spinorial SET. 
\newline
\newline
Thanks to the results obtained in the first section of this chapter and to those reported from the article \cite{Fabbri}, besides, we can derive the expression of the adiabatic speed of sound of the pressure perturbation in the context of scalar cosmological perturbations. 
\newline
In fact, in this case, 

\begin{equation}
    \begin{split}
        \Omega =& \frac{1}{2} \nabla_\rho u_\sigma \epsilon^{\rho \sigma} = \frac{1}{2}\partial_\rho u_\sigma \epsilon^{\rho \sigma} = \\
        = & \frac{1}{2} \partial_{\rho} \delta u_\sigma \epsilon^{\rho \sigma} + \frac{1}{2} \partial_{\rho} \tilde{u}_\sigma \epsilon^{\rho \sigma \mu \nu} \tilde{u}_\mu \delta s_\nu + \frac{1}{2} \partial_{\rho} \tilde{u}_\sigma \epsilon^{\rho \sigma \mu \nu} \delta u_\mu  \tilde{s}_\nu = \\
        = & \frac{1}{2} \partial_{\rho} \delta u_\sigma \epsilon^{\rho \sigma} = \frac{1}{2} \partial_{\rho} \partial_\sigma U \hspace{0.5mm} \epsilon^{\rho \sigma} = 0,
    \end{split}
    \label{Omega_Perturbed}
\end{equation}
since we have to consider only the scalar part of $\delta u_\sigma$: i.e., $\partial_\sigma U$.
\newline
Therefore, $3\delta P = \tilde{\phi}^2 \hat{\delta \beta}$. 
\newline
As a consequence, since $m\delta E = 2m \frac{\delta \phi}{\tilde{\phi}} \tilde{E} = \pm 4 \tilde{\phi}^2 m \frac{\delta \phi}{\tilde{\phi}}$, we obtain

\begin{equation}
    \begin{split}
        \delta \rho &= m \delta E + 3 \delta P = 3 \delta P \pm 4 \tilde{\phi}^2 m\frac{\delta \phi}{\tilde{\phi}} = \\
        =& \delta P \bigg( 3 \pm 12\frac{\delta \phi}{\hat{\delta \beta}} \frac{m}{\tilde{\phi}}\bigg) = \delta P \bigg( 3 \pm 24\frac{\delta \phi}{\tilde{s}^\mu \partial_\mu(\delta \beta)} \frac{a^\frac{3}{2}m}{\sqrt{|M|}}\bigg) ,
    \end{split}
    \label{Adiab_Speed}
\end{equation}
from which we can read the adiabatic speed of sound

\begin{equation}
    c_s = \bigg( 3 \pm 24\frac{\delta \phi}{\tilde{s}^\mu \partial_\mu(\delta \beta)} \frac{a^\frac{3}{2}m}{\sqrt{|M|}}\bigg)^{-\frac{1}{2}}. 
    \label{Adiab_Speed_Final}
\end{equation}
We see, then, that $c_s$ depends on the scale parameter and on the perturbations of two of the three quantities that define the spinorial field: $\delta \phi$ and $\delta \beta$.
\newline
\newline
Unfortunately, if we consider all types of perturbations, we cannot obtain the same result, since, in that case, $\Omega = \frac{1}{2}\epsilon^{\rho \sigma} \partial_\rho U_\sigma$, where $U_\sigma$ is the vector part of $\delta u_\sigma$. This makes it impossible for us to express $\delta E$ in terms of the pressure perturbation and, hence, to derive an explicit expression of the adiabatic speed of sound in terms of the quantities that characterize the spinorial field $\psi$.

\chapter{The spherically symmetric case}\label{Spher_Symm}

In the previous chapters, we studied the behavior of the spinorial SET in the cosmological context; both from the background perspective and from the perturbations perspective. 
\newline
In this chapter, we will analyze the second simplest possible case: the spherically symmetric one.
\newline
Indeed, from observations, we infer that galaxies should be surrounded by a spherically symmetric halo of DM. Therefore, if the model presented proposes to be a valid alternative to DM, we have to study its behavior in the aforementioned context and the results it leads to.

\section{The spinorial SET in spherical symmetry}
It is well-known that, in a spherically symmetric space-time, there exists a system of coordinates in which the metric is given by the line element

\begin{equation}
    ds^2 = -e^\nu dt^2 + e^\lambda dr^2 + r^2(d\theta^2 + \sin^2{\theta} \hspace{0.5mm} d\phi^2),
    \label{SpherSymm_LineElem}
\end{equation}
as demonstrated in every book about GR, like \cite{Landau} or \cite{dInverno}
(there, the authors adopt the mostly-minus signature).
\newline
From this metric, analogously to what we did in the other chapters, we can extract the tetrad\footnote{We remember that the expressions of the SET that we are going to compute are completely independent of the tetrad chosen. Hence, it is better to choose a tetrad which simplifies calculations as much as possible.} which will be used to compute the explicit expressions of the components of the SET. 
\newline
The simplest choice we can make is

\begin{equation}
    \begin{split}
    & e^0 = e^{\frac{\nu}{2}}dt, \\
    & e^1 = e^{\frac{\lambda}{2}}dr,\\
    & e^2 = r d\theta, \\
    & e^3 = r \sin{\theta} \hspace{0.5mm} d\phi.
    \end{split}
    \label{Tetr_Spher}
\end{equation}
Taking into account this choice, the expressions of the curved gamma matrices are:

\begin{equation}
    \begin{split}
    &\Gamma^0 = g^{00} \Gamma_0 = g^{00} e^L_{\hphantom{L} 0} \gamma_L = -e^{-\nu} e^{\frac{\nu}{2}}\gamma_0 = -e^{-\frac{\nu}{2}}\gamma_0, \\
    &\Gamma^r = g^{rr} \Gamma_r = g^{rr} e^L_{\hphantom{L} r} \gamma_L = e^{-\lambda} e^{\frac{\lambda}{2}}\gamma_1 = e^{-\frac{\lambda}{2}}\gamma_1, \\
    &\Gamma^\theta = g^{\theta \theta} \Gamma_\theta = g^{\theta \theta} e^L_{\hphantom{L} \theta} \gamma_L = \frac{1}{r^2} r \gamma_2 = \frac{1}{r} \gamma_2, \\
    &\Gamma^\phi = g^{\phi \phi} \Gamma_\phi = g^{\phi \phi} e^L_{\hphantom{L} \phi} \gamma_L = \frac{1}{r^2 \sin^2{\theta}} r \sin{\theta} \hspace{0.5mm} \gamma_3 = \frac{1}{r \sin{\theta}} \gamma_3.
   \end{split}
   \label{Gam_Spher_Symm}
\end{equation}
Successively, after computing the explicit form of the spinorial connection\footnote{The reader can find its complete derivation in Section \ref{Connection_Spher_Symm} of Appendix \ref{Append_Spher_Symm}.}, we can compute the explicit expressions of the components of the SET and of the Dirac equation.
\newline
However, before presenting them to the reader, we have to make a remark. 
\newline
As it is shown in Appendix \ref{Append_Spher_Symm}, the components of the connection explicitly depend on the azimuthal angle $\theta$. Therefore, since these terms do not cancel each other out, we have to consider a $\theta$-dependence also in the spinorial field. This dependence, however, must be such that the scalar $E$, the pseudo-scalar $B$, and other spinor bilinears are independent of this angle, since these terms appear in the Einstein equations, which do not depend on $\theta$.
\newline
In light of this, we can write the Dirac equation

\begin{equation}
    \begin{split}
    &-ie^{-\frac{\nu}{2}}\gamma_0 \dot{\psi} + ie^{-\frac{\lambda}{2}}\gamma_1 \psi' + \frac{i}{r}\gamma_2 \partial_\theta \psi + \frac{i\nu'}{4}e^{-\frac{\lambda}{2}}\gamma_1 \psi -\frac{i\dot{\lambda}}{4} e^{-\frac{\nu}{2}} \gamma_0 \psi + \\
    & + \frac{i}{r} \gamma_1\psi + \frac{i}{2} \cot{\theta} \hspace{0.5mm} \gamma_2 \psi = (U'(E) + 2\xi E) \psi - 2i\xi B \gamma^5 \psi,
   \end{split}
   \label{Dirac_Eq_Spher_Sym_Text}
\end{equation}
(where we have denoted with $\dot{}$ the derivative with respect to $t$ and with $'$ that with respect to $r$) and the components of the SET

\begin{equation}
    T^0_{\hphantom{0} 0} =  \frac{i}{2} e^{-\frac{\nu}{2}}\Big[\bar{\psi} \gamma_0 \dot{\psi} - \dot{\bar{\psi}} \gamma_0 \psi\Big] + \mathcal{E},
    \label{T^0_0_Spher_text}
\end{equation}

\begin{equation}
    T^r_{\hphantom{r} r} = -\frac{i}{2} e^{-\frac{\lambda}{2}}\Big[\bar{\psi} \gamma_1 \psi' - \bar{\psi}' \gamma_1 \psi\Big] + \mathcal{E},
    \label{T^r_r_Spher_text}
\end{equation}

\begin{equation}
    T^\theta_{\hphantom{\theta} \theta} = -\frac{i}{2r} \Big[\bar{\psi} \gamma_2 \partial_\theta \psi - \partial_\theta \bar{\psi} \gamma_2 \psi \Big] + \mathcal{E},
    \label{T^theta_theta_Spher_text}
\end{equation}

\begin{equation}
    T^\phi_{\hphantom{\phi} \phi} =  \mathcal{E},
    \label{T^phi_phi_Spher_text}
\end{equation}

\begin{equation}
    T^0_{\hphantom{0} r} =  \frac{i}{4} e^{-\frac{\nu}{2}}\Big[\bar{\psi} \gamma_0 \psi' - \bar{\psi}' \gamma_0 \psi + e^{\frac{\lambda - \nu}{2}} \big( \bar{\psi} \gamma_1 \dot{\psi} - \dot{\bar{\psi}}\gamma_1 \psi \big)\Big],
    \label{T^0_r_Spher_text}
\end{equation}

\begin{equation}
    T^0_{\hphantom{0} \theta} = \frac{i}{4} e^{-\frac{\nu}{2}} \Big[\bar{\psi} \gamma_0 \partial_\theta \psi -\partial_\theta \bar{\psi} \gamma_0 \psi + e^{-\frac{\nu}{2}}r \big( \bar{\psi}\gamma_2 \dot{\psi} - \dot{\bar{\psi}}\gamma_2 \psi \big) - \frac{1}{2} \big(e^{-\frac{\lambda}{2}} r \nu' - 2 \big) \bar{\psi}\gamma_0 \gamma_1 \gamma_2 \psi \Big],
    \label{T^0_theta_Spher_text}
\end{equation}

\begin{equation}
    T^0_{\hphantom{0} \phi} = \frac{i}{4} e^{-\frac{\nu}{2}} \sin{\theta}\Big[r \cot{\theta} \hspace{0.5mm} \bar{\psi}\gamma_0 \gamma_2 \gamma_3 \psi + e^{-\frac{\nu}{2}}r \big( \bar{\psi}\gamma_3 \dot{\psi} - \dot{\bar{\psi}}\gamma_3 \psi \big) - \frac{1}{2} \big(e^{-\frac{\lambda}{2}}r\nu' - 2 \big) \bar{\psi}\gamma_0 \gamma_1 \gamma_3 \psi \Big],
    \label{T^0_phi_Spher_text}
\end{equation}

\begin{equation}
    T^r_{\hphantom{r} \theta} = -\frac{i}{4} e^{-\frac{\lambda}{2}} \Big[ \bar{\psi} \gamma_1 \partial_\theta \psi -\partial_\theta \bar{\psi} \gamma_1 \psi + e^{-\frac{\lambda}{2}}r \big( \bar{\psi}\gamma_2 \psi' - \bar{\psi}'\gamma_2 \psi \big) - \frac{e^{-\frac{\nu}{2}} r \dot{\lambda}}{2} \bar{\psi}\gamma_0 \gamma_1 \gamma_2 \psi \Big],
    \label{T^r_theta_Spher_text}
\end{equation}

\begin{equation}
    T^r_{\hphantom{r} \phi} = -\frac{i}{4} e^{-\frac{\lambda}{2}}r \sin{\theta} \Big[ e^{-\frac{\lambda}{2}}\big( \bar{\psi}\gamma_3 \psi' - \bar{\psi}'\gamma_3 \psi \big) + \cot{\theta} \hspace{0.5mm} \bar{\psi}\gamma_1\gamma_2 \gamma_3 \psi - \frac{e^{-\frac{\nu}{2}}\dot{\lambda}}{2}\bar{\psi}\gamma_0 \gamma_1 \gamma_3 \psi \Big],
    \label{T^r_phi_Spher_text}
\end{equation}

\begin{equation}
    T^\theta_{\hphantom{\theta} \phi} =  -\frac{i\sin{\theta}}{4r}\big( \bar{\psi}\gamma_3 \partial_\theta \psi - \partial_\theta \bar{\psi} \gamma_3 \psi \big).
    \label{T^theta_phi_Spher_text}
\end{equation}
All these equations are derived in detail in Section \ref{Der_Spher_Symm_SET} of Appendix \ref{Append_Spher_Symm}.
\newline
We see that the tensor is not diagonal and that, actually, none of its components vanishes in the presence of torsion. Therefore, as stated also in the appendix, when we solve the Einstein equations, the last five expressions will play the role of constraints on the spinorial field. 
\newline
In order for this model to be compatible with spherical symmetry, these constraints will have to be simultaneously satisfied. 
\newline
This fact makes us think that the spinorial field under investigation could not be compatible with this kind of symmetry. This could be due to the presence of $A^\mu$, which specifies a preferred spatial direction of motion of the spinorial fluid; thus breaking the symmetry, in the same way as it does in the cosmological background case with $K \neq 0$. 
\newline
This should not happen in the case of axial symmetry. Therefore, in future works, this model could be applied to axially symmetric space-times, such as the Kerr or the Lense-Thirring one.
\newline
\newline
Nevertheless, in the next section, we will try to find a possible solution to the Einstein equations in this context. We will see that this is not a simple task and we will have to make some assumptions.

\section{A possible solution to the Einstein equations}
The first step required to solve the Einstein field equations is writing down the explicit expressions of the components of the Einstein tensor. These can be found in \cite{dInverno} and are\footnote{The different signature has been taken into account.}

\begin{equation}
    G^0_{\hphantom{0} 0} = \frac{e^{-\lambda}}{r^2} - \frac{e^{-\lambda}}{r}\lambda' - \frac{1}{r^2},
    \label{G^0_0_Spher}
\end{equation}

\begin{equation}
    G^0_{\hphantom{0} r} = - \frac{1}{r}e^{-\nu}\dot{\lambda},
    \label{G^0_r_Spher}
\end{equation}

\begin{equation}
    G^r_{\hphantom{r} r} = \frac{e^{-\lambda}}{r}\nu' + \frac{e^{-\lambda}}{r^2} - \frac{1}{r^2},
    \label{G^r_r_Spher}
\end{equation}

\begin{equation}
    G^\theta_{\hphantom{\theta} \theta} = -\frac{e^{-\lambda}}{2r}(\lambda' - \nu') + \frac{e^{-\lambda}}{2}\nu'' - \frac{e^{-\nu}}{4} \dot{\lambda}^2 + \frac{e^{-\nu}}{4}\dot{\nu} \dot{\lambda} - \frac{e^{-\nu}}{2}\ddot{\lambda} + \frac{e^{-\lambda}}{4}(\nu')^2 - \frac{e^{-\lambda}}{4}\lambda' \nu',
    \label{G^theta_theta_Spher}
\end{equation}

\begin{equation}
    G^\phi_{\hphantom{\phi} \phi} = G^\theta_{\hphantom{\theta} \theta}.
    \label{G^phi_phi_Spher}
\end{equation}
Considering the expressions of the components of the SET found, we see that it is very difficult to solve the field equations, unless we make some additional assumption. 
\newline
First of all, we consider vanishing torsion ($\xi = 0$). In this way, we can immediately notice that $T^\phi_{\hphantom{\phi} \phi}$ becomes equal to zero. Then, although this assumption does not have any mathematical or physical basis, we assume that $\psi(t, r, \theta) = f(r)\tilde{\psi}(t, \theta)$, where $f(r)$ is a scalar real function and the $\theta$-dependence is such as to satisfy the conditions mentioned before. 
\newline
This last supposition simplifies considerably our equations, since all terms of the form $\bar{\psi}\gamma_I \psi' - \bar{\psi}'\gamma_I \psi$ vanish. In fact,

\begin{equation}
    \bar{\psi}\gamma_I \psi' - \bar{\psi}'\gamma_I \psi = f'(r)f(r)\big(\bar{\tilde{\psi}}(t, \theta)\gamma_I \tilde{\psi}(t, \theta) - \bar{\tilde{\psi}}(t, \theta) \gamma_I \tilde{\psi}(t, \theta) \big) = 0.
    \label{Condition_f(r)}
\end{equation}
This causes the vanishing of another component of the SET (the $rr$-one) and 
makes other components' form simpler. In fact, we get

\begin{equation}
    T^0_{\hphantom{0} 0} = \frac{i}{2} e^{-\frac{\nu}{2}}\Big[\bar{\psi} \gamma_0 \dot{\psi} - \dot{\bar{\psi}} \gamma_0 \psi\Big],
    \label{T^0_0_Spher_Condition_f(r)}
\end{equation}

\begin{equation}
    T^\theta_{\hphantom{\theta} \theta} =-\frac{i}{2r} \Big[\bar{\psi} \gamma_2 \partial_\theta \psi - \partial_\theta \bar{\psi} \gamma_2 \psi \Big],
    \label{T^theta_theta_Spher_Condition_f(r)}
\end{equation}

\begin{equation}
    T^0_{\hphantom{0} r} = \frac{i}{4} e^{-\nu + \frac{\lambda}{2}}\big( \bar{\psi} \gamma_1 \dot{\psi} - \dot{\bar{\psi}}\gamma_1 \psi \big),
    \label{T^0_r_Spher_Condition_f(r)}
\end{equation}

\begin{equation}
    T^r_{\hphantom{r} \theta} = -\frac{i}{4} e^{-\frac{\lambda}{2}} \Big[ \bar{\psi} \gamma_1 \partial_\theta \psi -\partial_\theta \bar{\psi} \gamma_1 \psi - \frac{e^{-\frac{\nu}{2}} r \dot{\lambda}}{2} \bar{\psi}\gamma_0 \gamma_1 \gamma_2 \psi \Big],
    \label{T^r_theta_Spher_Condition_f(r)}
\end{equation}

\begin{equation}
    T^r_{\hphantom{r} \phi} = -\frac{i}{4} e^{-\frac{\lambda}{2}}r \sin{\theta} \Big[ \cot{\theta} \hspace{0.5mm} \bar{\psi}\gamma_1\gamma_2 \gamma_3 \psi - \frac{e^{-\frac{\nu}{2}}\dot{\lambda}}{2}\bar{\psi}\gamma_0 \gamma_1 \gamma_3 \psi \Big].
    \label{T^r_phi_Spher_Condition_f(r)}
\end{equation}
Now, we can solve the field equations.
\newline
Firstly, from eq. \ref{G^phi_phi_Spher} we obtain that $T^\theta_{\hphantom{\theta} \theta} = T^\phi_{\hphantom{\phi} \phi} = 0$, from which we derive $\bar{\psi} \gamma_2 \partial_\theta \psi - \partial_\theta \bar{\psi} \gamma_2 \psi = 0$.
\newline
This condition, together with our assumptions, has an important consequence on the equations of motion of the metric tensor. In fact, if we contract the Dirac equation with $\bar{\psi}$ and we add to it its complex conjugate, considering what we stated, we arrive at 

\begin{equation}
    -ie^{-\frac{\nu}{2}}\big(\bar{\psi}\gamma_0 \dot{\psi} - \dot{\bar{\psi}}\gamma_0 \psi \big) = 2mE,
    \label{Dirac_Eq_Spher_Sym_Condition_f(r)}
\end{equation}
which directly implies $T^0_{\hphantom{0} 0} = -mE$.
\newline
\newline
This fact, together with eq. \ref{G^0_0_Spher}, translates into the equation

\begin{equation}
    \frac{e^{-\lambda}}{r^2} - \frac{e^{-\lambda}}{r}\lambda' - \frac{1}{r^2} =\frac{(re^{-\lambda})'}{r^2} - \frac{1}{r^2} = -8\pi G m E,
    \label{Einst_Eq_00_Spher}
\end{equation}
which, taking into account that $E(r, t) = f^2(r) g(t)$ (where $g(t)$ is a generic scalar function of $t$), as our previous results require, can be integrated obtaining\footnote{We remark that we should have added a function $A(t)$ divided by $r$. However, since outside the halo the metric must be the Schwarzschild one, we must have $A(t) = 0$.}

\begin{equation}
    e^{-\lambda} = 1 - \frac{8 \pi G m}{r}\int_0^r E(\tilde{r}, t)\tilde{r}^2d\tilde{r} = 1 - \frac{8 \pi G m g(t)}{r}\int_0^r f^2(\tilde{r})\tilde{r}^2d\tilde{r}.
    \label{Inv_Grav_Pot_Spher}
\end{equation}
Then, we can use eq. \ref{G^r_r_Spher} to obtain the behavior of $e^\nu$. 
\newline
Denoting the integral with $I(r)$, this equation gives 

\begin{equation}
    \nu' = \Big(\frac{1}{r^2} - \frac{e^{-\lambda}}{r^2} \Big)e^\lambda r = -\frac{1}{r} + \frac{1}{r -  8\pi G m g(t)I(r)} = \frac{1}{r}\frac{8\pi G m g(t)I(r)}{r -  8\pi G m g(t)I(r)},
    \label{Nu'}
\end{equation}
which can be integrated once the explicit expression of $f(r)$ is known. 
\newline
This last can be obtained by solving the Dirac equation. 
\newline
However, since this equation cannot be analytically solved, we can suppose a functional form of $f(r)$. The simplest we can choose is, obviously, a constant. Therefore, let us define $f^2(r) =: \rho_0$.
\newline
This leads to $I(r) = \frac{\rho_0 r^3}{3}$, which in turn leads to

\begin{equation}
    e^{\lambda} = \Bigg(1 - \frac{8 \pi G m g(t) \rho_0}{3} r^2\Bigg)^{-1}.
    \label{Grav_Pot_Spher_f(r)_Const}
\end{equation}
Regarding $e^\nu$, instead, if we take into account the assumption made and we integrate eq. \ref{Nu'} between 0 and $r$, we obtain

\begin{equation}
    \nu = -\frac{1}{2} \ln \Bigg({1- \frac{8 \pi G m g(t) \rho_0}{3}r^2} \Bigg) \Longrightarrow e^\nu = \Bigg(1- \frac{8 \pi G m g(t) \rho_0}{3}r^2 \Bigg)^{-\frac{1}{2}}.
    \label{Nu_f(r)_Const}
\end{equation}
Therefore, $e^\nu = e^{\frac{\lambda}{2}}$. 
\newline
As a consequence, the line element becomes

\begin{equation}
    ds^2 = -\frac{1}{\sqrt{1 - D(t)r^2}} dt^2 + \frac{1}{1 - D(t)r^2} dr^2 + r^2(d\theta^2 + \sin^2{\theta} \hspace{0.5mm} d\phi^2),
    \label{SpherSymm_LineElem_Solved}
\end{equation}
where we have defined $D(t) := \frac{8 \pi G m g(t) \rho_0}{3}$ and $g(t)$ is to be determined by means of the Einstein and the Dirac equations.
\newline
We see that this line element is well-defined up to $r^2 < \frac{1}{D(t)} = \frac{3}{8 \pi G m g(t) \rho_0}$.
\newline
\newline
However, as we remarked in the previous section, we believe that, without making any assumption, a solution in this context could not be found, since the spinorial field naturally breaks spherical symmetry.

\chapter*{Conclusions}
\addcontentsline{toc}{chapter}{Conclusions} \markboth{Conclusions}{}

In this master's thesis, we focused our attention on the spinor model proposed in the article \cite{Magueijo}, and we analyzed it from the point of view of the cosmological background. We showed that this model, under suitable conditions, can well-describe the background behavior of Dark Matter (DM) and, under other suitable conditions, the behavior of Dark Energy (DE). Furthermore, we showed that the Stress-Energy Tensor (SET) of the spinorial fluid, in the context of the cosmological background, can be recast in the form of that of a Perfect Fluid whose four-velocity $u^\mu$ is given by the normalized vector current density $V^\mu := \bar{\psi} \Gamma^\mu \psi$.
\newline
\newline
Successively, we concentrated on the analysis of the scalar cosmological perturbations of this model, following the usual Scalar-Vector-Tensor (SVT) Decomposition approach, although in \cite{Farnsworth} it was proven that the SVT Decomposition cannot be applied when dealing with spinorial matter. We showed that the treatment of cosmological perturbations is very difficult and cannot be done directly; i.e., by using the explicit expressions of the perturbed SET. Therefore, a further simplification is necessary. 
\newline
Despite all these problems, we managed to obtain a relation between the density perturbation and the pressure perturbation using the expressions found. Moreover, we proved that this relation is also valid for general perturbations.
\newline
\newline
Due to the impossibility of directly making use of the results found, we decided to tackle the problem with another method: the (1+3)-decomposition. We showed that this kind of decomposition can be applied to any 2-covariant symmetric tensor and that, in the case of the Stress-Energy tensor, it assumes physical significance. We also proved that we can further decompose the SET of a spinorial fluid thanks to the presence of the axial current density $A^\mu$. This new decomposition takes the name of (1+1+2)-decomposition.
\newline
Employing the results found in \cite{Fabbri} and the polar form of spinors, we showed how the thermodynamical quantities that characterize the spinorial fluid can be written in terms of the four-velocity, the normalized axial current density $s^\mu$, the chiral angle, and the projector on the hyperplane orthogonal to $u^\mu$ and $s^\mu$.
\newline
Moreover, thanks to these results, in the context of scalar perturbations, we managed to obtain an expression for the adiabatic speed of sound of the pressure perturbation in terms of the parameters that characterize the spinorial field written in polar form. We also showed that, at present, an analogous expression cannot be obtained if all types of perturbations are considered.
\newline
\newline
In the end, we tackled the problem of spherically symmetric halos surrounding galaxies by analyzing the behavior of the Stress-Energy Tensor of the spinorial fluid in a spherically symmetric space-time. We demonstrated that, in general, all its components do not vanish, giving rise to the problem of the suitability of the model proposed for the description of spherically symmetric objects.
\newline
Despite this problem, we tried to find a solution to the Einstein field equations by supposing a possible form of the spinorial field, which helped to significantly simplify our calculations.
\newline
\newline
Since this field of research is almost unexplored, there are several possible ideas that can be pursued in future works; beginning from the background and moving to cosmological perturbations or other kinds of space-times.
\newline
First of all, at the background level, other possible solutions related to different forms of the scalar function $U(E)$ could be explored. One possibility, for example, could be $U(E) = mE -\xi E^2 + C$. In fact, considering the results obtained in the case $A^0 = 0$, this choice could behave in the same way as DM and DE do when they are considered at the same time. Therefore, the model, at least at the background level, could act as a unified description of these two matter components of our Universe.
\newline
Then, regarding cosmological perturbations, the other thermodynamical properties of the spinorial fluid could be computed, and calculations could be carried out using these quantities. Furthermore, it could be verified that the explicit expressions of the perturbed SET obtained by considering all types of perturbations coincide with those obtained by means of the (1+1+2)-decomposition. These results could help to find an expression for the adiabatic speed of sound in case all possible perturbations are taken into account. This would provide the conditions that the spinorial field must satisfy in order to be a good candidate for Dark Matter.
\newline
In the end, the model could be applied to other kinds of space-times. The most likely could be the axially symmetric one. In this case, it could be verified if the model is compatible with this kind of symmetry and, in the case of confirmation, the expressions of the thermodynamical quantities in terms of the spinor bilinears could be found. Then, in case the Einstein equations could be solved, their solutions could be used to obtain the rotation curves of those galaxies surrounded by an axially symmetric DM halo, in order to verify that they match with the ones extracted by observations.
\newline
\newline
In conclusion, this master's thesis has shown that dealing with a spinor model with the aim of describing DM or DE is not a simple task. This leads directly to the problem of defining a suitable decomposition for the Stress-Energy Tensor of this kind of fluid and of  expressing its thermodynamical quantities in terms of the spinor bilinears. Since this requires a lot of effort and time, in this work we focused only on scalar perturbations and a spherically symmetric space-time. Therefore, the possibilities for future works are diverse, paving the way for the chance of finding a valid candidate for DM or DE in at least one of the contexts previously analyzed or proposed.

\appendix

\renewcommand{\chaptername}{Appendix}

\chapter{The spinorial SET and Scalar Cosmological Perturbations}\label{Computation_Scalar_Pert}

In this appendix, we perform all calculations required to derive the results reported in Chapter \ref{Cosm_Perturb} in the context of scalar cosmological perturbations.

\section{Derivation of the perturbed connection}\label{Conn_Perturbed}

As anticipated in the main text, once we have specified a tetrad, we can compute the components of the connection (hence, the entire connection) in that RF. In particular, following what we have done in the case of the cosmological background, we can obtain the components of the connection by using the structure equations \ref{Structure_Eqs} with $T^I = 0$.
\newline
\newline
If we consider the tetrad given in eq. \ref{PertTetr}, it is easy to see that 

\begin{equation}
    \begin{split}
    de^0  & = - \partial_l \Psi dt \wedge dx^l,\\
    de^i & = - \dot{a} \big( 1 + \Phi \big) dx^i \wedge dt - a \partial_\mu \Phi dx^i \wedge dx^{\mu} = \\
    & = - \Big[ \dot{a} \big( 1 + \Phi \big) + a \dot{\Phi} \Big] dx^i \wedge dt - a \partial_l \Phi dx^i \wedge dx^{l},
    \end{split}
    \label{Diff_Pert_Tetrad}
\end{equation}
where the sum over $l$ excludes the term $l = i$.
\newline
As a consequence, putting these results into the first of the structure equations, we get

\begin{equation}
    \begin{split}
    &\omega^i_{\hphantom{i} 0} =  \Big[ \dot{a} \big( 1 + \Phi - \Psi \big) + a \dot{\Phi} \Big] dx^i + ( \propto dt), \\
    &\omega^0_{\hphantom{0} i} = \frac{1}{a} \partial_i \Psi dt + ( \propto dx^i),\\
    &\omega^i_{\hphantom{i} j} = \partial_j \Phi dx^i + (\propto dx^j). 
    \end{split}
    \label{NewConn}
\end{equation}
We point out that these results have been obtained at first order in perturbations theory, as the whole treatment require. Furthermore, the notation $\propto$, used for terms like ($\propto dt$), comes out from the fact that we cannot derive the explicit expression of these terms by using only one of the equations, but we need also the second structure equation.
\newline
If we take also this one into account, in the end, we arrive at

\begin{equation}
    \begin{split}
    & \omega^{i0} = -\omega^{0i} = - \Bigg[ \Big[ \dot{a} \big( 1 + \Phi - \Psi \big) + a \dot{\Phi}\Big]\delta^i_{\hphantom{i}\mu} + \frac{1}{a} \delta^{li} \partial_l \Psi \delta^0_{\hphantom{0}\mu}  \Bigg] dx^\mu, \\
    &\omega^{ij} = - \omega^{ji} =  \Big[ \delta^{jl} \partial_l \Phi \delta^i_{\hphantom{i}\mu} - \delta^{il} \partial_l \Phi \delta^j_{\hphantom{j}\mu} \Big] dx^\mu. 
    \end{split}
    \label{NewConnUp}
\end{equation}
We immediately notice that the connection components obtained can be written as the sum of the background connection components and a perturbation.
\newline
Therefore, remembering the form of the background connection, given in Section \ref{Cosmo_Eq_Sec}, we can obtain the explicit form of the perturbed connection $\delta \omega^{IJ}_{\hphantom{I}\hphantom{J}\nu} J_{IJ}$.
\newline
More in detail, we get

\begin{equation}
    \begin{split}
    \delta \omega^{IJ}_{\hphantom{I}\hphantom{J}0} J_{IJ} & = 2 \delta \omega^{0i}_{\hphantom{0}\hphantom{i}0} J_{0i} = -i\delta \omega^{0i}_{\hphantom{0}\hphantom{i}0} \gamma_0 \gamma_i = -\frac{i}{a} \delta^{il} \partial_l \Psi \gamma_0 \gamma_i,\\
    \delta \omega^{IJ}_{\hphantom{I}\hphantom{J}k} J_{IJ} & = 2 \delta \omega^{0i}_{\hphantom{0}\hphantom{i} k} J_{0i} + \delta\omega^{ij}_{\hphantom{i}\hphantom{j}k} J_{ij} = \\
    & = -i \Big[ \dot{a} \big( \Phi - \Psi \big) + a \dot{\Phi}\Big]\delta^i_{\hphantom{i}k} \gamma_0 \gamma_i - \frac{i}{2} \Big[ \delta^{jl} \partial_l \Phi \delta^i_{\hphantom{i}k} - \delta^{il} \partial_l \Phi \delta^j_{\hphantom{j}k} \Big]\gamma_i \gamma_j = \\
    & = -i \Big[ \dot{a} \big( \Phi - \Psi \big) + a \dot{\Phi}\Big] \gamma_0 \gamma_k - \frac{i}{2} \Big[ \delta^{jl} \partial_l \Phi \gamma_k \gamma_j  - \delta^{il} \partial_l \Phi \gamma_i \gamma_k \Big] = \\
    & = -i \Big[ \dot{a} \big( \Phi - \Psi \big) + a \dot{\Phi}\Big] \gamma_0 \gamma_k - \frac{i}{2} \delta^{jl} \partial_l \Phi \Big[ \gamma_k \gamma_j - \gamma_j \gamma_k \Big] = \\
    & = -i \Big[ \dot{a} \big( \Phi - \Psi \big) + a \dot{\Phi}\Big] \gamma_0 \gamma_k - i \delta^{jl} \partial_l \Phi \Big[ \gamma_k \gamma_j + \delta_{kj} \Big] = \\
    & = -i \Big[ \dot{a} \big( \Phi - \Psi \big) + a \dot{\Phi}\Big] \gamma_0 \gamma_k - i \delta^{jl} \partial_l \Phi \gamma_k \gamma_j -i \partial_k \Phi \hspace{1mm} .
    \end{split}
    \label{PertConn}
\end{equation}

\section{Derivation of the perturbed spinorial SET}\label{SET_Derivation}

The expression of the connection derived in the previous section, the perturbed gamma matrices and the perturbed tetrad can be all used for computing the explicit expressions of the mixed components of the spinorial SET reported in Chapter \ref{Cosm_Perturb}. 

\subsection{Covariant 00-component}
We begin from the covariant 00-component:

\begin{equation}
    \begin{split}
   \delta T_{00} =  &-\frac{i}{2}\Big[ \delta \bar{\psi} \tilde{\Gamma}_{0} \tilde{D}_{0}\psi + \bar{\psi} \tilde{\Gamma}_{0} \tilde{D}_{0}\delta \psi -\frac{i}{2}\bar{\psi} \tilde{\Gamma}_{0} \delta \omega^{IJ}_{\hphantom{I}\hphantom{J}0} J_{IJ} \psi + \bar{\psi} \delta \Gamma_{0} \tilde{D}_{0}\psi + \\
   & - \tilde{D}_{0} \delta \bar{\psi} \tilde{\Gamma}_{0}\psi - \tilde{D}_{0} \bar{\psi} \tilde{\Gamma}_{0}\delta \psi -\frac{i}{2}\bar{\psi} \delta \omega^{IJ}_{\hphantom{I}\hphantom{J}0} J_{IJ} \tilde{\Gamma}_{0}\psi - \tilde{D}_{0} \bar{\psi} \delta \Gamma_{0}\psi \Big] + \\
   & + \delta g_{00} \mathcal{E} + g_{00} \delta \mathcal{E} = \\
   = &-\frac{i}{2}\Big[ \delta \bar{\psi} \tilde{\Gamma}_{0} \dot{\psi} + \bar{\psi} \tilde{\Gamma}_{0} (\dot{\delta \psi}) -\frac{i}{2}\bar{\psi} \tilde{\Gamma}_{0} \delta \omega^{IJ}_{\hphantom{I}\hphantom{J}0} J_{IJ} \psi + \bar{\psi} \delta \Gamma_{0} \dot{\psi} + \\
   & - ( \dot{\delta \bar{\psi}}) \tilde{\Gamma}_{0}\psi - \dot{\bar{\psi}} \tilde{\Gamma}_{0}\delta \psi -\frac{i}{2}\bar{\psi} \delta \omega^{IJ}_{\hphantom{I}\hphantom{J}0} J_{IJ} \tilde{\Gamma}_{0}\psi - \dot{\bar{\psi}} \delta \Gamma_{0}\psi \Big] - 2 \Psi \mathcal{E} - \delta \mathcal{E}.
    \end{split}
    \label{SETens00_1}
\end{equation}
Since $ \delta \omega^{IJ}_{\hphantom{I}\hphantom{J}0} J_{IJ}$ is proportional to $\gamma_0 \gamma_i$ and $\tilde{\Gamma}_0 = \tilde{e}^0_{\hphantom{0}0} \gamma_0 = \gamma_0$, it is straightforward to see that 

\begin{equation}
    \bar{\psi}\tilde{\Gamma}_0 \delta \omega^{IJ}_{\hphantom{I}\hphantom{J}0} J_{IJ}\psi + \bar{\psi}\delta \omega^{IJ}_{\hphantom{I}\hphantom{J}0} J_{IJ}\tilde{\Gamma}_0 \psi = 0.
    \label{Gam0_Rel_1}
\end{equation}
Furthermore, since $\delta \Gamma_0 = \delta e^0_{\hphantom{0}0}\gamma_0 = \Psi \gamma_0 $, we obtain

\begin{equation}
    \begin{split}
    &\bar{\psi} \delta\Gamma_{0} \dot{\psi} - \dot{\bar{\psi}} \delta \Gamma_{0}\psi = \Psi \Big[\bar{\psi} \gamma_{0} \dot{\psi} - \dot{\bar{\psi}} \gamma_{0}\psi \Big] = - \Psi \Big[\bar{\psi} \gamma^{0} \dot{\psi} - \dot{\bar{\psi}} \gamma^{0}\psi \Big] = \\
    =&- \Psi \bigg[\bar{\psi} \bigg(-\frac{3}{2}H\gamma^0\psi -i\frac{\delta W}{\delta \bar{\psi}}\bigg) - \bigg(-\frac{3}{2}H\bar{\psi} +i\bigg(\frac{\delta W}{\delta \bar{\psi}}\bigg)^{\dagger} \bigg)\gamma^0 \psi \bigg] = \\
    =& i \Psi \bigg[\bar{\psi} \bigg(\frac{\delta W}{\delta \bar{\psi}}\bigg) + \bigg( \bigg(\frac{\delta W}{\delta \bar{\psi}}\bigg)^{\dagger} \bigg)\gamma^0 \psi \bigg] = \\
    =& i \Psi \Big[ U' E + 2\xi E^2 + 2 \xi B^2 + U'E + 2 \xi E^2 + 2 \xi B^2 \Big] = 2i \Psi \Big[ \mathcal{E} + W \Big].
    \end{split}
    \label{Gam0_rel}
\end{equation}
From these results it directly follows:

\begin{equation}
    \begin{split}
   \delta T_{00} = &-\frac{i}{2}\Big[ \delta \bar{\psi} \tilde{\Gamma}_{0} \dot{\psi} + \bar{\psi} \tilde{\Gamma}_{0} (\dot{\delta \psi}) - ( \dot{\delta \bar{\psi}}) \tilde{\Gamma}_{0}\psi - \dot{\bar{\psi}} \tilde{\Gamma}_{0}\delta \psi \Big] + \Psi \Big[ W - \mathcal{E}  \Big] - \delta \mathcal{E}.
    \end{split}
    \label{SETens00_2}
\end{equation}

\subsection{Covariant $ij$-component}

Now, we can move on to the covariant $ij$-component:

\begin{equation}
    \begin{split}
    \delta T_{i j} =  &-\frac{i}{4}\Big[ \delta \bar{\psi} \tilde{\Gamma}_{(i} \tilde{D}_{j)}\psi + \bar{\psi} \tilde{\Gamma}_{(i} \tilde{D}_{j)}\delta \psi -\frac{i}{2}\bar{\psi} \tilde{\Gamma}_{(i} \delta \omega^{IJ}_{\hphantom{I}\hphantom{J}j)} J_{IJ} \psi + \bar{\psi} \delta \Gamma_{(i} \tilde{D}_{j)}\psi + \\
    & - \tilde{D}_{(j} \delta \bar{\psi} \tilde{\Gamma}_{i)}\psi - \tilde{D}_{(j} \bar{\psi} \tilde{\Gamma}_{i)}\delta \psi -\frac{i}{2}\bar{\psi} \delta \omega^{IJ}_{\hphantom{I}\hphantom{J}(j} J_{IJ} \tilde{\Gamma}_{i)}\psi - \tilde{D}_{(j} \bar{\psi} \delta \Gamma_{i)}\psi \Big] + \\
    & + \delta g_{ij} \mathcal{E} + g_{ij} \delta \mathcal{E} = \\
    = & -\frac{i}{4}\Big[ \delta \bar{\psi} \tilde{\Gamma}_{(i} \tilde{D}_{j)}\psi + \bar{\psi} \tilde{\Gamma}_{(i} \tilde{D}_{j)}\delta \psi -\frac{i}{2}\bar{\psi} \tilde{\Gamma}_{(i} \delta \omega^{IJ}_{\hphantom{I}\hphantom{J}j)} J_{IJ} \psi + \bar{\psi} \delta \Gamma_{(i} \tilde{D}_{j)}\psi + \\
    & - \tilde{D}_{(j} \delta \bar{\psi} \tilde{\Gamma}_{i)}\psi - \tilde{D}_{(j} \bar{\psi} \tilde{\Gamma}_{i)}\delta \psi -\frac{i}{2}\bar{\psi} \delta \omega^{IJ}_{\hphantom{I}\hphantom{J}(j} J_{IJ} \tilde{\Gamma}_{i)}\psi - \tilde{D}_{(j} \bar{\psi} \delta \Gamma_{i)}\psi \Big] + \\
    & + 2 a^2 \Phi \delta_{ij} \mathcal{E} + a^2 \delta_{ij} \delta \mathcal{E}. \\
    \end{split}
    \label{SETensij_1}
\end{equation}
Given that $\tilde{\Gamma}_i = a \gamma_i$, $\delta \Gamma_i = \delta e^L_{\hphantom{L} i} \gamma_L = a \Phi \gamma_i$ and $\tilde{\omega}^{IJ}_{\hphantom{I}\hphantom{J}j} J_{IJ} = -i \dot{a} \gamma_0 \gamma_j$ and considering that the background spinorial field depends only on the cosmic time, we obtain

\begin{equation}
    \begin{split}
    & \delta \bar{\psi} \tilde{\Gamma}_{(i} \tilde{D}_{j)}\psi + \bar{\psi} \delta \Gamma_{(i} \tilde{D}_{j)}\psi - \tilde{D}_{(j} \bar{\psi} \tilde{\Gamma}_{i)}\delta \psi  - \tilde{D}_{(j} \bar{\psi} \delta \Gamma_{i)}\psi = \\
    = & - \frac{i}{2}\delta \bar{\psi} \tilde{\Gamma}_{i} \tilde{\omega}^{IJ}_{\hphantom{I}\hphantom{J}j} J_{IJ}\psi - \frac{i}{2}\delta \bar{\psi} \tilde{\Gamma}_{j} \tilde{\omega}^{IJ}_{\hphantom{I}\hphantom{J}i} J_{IJ}\psi  - \frac{i}{2} \bar{\psi} \delta \Gamma_{i} \tilde{\omega}^{IJ}_{\hphantom{I}\hphantom{J}j} J_{IJ}\psi - \frac{i}{2} \bar{\psi} \delta \Gamma_{j} \tilde{\omega}^{IJ}_{\hphantom{I}\hphantom{J}i} J_{IJ}\psi + \\
    & - \frac{i}{2}\bar{\psi} \tilde{\omega}^{IJ}_{\hphantom{I}\hphantom{J}j} J_{IJ} \tilde{\Gamma}_{i}\delta \psi - \frac{i}{2}\bar{\psi} \tilde{\omega}^{IJ}_{\hphantom{I}\hphantom{J}i} J_{IJ} \tilde{\Gamma}_{j}\delta \psi  - \frac{i}{2} \bar{\psi} \tilde{\omega}^{IJ}_{\hphantom{I}\hphantom{J}j} J_{IJ} \delta \Gamma_{i}\psi - \frac{i}{2} \bar{\psi} \tilde{\omega}^{IJ}_{\hphantom{I}\hphantom{J}i} J_{IJ} \delta \Gamma_{j}\psi = \\
    = & -\frac{1}{2} \dot{a} \bigg[\delta \bar{\psi} \tilde{\Gamma}_{i} \gamma_0 \gamma_j \psi + \delta \bar{\psi} \tilde{\Gamma}_{j} \gamma_0 \gamma_i \psi + \bar{\psi} \delta \Gamma_{i} \gamma_0 \gamma_j \psi + \bar{\psi} \delta \Gamma_{j} \gamma_0 \gamma_i \psi + \bar{\psi} \gamma_0 \gamma_j  \tilde{\Gamma}_{i}\delta \psi + \\
    & + \bar{\psi} \gamma_0 \gamma_i \tilde{\Gamma}_{j}\delta \psi + \bar{\psi} \gamma_0 \gamma_j \delta \Gamma_{i}\psi + \bar{\psi} \gamma_0 \gamma_i \delta \Gamma_{j}\psi \bigg] = - H a^2 \bigg[ \delta_{ij} \delta \bar{\psi} \gamma_0 \psi - \delta_{ij} \bar{\psi} \gamma_0 \delta \psi \bigg],  
    \end{split}
    \label{Pert_ID}
\end{equation}
which, substituted in eq. \ref{SETensij_1}, leads to

\begin{equation}
    \begin{split}
    \delta T_{i j} =  & -\frac{i}{4}\Big[ \bar{\psi} \tilde{\Gamma}_{(i} \tilde{D}_{j)}\delta \psi -\frac{i}{2}\bar{\psi} \tilde{\Gamma}_{(i} \delta \omega^{IJ}_{\hphantom{I}\hphantom{J}j)} J_{IJ} \psi - \tilde{D}_{(j} \delta \bar{\psi} \tilde{\Gamma}_{i)}\psi -\frac{i}{2}\bar{\psi} \delta \omega^{IJ}_{\hphantom{I}\hphantom{J}(j} J_{IJ} \tilde{\Gamma}_{i)}\psi \\
    & - H a^2 \delta _{ij} \delta \bar{\psi} \gamma_0 \psi + H a^2 \delta _{ij} \bar{\psi} \gamma_0 \delta \psi \Big] + 2 a^2 \Phi \delta_{ij} \mathcal{E} + a^2 \delta_{ij} \delta \mathcal{E}. \\
    \end{split}
    \label{SETensij_int}
\end{equation}
Furthermore, if we take into account the result obtained in eq. \ref{PertConn}, the following identity holds:

\begin{equation}
    \begin{split}
    & -\frac{i}{2}\bar{\psi} \tilde{\Gamma}_{(i} \delta \omega^{IJ}_{\hphantom{I}\hphantom{J}j)} J_{IJ} \psi -\frac{i}{2}\bar{\psi} \delta \omega^{IJ}_{\hphantom{I}\hphantom{J}(j} J_{IJ} \tilde{\Gamma}_{i)}\psi = \\
    & -\frac{1}{2}\bar{\psi} \tilde{\Gamma}_{(i} \Bigg[ \Big[ \dot{a} \big( \Phi - \Psi \big) + a \dot{\Phi}\Big] \gamma_0 \gamma_{j)} + \delta^{kl} \partial_l \Phi \gamma_{j)} \gamma_k + \partial_{j)} \Phi \Bigg] \psi +\\
    &-\frac{1}{2}\bar{\psi} \Bigg[ \Big[ \dot{a} \big( \Phi - \Psi \big) + a \dot{\Phi}\Big] \gamma_0 \gamma_{(j} + \delta^{kl} \partial_l \Phi \gamma_{(j} \gamma_k + \partial_{(j} \Phi \Bigg] \tilde{\Gamma}_{i)} \psi = \\
    = & -\frac{1}{2} \bar{\psi} \Bigg[ \tilde{\Gamma}_{(i} \delta^{kl} \partial_l \Phi \gamma_{j)} \gamma_k + \tilde{\Gamma}_{(i}\partial_{j)} \Phi + \delta^{kl} \partial_l \Phi \gamma_{(j} \gamma_k \tilde{\Gamma}_{i)} + \partial_{(j} \Phi \tilde{\Gamma}_{i)}\Bigg] \psi = 0.
    \end{split}
    \label{Pert_conn_ID}
\end{equation}
In the end, if we consider that

\begin{equation}
    \begin{split}
     & \bar{\psi} \tilde{\Gamma}_{(i} \tilde{D}_{j)}\delta \psi - \tilde{D}_{(j} \delta \bar{\psi} \tilde{\Gamma}_{i)}\psi = \\
     & \bar{\psi} \tilde{\Gamma}_{(i} \partial_{j)}\delta \psi - \partial_{(j} \delta \bar{\psi} \tilde{\Gamma}_{i)}\psi -\frac{i}{2}\bar{\psi} \tilde{\Gamma}_{(i} \tilde{\omega}^{IJ}_{\hphantom{I}\hphantom{J}j)} J_{IJ} \delta \psi -\frac{i}{2} \delta \bar{\psi} \tilde{\omega}^{IJ}_{\hphantom{I}\hphantom{J}(j} J_{IJ} \tilde{\Gamma}_{i)} \psi = \\
    = & \bar{\psi} \tilde{\Gamma}_{(i} \partial_{j)}\delta \psi - \partial_{(j} \delta \bar{\psi} \tilde{\Gamma}_{i)}\psi + H a^2 \delta_{ij} \delta \bar{\psi} \gamma_0 \psi - H a^2 \delta_{ij} \bar{\psi} \gamma_0 \delta \psi,
    \end{split}
    \label{Pert_ID_2}
\end{equation}
we arrive at

\begin{equation}
    \begin{split}
    \delta T_{i j} =  & -\frac{i}{4}\Big[ \bar{\psi} \Gamma_{(i} \partial_{j)}\delta \psi  - \partial_{(j} \delta \bar{\psi} \Gamma_{i)}\psi + H a^2 \delta_{ij} \delta \bar{\psi} \gamma_0 \psi - H a^2 \delta_{ij} \bar{\psi} \gamma_0 \delta \psi\\
    & - H a^2 \delta _{ij} \delta \bar{\psi} \gamma_0 \psi + H a^2 \delta _{ij} \bar{\psi} \gamma_0 \delta \psi \Big] + 2 a^2 \Phi \delta_{ij} \mathcal{E} + a^2 \delta_{ij} \delta \mathcal{E} = \\
    = & -\frac{i}{4}\Big[ \bar{\psi} \Gamma_{(i} \partial_{j)}\delta \psi  - \partial_{(j} \delta \bar{\psi} \Gamma_{i)}\psi \Big] + 2 a^2 \Phi \delta_{ij} \mathcal{E} + a^2 \delta_{ij} \delta \mathcal{E}.
    \end{split}
    \label{SETensij_2}
\end{equation}

\subsection{Covariant $0i$-component}
The last component of the SET that we have to compute is the covariant 0$i$-component:

\begin{equation}
    \begin{split}
    \delta T_{0 i} =  &-\frac{i}{4}\Big[ \delta \bar{\psi} \tilde{\Gamma}_{(0} \tilde{D}_{i)}\psi + \bar{\psi} \tilde{\Gamma}_{(0} \tilde{D}_{i)}\delta \psi -\frac{i}{2}\bar{\psi} \tilde{\Gamma}_{(0} \delta \omega^{IJ}_{\hphantom{I}\hphantom{J}i)} J_{IJ} \psi + \bar{\psi} \delta \Gamma_{(0} \tilde{D}_{i)}\psi + \\
   & - \tilde{D}_{(i} \delta \bar{\psi} \tilde{\Gamma}_{0)}\psi - \tilde{D}_{(i} \bar{\psi} \tilde{\Gamma}_{0)}\delta \psi -\frac{i}{2}\bar{\psi} \delta \omega^{IJ}_{\hphantom{I}\hphantom{J}(i} J_{IJ} \tilde{\Gamma}_{0)}\psi - \tilde{D}_{(i} \bar{\psi} \delta \Gamma_{0)}\psi \Big] = \\
   = & -\frac{i}{4}\Big[ \delta \bar{\psi} \tilde{\Gamma}_{(0} \partial_{i)}\psi -\frac{i}{2} \delta \bar{\psi} \tilde{\Gamma}_{(0} \tilde{\omega}^{IJ}_{\hphantom{IJ}i)}J_{IJ}\psi + \bar{\psi} \tilde{\Gamma}_{(0} \partial_{i)}\delta \psi -\frac{i}{2}\bar{\psi} \tilde{\Gamma}_{(0} \tilde{\omega}^{IJ}_{\hphantom{IJ}i)}J_{IJ}\delta \psi + \\
   & -\frac{i}{2}\bar{\psi} \tilde{\Gamma}_{(0} \delta \omega^{IJ}_{\hphantom{I}\hphantom{J}i)} J_{IJ} \psi + \bar{\psi} \delta \Gamma_{(0} \partial_{i)}\psi -\frac{i}{2}\bar{\psi} \delta \Gamma_{(0} \tilde{\omega}^{IJ}_{\hphantom{IJ}i)}J_{IJ}\psi \Big] + h.c. \hspace{2mm}.
    \end{split}
    \label{SETens0i_1}
\end{equation}
This expression seems quite difficult to handle with respect to those seen in previous subsections. Nevertheless, we can notice that a lot of terms cancel each other  out. In fact, 

\begin{equation}
    \begin{split}
    & -\frac{i}{2} \delta \bar{\psi} \tilde{\Gamma}_{(0} \tilde{\omega}^{IJ}_{\hphantom{IJ}i)}J_{IJ}\psi -\frac{i}{2} \delta \bar{\psi} \tilde{\omega}^{IJ}_{\hphantom{IJ}(i}J_{IJ}\tilde{\Gamma}_{0)}\psi = \\
    & -\frac{\dot{a}}{2} \delta \bar{\psi} \Big[ \gamma_0 \gamma_0 \gamma_i + \gamma_0 \gamma_i \gamma_0 \Big] \psi = -\frac{\dot{a}}{2} \delta \bar{\psi} \Big[ \gamma_i - \gamma_i \Big] \psi= 0,
    \end{split}
    \label{Conn_ID}
\end{equation}
as well as the terms $-\frac{i}{2}\bar{\psi} \delta \Gamma_{(0} \tilde{\omega}^{IJ}_{\hphantom{IJ}i)}J_{IJ}\psi + h.c. = -\frac{i}{2} \Psi \bar{\psi} \tilde{\Gamma}_{(0} \tilde{\omega}^{IJ}_{\hphantom{IJ}i)}J_{IJ}\psi +h.c.$ or the terms like those above, but with $\delta \bar{\psi}$ and $\psi$ inverted, since the gamma matrices involved are the same.
\newline
All these results lead to

\begin{equation}
    \begin{split}
    \delta T_{0 i} = & -\frac{i}{4}\Big[ \delta \bar{\psi} \tilde{\Gamma}_{(0} \partial_{i)}\psi + \bar{\psi} \tilde{\Gamma}_{(0} \partial_{i)}\delta \psi -\frac{i}{2}\bar{\psi} \tilde{\Gamma}_{(0} \delta \omega^{IJ}_{\hphantom{I}\hphantom{J}i)} J_{IJ} \psi + \bar{\psi} \delta \Gamma_{(0} \partial_{i)}\psi \Big] + h.c. \hspace{2mm}.
    \end{split}
    \label{SETens0i_int}
\end{equation}
Furthermore, if we consider, as remembered before, that the background spinorial field does not depend on spatial coordinates, it is straightforward to see that

\begin{equation}
    \bar{\psi} \delta \Gamma_{(0} \partial_{i)}\psi - \partial_{(i} \bar{\psi} \delta \Gamma_{0)}\psi = \bar{\psi} \delta \Gamma_{i} \dot{\psi} - \dot{\bar{\psi}} \delta \Gamma_{i}\psi = a\Phi \Big[ \bar{\psi} \gamma_{i} \dot{\psi} - \dot{\bar{\psi}} \gamma_{i}\psi \Big],
    \label{Pert_Gam_ID_2a}
\end{equation}
which is equal to zero, since

\begin{equation}
    \begin{split}
    & \bar{\psi} \gamma_{i} \dot{\psi} - \dot{\bar{\psi}} \gamma_{i}\psi =  \bar{\psi} \gamma_{i} \bigg( -\frac{3}{2}H\psi +i \gamma_0 \frac{\delta W}{\delta \bar{\psi}} \bigg) - \bigg( -\frac{3}{2}H\bar{\psi} +i \bigg(\frac{\delta W}{\delta \bar{\psi}} \bigg)^{\dagger} \bigg) \gamma_i \psi = \\
    = & i\bar{\psi} \gamma_{i} \bigg( \gamma_0 \frac{\delta W}{\delta \bar{\psi}} \bigg) - i\bigg( \bigg(\frac{\delta W}{\delta \bar{\psi}} \bigg)^{\dagger} \bigg) \gamma_i \psi = \\
    = & i \Big[\bar{\psi} \gamma_{i} \gamma_0 \psi \big( U' +2 \xi E \big) -2i \xi B \bar{\psi} \gamma_{i} \gamma_0 \gamma^5 \psi -  \psi^\dagger \gamma_i \psi \big( U' + 2 \xi E \big) - 2i \xi B \psi^\dagger \gamma^5 \gamma_i \psi  \Big] = \\
    = & i \Big[\bar{\psi} \gamma_{i} \gamma_0 \psi \big( U' +2 \xi E \big) -2i \xi B \bar{\psi} \gamma_{i} \gamma_0 \gamma^5 \psi -  \bar{\psi} \gamma_i \gamma_0 \psi \big( U' + 2 \xi E \big) + 2i \xi B \bar{\psi} \gamma_i \gamma_0 \gamma^5 \psi  \Big] = 0.
    \end{split}
    \label{Pert_Gam_ID_2b}
\end{equation}
If we add to this the fact that

\begin{equation}
    \begin{split}
    & -\frac{i}{2}\bar{\psi} \tilde{\Gamma}_{(0} \delta \omega^{IJ}_{\hphantom{I}\hphantom{J}i)} J_{IJ} \psi -\frac{i}{2}\bar{\psi} \delta \omega^{IJ}_{\hphantom{I}\hphantom{J}(i} J_{IJ} \tilde{\Gamma}_{0)} \psi = \\ = & -\frac{i}{2} \bar{\psi} \Big[-i \delta^{jl} \partial_l \Psi \gamma_i \gamma_0 \gamma_j -i \delta^{jl} \partial_l \Psi \gamma_0 \gamma_j \gamma_i -i \Big[ \dot{a} \big( \Phi - \Psi \big) + a \dot{\Phi}\Big] \gamma_i \\
    & - i \delta^{jl} \partial_l \Phi \gamma_0 \gamma_i \gamma_j -2i \partial_i \Phi \gamma_0 -i \Big[ \dot{a} \big( \Phi - \Psi \big) + a \dot{\Phi}\Big] \gamma_0 \gamma_i \gamma_0 - i \delta^{jl} \partial_l \Phi \gamma_i \gamma_j \gamma_0 \Big] \psi = \\
    = & -\frac{i}{2} \bar{\psi} \Big[i \delta^{jl} \partial_l \Psi \gamma_0 \gamma_i \gamma_j + i \delta^{jl} \partial_l \Psi \gamma_0 \gamma_i \gamma_j +2i \delta_{ij}\delta^{jl} \partial_l \Psi \gamma_0 -2i \partial_i \Phi \gamma_0 - 2i \delta^{jl} \partial_l \Phi \gamma_0 \gamma_i \gamma_j \Big] \psi = \\
    = & -\frac{i}{2} \bar{\psi} \Big[ 2i \delta^{jl} \partial_l \Big( \Psi - \Phi \Big) \gamma_0 \gamma_i \gamma_j +2i \gamma_0 \partial_i \Big( \Psi - \Phi \Big) \Big] \psi = \\
    = &  \delta^{jl} \partial_l \Big( \Psi - \Phi \Big) \bar{\psi} \gamma_0 \gamma_i \gamma_j \psi + \partial_i \Big( \Psi - \Phi \Big) \bar{\psi} \gamma_0 \psi,
    \end{split}
    \label{Pert_Conn_ID_3}
\end{equation} 
the expression of the covariant $0i$-component of the perturbed SET, in the end, becomes

\begin{equation}
    \begin{split}
    \delta T_{0 i} = & -\frac{i}{4}\Big[ \delta \bar{\psi} \tilde{\Gamma}_{i} \dot{\psi} + \bar{\psi} \tilde{\Gamma}_{(0} \partial_{i)}\delta \psi - \dot{\bar{\psi}} \tilde{\Gamma}_i \delta \psi - \partial_{(i} \delta \bar{\psi} \tilde{\Gamma}_{0)} \psi + \\
    & + \delta^{jl} \partial_l \Big( \Psi - \Phi \Big) \bar{\psi} \gamma_0 \gamma_i \gamma_j \psi + \partial_i \Big( \Psi - \Phi \Big) \bar{\psi} \tilde{\Gamma}_0 \psi \Big].
    \end{split}
    \label{SETens0i_2}
\end{equation}

\subsection{Mixed components}
Now, we can compute the mixed components of the perturbed SET.
\newline
From the general theory of cosmological perturbations, we know that, at first order,

\begin{equation}
    \delta T^\mu_{\hphantom{\mu}\nu} = g^{\mu \sigma} \delta T_{\sigma \nu} + \delta g^{\mu \sigma}T_{\sigma \nu} = g^{\mu \sigma} \delta T_{\sigma \nu} - g^{\mu \rho} \delta g_{\rho \lambda} g ^{\lambda \sigma} T_{\sigma \nu}.
    \label{Ind_Change}
\end{equation}
This relation directly leads to

\begin{equation}
    \begin{split}
    & \delta T^0_{\hphantom{0}0} = g^{0 \sigma} \delta T_{\sigma 0} - g^{0 \rho} \delta g_{\rho \lambda} g ^{\lambda \sigma} T_{\sigma 0} = g^{0 0} \delta T_{0 0} - g^{00} \delta g_{00} g^{00} T_{00} = -\delta T_{0 0} + 2 \Psi T_{00}, \\
    & \delta T^0_{\hphantom{0}i} = g^{0 \sigma} \delta T_{\sigma i} - g^{0 \rho} \delta g_{\rho \lambda} g ^{\lambda \sigma} T_{\sigma i} = g^{00} \delta T_{0i} - g^{00} \delta g_{0m} g ^{ml} T_{li} = -\delta T_{0i} = - a^2 \delta T^i_{\hphantom{i}0}, \\
    & \delta T^i_{\hphantom{i}j} = g^{i m} \delta T_{m j} - g^{i n} \delta g_{n m} g ^{m l} T_{l j} =  g^{i m} \delta T_{m j} - \frac{2 \Phi}{a^2}\delta^{in} \delta_{nm}  \delta^{ml}T_{l j} = g^{i m} \delta T_{m j} - 2 \Phi g^{il}T_{l j}.
    \end{split}
    \label{SET_Mixed_Comp}
\end{equation}
Therefore, the mixed components we were looking for are

\begin{equation}
    \begin{split}
   \delta T^0_{\hphantom{0}0} = &-\frac{i}{2}\Big[ \delta \bar{\psi} \tilde{\Gamma}^{0} \dot{\psi} + \bar{\psi} \tilde{\Gamma}^{0} (\dot{\delta \psi}) - ( \dot{\delta \bar{\psi}}) \tilde{\Gamma}^{0}\psi - \dot{\bar{\psi}} \tilde{\Gamma}^{0}\delta \psi \Big] - \Psi \Big[ W - \mathcal{E}  \Big] + \delta \mathcal{E} + 2 \Psi W = \\
   = & -\frac{i}{2}\Big[ \delta \bar{\psi} \tilde{\Gamma}^{0} \dot{\psi} + \bar{\psi} \tilde{\Gamma}^{0} (\dot{\delta \psi}) - ( \dot{\delta \bar{\psi}}) \tilde{\Gamma}^{0}\psi - \dot{\bar{\psi}} \tilde{\Gamma}^{0}\delta \psi \Big] + \Psi \Big[ W + \mathcal{E}  \Big] + \delta \mathcal{E},  
    \end{split}
    \label{SETens00Mix_Derived}
\end{equation}

\begin{equation}
    \begin{split}
    \delta T^0_{\hphantom{0} i} = &\frac{i}{4}\Big[ \delta \bar{\psi} \tilde{\Gamma}_{i} \dot{\psi} + \bar{\psi} \tilde{\Gamma}_{(0} \partial_{i)}\delta \psi - \dot{\bar{\psi}} \tilde{\Gamma}_i \delta \psi - \partial_{(i} \delta \bar{\psi} \tilde{\Gamma}_{0)} \psi + \\
    & + \delta^{jl} \partial_l \Big( \Psi - \Phi \Big) \bar{\psi} \gamma_0 \gamma_i \gamma_j \psi + \partial_i \Big( \Psi - \Phi \Big) \bar{\psi} \tilde{\Gamma}_0 \psi \Big],
    \end{split}
    \label{SETens0iMix_Derived}
\end{equation}

\begin{equation}
    \begin{split}
    \delta T^i_{\hphantom{i} j} = & -\frac{i}{4}g^{im}\Big[ \bar{\psi} \tilde{\Gamma}_{(m} \partial_{j)}\delta \psi  - \partial_{(j} \delta \bar{\psi} \tilde{\Gamma}_{m)}\psi \Big] + 2 \Phi \mathcal{E} \delta^i_{\hphantom{i}j} + \delta^i_{\hphantom{i}j} \delta \mathcal{E} - 2 \Phi \mathcal{E}\delta^i_{\hphantom{i}j} = \\
    = & -\frac{i}{4}g^{im}\Big[ \bar{\psi} \tilde{\Gamma}_{(m} \partial_{j)}\delta \psi  - \partial_{(j} \delta \bar{\psi} \tilde{\Gamma}_{m)}\psi \Big] + \delta^i_{\hphantom{i}j} \delta \mathcal{E}.
    \end{split}
    \label{SETensijMix_Derived}
\end{equation}

\section{Pressure and density perturbations}\label{Rel_Dens_Press}

In order to obtain the relation between the pressure perturbation and the density one, we need, as stated in the main text, the perturbed Dirac equation \ref{PertDirEq} and the perturbations of $W$ and $\mathcal{E}$. These lasts are

\begin{equation}
    \begin{split}
    & \delta W = U' \delta E + 2 \xi\big( E \delta E + B \delta B\big), \\
    & \delta \mathcal{E} =  \delta W - 2 U' \delta E + U'\delta E + U'' E \delta E = 2 \xi\big( E \delta E + B \delta B\big) + U'' E \delta E. 
    \end{split}
    \label{PertOfWandE}
\end{equation}
Considering the result obtained in eq. \ref{Pert_SET_Trace} and the expressions of the mixed components of the perturbed SET just got, we obtain

\begin{equation}
    \begin{split}
    & 2\delta \rho - 6 \delta P = -2 \delta T^\mu_{\hphantom{\mu}\mu} = \\
    = & i\Big[ \delta \bar{\psi} \tilde{\Gamma}^{0} \dot{\psi} + \bar{\psi} \tilde{\Gamma}^{0} (\dot{\delta \psi}) - ( \dot{\delta \bar{\psi}}) \tilde{\Gamma}^{0}\psi - \dot{\bar{\psi}} \tilde{\Gamma}^{0}\delta \psi \Big] -2 \Psi \Big[ W + \mathcal{E}  \Big] -2 \delta \mathcal{E} + \\
    & i \Big[ \bar{\psi} \tilde{\Gamma}^{i} \partial_{i}\delta \psi - \partial_{i} \delta \bar{\psi} \tilde{\Gamma}^{i}\psi \Big] - 6 \delta \mathcal{E} = \\
    = & i\Big[ \delta \bar{\psi} \tilde{\Gamma}^{0} \dot{\psi} - \dot{\bar{\psi}} \tilde{\Gamma}^{0}\delta \psi \Big] -2 \Psi \Big[ W + \mathcal{E}  \Big] - 8 \delta \mathcal{E} + i\Big[ \bar{\psi} \tilde{\Gamma}^{\mu} \partial_{\mu}\delta \psi - \partial_{\mu} \delta \bar{\psi} \tilde{\Gamma}^{\mu}\psi \Big].
    \end{split}
    \label{Press_Dens_rel}
\end{equation}
Furthermore, taking into account the fact that $\delta E = \delta \bar{\psi}\psi + \bar{\psi}\delta \psi$ and $\delta B = -i\delta \bar{\psi} \gamma^5 \psi -i \bar{\psi} \gamma^5 \delta \psi$, given the relations for $\delta W$ and $\delta \mathcal{E}$ presented at the beginning of this section, we have that

\begin{equation}
    \begin{split}
    & \delta \bar{\psi} \tilde{\Gamma}^{0} \dot{\psi} - \dot{\bar{\psi}} \tilde{\Gamma}^{0}\delta \psi = \delta \bar{\psi} \bigg(-\frac{3}{2}H\tilde{\Gamma}^0\psi -i\frac{\delta W}{\delta \bar{\psi}}\bigg) - \bigg(-\frac{3}{2}H\bar{\psi} +i\bigg(\frac{\delta W}{\delta \bar{\psi}}\bigg)^{\dagger} \bigg)\tilde{\Gamma}^0 \delta \psi = \\
    = & -\frac{3}{2}H\Big[\delta \bar{\psi} \tilde{\Gamma}^0 \psi - \bar{\psi} \tilde{\Gamma}^0 \delta \psi \Big] - i \Big(U' \delta E + 2 \xi E \delta E + 2 \xi B \delta B\Big) = \\
    = & -\frac{3}{2}H\Big[\delta \bar{\psi} \tilde{\Gamma}^0 \psi - \bar{\psi} \tilde{\Gamma}^0 \delta \psi \Big] - i \delta W.
    \end{split}
    \label{Pert_Gamm0_rel}
\end{equation}
Then, by exploiting the perturbed Dirac equation, we can substitute the last term in eq. \ref{Press_Dens_rel}.
In fact, the equation leads to 

\begin{equation}
    \begin{split}
    & i\Big[ \bar{\psi} \tilde{\Gamma}^{\mu} \partial_{\mu}\delta \psi - \partial_{\mu} \delta \bar{\psi} \tilde{\Gamma}^{\mu}\psi \Big] = i \bar{\psi} \tilde{\Gamma}^{\mu} \partial_{\mu}\delta \psi + h.c. = \\
    = & -i\bar{\psi}\delta \Gamma^\mu \tilde{D}_\mu \psi -\frac{1}{2}\bar{\psi}\tilde{\Gamma}^\mu \delta \omega^{IJ}_{\hphantom{IJ}\mu} J_{IJ}\psi -\frac{1}{2}\bar{\psi} \tilde{\Gamma}^\mu \tilde{\omega}^{IJ}_{\hphantom{IJ}\mu} J_{IJ} \delta \psi + \bar{\psi} \delta \Bigg( \frac{\delta W}{\delta \bar{\psi}}\Bigg) + h.c. \hspace{2mm} .
    \end{split}
    \label{Pert_Dirac_eq_2}
\end{equation}
Now, if we consider that

\begin{equation}
    \begin{split}
    -\frac{1}{2}\bar{\psi}\tilde{\Gamma}^\mu \tilde{\omega}^{IJ}_{\hphantom{IJ}\mu} J_{IJ} \delta \psi + h.c. = -i\frac{3}{2} H \bar{\psi} \tilde{\Gamma}^0 \delta \psi + i\frac{3}{2} H \delta \bar{\psi} \tilde{\Gamma}^0 \psi = -i\frac{3}{2} H \Big[ \bar{\psi} \tilde{\Gamma}^0 \delta \psi - \delta \bar{\psi} \tilde{\Gamma}^0 \psi \Big],
    \end{split}
    \label{Pert_Gamm_Rel_2b}
\end{equation}
and that, given $\delta \Gamma^\mu :=\delta g^{\mu \sigma} \tilde{\Gamma}_\sigma + g^{\mu \sigma} \delta \Gamma_\sigma$,

\begin{equation}
    \begin{split}
    & -i\bar{\psi}\delta \Gamma^\mu \tilde{D}_\mu \psi + h.c. = \\
    = & -i\bar{\psi}\delta \Gamma^0 \dot{\psi} -\frac{1}{2} \bar{\psi}\delta \Gamma^k \tilde{\omega}^{IJ}_{\hphantom{IJ}k} J_{IJ}\psi + h.c. =  i \Psi \bar{\psi} \tilde{\Gamma}^0 \dot{\psi} + \frac{1}{2} \Phi \bar{\psi} \tilde{\Gamma}^k \tilde{\omega}^{IJ}_{\hphantom{IJ}k} J_{IJ}\psi + h.c. = \\
    = & i \Psi \Big[ \bar{\psi} \tilde{\Gamma}^0 \dot{\psi} - \dot{\bar{\psi}} \tilde{\Gamma}^0 \psi \Big] - \Phi \bigg[ -\frac{1}{2}\bar{\psi}\tilde{\Gamma}^\mu \tilde{\omega}^{IJ}_{\hphantom{IJ}\mu} J_{IJ} \psi + h.c. \bigg] = \\
    = & 2 \Psi \Big[\mathcal{E} + W \Big] + i \Phi \frac{3}{2} H \Big[ \bar{\psi} \tilde{\Gamma}^0 \psi - \bar{\psi} \tilde{\Gamma}^0 \psi \Big] = 2 \Psi \Big[\mathcal{E} + W \Big],
    \end{split}
    \label{Pert_Gamm_Rel_3}
\end{equation}
we arrive at the relation:

\begin{equation}
    \begin{split}
    & 2\delta \rho - 6 \delta P = \\
    = & i\bigg[ -\frac{3}{2}H\Big[\delta \bar{\psi} \tilde{\Gamma}^0 \psi - \bar{\psi} \tilde{\Gamma}^0 \delta \psi \Big] - i \delta W \bigg] -2 \Psi \Big[ W + \mathcal{E}  \Big] - 8 \delta \mathcal{E} + 2 \Psi \Big[\mathcal{E} + W \Big] + \\
    &-i\frac{3}{2} H \Big[ \bar{\psi} \tilde{\Gamma}^0 \delta \psi - \delta \bar{\psi} \tilde{\Gamma}^0 \psi \Big] + \Bigg[ -\frac{1}{2}\bar{\psi}\tilde{\Gamma}^\mu \delta \omega^{IJ}_{\hphantom{IJ}\mu} J_{IJ}\psi + \bar{\psi} \delta \Bigg( \frac{\delta W}{\delta \bar{\psi}}\Bigg) + h.c. \Bigg] = \\
    = & \delta W - 8 \delta \mathcal{E} + \Bigg[ -\frac{1}{2}\bar{\psi}\tilde{\Gamma}^\mu \delta \omega^{IJ}_{\hphantom{IJ}\mu} J_{IJ}\psi + \bar{\psi} \delta \Bigg( \frac{\delta W}{\delta \bar{\psi}}\Bigg) + h.c. \Bigg].
    \end{split}
    \label{Press_Dens_rel_int}
\end{equation}
The latter can be further simplified, considering that

\begin{equation}
    \begin{split}
    & \bar{\psi} \delta \Bigg( \frac{\delta W}{\delta \bar{\psi}}\Bigg) + h.c. = \bar{\psi} \delta \Bigg( \frac{\delta W}{\delta \bar{\psi}}\Bigg) + \delta \Bigg( \frac{\delta W}{\delta \bar{\psi}}\Bigg)^{\dagger} \gamma^0 \psi = \\
    = & \Big( U' + 2 \xi E \Big) \delta E + 2 \xi B \delta B + \Big( 2U'' E + 4 \xi E \Big) \delta E + 4 \xi B \delta B = \\
    = & \delta W + 2 \delta \mathcal{E}
    \end{split}
    \label{Pert_Delta_deltaW}
\end{equation}
and that

\begin{equation}
    \begin{split}
     & -\frac{1}{2}\bar{\psi}\tilde{\Gamma}^\mu \delta \omega^{IJ}_{\hphantom{IJ}\mu} J_{IJ}\psi + h.c. = -\frac{1}{2}\bar{\psi}\tilde{\Gamma}^\mu \delta \omega^{IJ}_{\hphantom{IJ}\mu} J_{IJ}\psi -\frac{1}{2}\bar{\psi} \delta \omega^{IJ}_{\hphantom{IJ}\mu} J_{IJ}\tilde{\Gamma}^\mu\psi = \\
     = & -\frac{1}{2} \bar{\psi} \bigg[  -\frac{i}{a} \tilde{\Gamma}^0 \delta^{il} \partial_l \Psi \gamma_0 \gamma_i -i \tilde{\Gamma}^k \Big[ \dot{a} \big( \Phi - \Psi \big) + a \dot{\Phi}\Big] \gamma_0 \gamma_k - i \tilde{\Gamma}^k \delta^{jl} \partial_l \Phi \gamma_k \gamma_j -i \tilde{\Gamma}^k \partial_k \Phi + \\
     & -\frac{i}{a} \delta^{il} \partial_l \Psi \gamma_0 \gamma_i \tilde{\Gamma}^0 -i \Big[ \dot{a} \big( \Phi - \Psi \big) + a \dot{\Phi}\Big] \gamma_0 \gamma_k \tilde{\Gamma}^k - i \delta^{jl} \partial_l \Phi \gamma_k \gamma_j \tilde{\Gamma}^k -i \tilde{\Gamma}^k \partial_k \Phi\bigg] \psi = \\
     = & -\frac{1}{2} \bar{\psi} \Big[- i \tilde{\Gamma}^k \delta^{jl} \partial_l \Phi \gamma_k \gamma_j - i \delta^{jl} \partial_l \Phi \gamma_k \gamma_j \tilde{\Gamma}^k -2i \tilde{\Gamma}^k \partial_k \Phi\Big] \psi = \\
     = & -\frac{1}{2} \bar{\psi} \Big[- i \tilde{\Gamma}^k \delta^{jl} \partial_l \Phi \gamma_k \gamma_j + i \tilde{\Gamma}^k \delta^{jl} \partial_l \Phi \gamma_k \gamma_j + 2i \delta^{jl} \partial_l \Phi \delta_{kj} \tilde{\Gamma}^k -2i \tilde{\Gamma}^k \partial_k \Phi\Big] \psi = \\
     = &  -\frac{1}{2} \bar{\psi} \Big[2i \partial_k \Phi \tilde{\Gamma}^k -2i \tilde{\Gamma}^k \partial_k \Phi\Big] \psi = 0.
    \end{split}
    \label{Pert_Conn_ID_4}
\end{equation}
These results give

\begin{equation}
    \begin{split}
    & 2\delta \rho - 6 \delta P = \\
    = & \delta W - 8 \delta \mathcal{E} + \delta W + 2 \delta \mathcal{E} = 2 \delta W - 6 \delta \mathcal{E} = 2 U' \delta E - 2 U'' E \delta E - 4 \delta \mathcal{E},
    \end{split}
    \label{PressDens_1}
\end{equation}
which, in the end, leads to the relation we were looking for:

\begin{equation}
     \delta \rho = 3 \delta P + U' \delta E - U'' E \delta E - 2  \delta \mathcal{E}.
    \label{PressDens_2}
\end{equation}

\chapter{The (1+1+2)-Decomposition in the Cosmological Background}\label{112_Decomp_Cosmo_Back}
In this appendix, we carry out all calculations required to prove that the SET of the spinorial fluid, in the context of the cosmological background, can be recast as that of a Relativistic Dust.
\newline
As anticipated in Section \ref{Cosmo_SET_Decomp} of Chapter \ref{SET_Decomp}, we need to demonstrate that the quantities defined in eqs. \ref{Scalars_Spin_Field_Def}, \ref{Vectors_Spin_Field_Def}, \ref{Tensor_Spin_Field_Def} and \ref{Dir_Der_Spin_Field_Def} vanish. We will begin from scalar ones. Then, we will pass to vector ones, successively to the tensor one and, in the end, to directional derivatives.
\newline
\newline
We remember that the unique non-vanishing Christoffel Symbols, in this case, are

\begin{equation}
    \Gamma^0_{\hphantom{0}ij} = a \dot{a} \delta_{ij}, \hspace{3cm} \Gamma^i_{\hphantom{i}0j} = H \delta^i_{\hphantom{i}j}.
    \label{Christ_Symb_Cosmo_Back}
\end{equation}
Since we cannot obtain any result without making an assumption on the four-velocity, we will assume that, in the Coordinate RF, $u^i = 0$. In this way, we will be able to prove that the background SET can be written as that of a Relativistic Dust if and only if the fluid is at rest in the Coordinate RF.

\section{Scalar quantities}
Since $\epsilon^{\mu \nu}$ is antisymmetric, we can substitute the covariant derivatives in eq. \ref{Scalars_Spin_Field_Def} with partial derivatives. This leads to the following expressions of the scalar quantities:

\begin{equation}
    \Omega = \frac{1}{2} \partial_\mu u_\nu \epsilon^{\mu \nu}, \hspace{3cm}  \xi = \frac{1}{2}\partial_\mu s_\nu \epsilon^{\mu \nu}.
    \label{Scalars_Spin_Field_Def_2}
\end{equation}
It is immediate to notice that $\Omega$ vanishes, since $u^\mu$ must be independent of spatial coordinates, as the Cosmological Principle requires.
\newline
Regarding $\xi$, instead, all partial derivatives of $s_\mu$, but $\partial_0 s_j$, vanish. However, it is easy to see that the contribution due to this factor is null, since the only non-vanishing component of $u_\mu$ is $u_0$ and $\epsilon^{0j0\rho} = 0$.

\section{Vector quantities}
If we consider that, as a consequence of our choice about $u^\mu$ in this RF, $N_0^{\hphantom{0}0} = N_0^{\hphantom{0}i} = N_i^{\hphantom{i}0} = 0$, we immediately obtain that $\Sigma_0 = \Omega^0 = \mathcal{A}_0 = a_0 = 0$.
\newline
Then, taking into account the explicit expressions of the Christoffel Symbols and the fact that $1 = s_\mu s^\mu = s^i s^j g_{ij} = s^i s^j \delta_{ij}a^2$, we get 

\begin{equation}
    \Sigma_i = \frac{1}{2}N_i^{\hphantom{i}j}s^l \nabla_{(j}u_{l)} = N_i^{\hphantom{i}j}s^l \Gamma^0_{\hphantom{0}jl}u_0 = N_i^{\hphantom{i}j}s^l a^2\delta_{jl} H u_0 = N_i^{\hphantom{i}j}s_j H u_0 = 0,
    \label{Sigma_Identity}
\end{equation}

\begin{equation}
    \Omega^i = \frac{1}{2}N^i_{\hphantom{i}j}\epsilon^{j0\sigma \tau} u_0\nabla_\sigma u_\tau = \frac{1}{2}N^i_{\hphantom{i}j}\epsilon^{j0\sigma \tau} u_0 \partial_\sigma u_\tau = 0,
    \label{Omega_Identity}
\end{equation}

\begin{equation}
    \mathcal{A}_i = N_i^{\hphantom{i}j} u^0\nabla_0 u_j = N_i^{\hphantom{i}j} u^0\Gamma^0_{\hphantom{0}0j}u_0 = 0,
    \label{A_Cors_Identity}
\end{equation}

\begin{equation}
    a_i = N_i^{\hphantom{i}j} s^k\nabla_k s_j = N_i^{\hphantom{i}j} s^k\Gamma^l_{\hphantom{l}kj}s_l = 0.
    \label{a_Identity}
\end{equation}
Therefore, under the assumption made, all vectors quantities vanish.

\section{The tensor quantity and the directional derivatives}
It is straightforward to see, taking into account our hypotheses, that $\nabla_\rho u_\sigma$ is different from zero only in one case: $\nabla_i u_j = \Gamma^0_{\hphantom{0}ij}u_0 = a^2 H \delta_{ij}u_0$.
\newline
Hence, given the expression of $\Sigma_{\mu \nu}$ and remembering what we stated in the previous section about $N^\nu_{\hphantom{\nu}\mu}$, we obtain

\begin{equation}
    \begin{split}
        \Sigma_{\mu \nu} =&  \frac{1}{2}(N^i_{\hphantom{i}\mu}N^j_{\hphantom{j}\nu} + N^i_{\hphantom{i}\nu}N^j_{\hphantom{j}\mu} - N_{\mu \nu}N^{ij})\Gamma^0_{\hphantom{0}ij}u_0 = \\
        = &\frac{1}{2}(N^i_{\hphantom{i}\mu}N^j_{\hphantom{j}\nu} + N^i_{\hphantom{i}\nu}N^j_{\hphantom{j}\mu} - N_{\mu \nu}N^{ij})H g_{ij}u_0 = \\
        =& \frac{1}{2}(2N_{\mu \nu} - N_{\mu \nu}N^{ij}g_{ij})H u_0 = (N_{\mu \nu} - N_{\mu \nu}) H u_0 = 0.  
    \end{split}
    \label{Tensor_Sigma_Identity}
\end{equation}
Now, it is left to be proven that the directional derivatives vanish. 
\newline
\newline
Since $\beta$ is a pseudo-scalar, its covariant derivative coincides with the ordinary one. Moreover, we know that $\beta = 0, \pi$. Therefore, we conclude that all its directional derivatives vanish.
\newline
Regarding $\ln{\phi^2}$, instead, since it is proportional to the logarithm of $a$ and the scale parameter depends only on time, we have that the unique directly non-vanishing component is

\begin{equation}
    \delta_0 = N_0^{\hphantom{0}0}\partial_0 \ln{\phi^2},
    \label{Dir_Der_Phi}
\end{equation}
which is equal to zero, since $N_0^{\hphantom{0}0} = 0$.
\newline
\newline
In conclusion, we have proven that the only one thermodynamical quantity of the spinorial fluid which, in the context of the cosmological background, does not vanish is $\rho = 2 \phi^2 m \cos{\beta}$.

\chapter{Computation of the SET in the Spherically Symmetric case}\label{Append_Spher_Symm}

In this appendix, we perform all calculations required to derive the explicit expressions of the Dirac equation and of the components of the SET reported in Chapter \ref{Spher_Symm}.

\section{Derivation of the spherically symmetric connection}\label{Connection_Spher_Symm}
In the same way as in Appendix \ref{Computation_Scalar_Pert}, in order to derive the components of the connection in the spherically symmetric case, we have to solve the structure equations \ref{Structure_Eqs} with $T^I = 0$. 
\newline
\newline
Considering the tetrad specified in eq. \ref{Tetr_Spher} and the fact that $\nu$ and $\lambda$ are functions only of $r$ and $t$, we get  

\begin{equation}
    \begin{split}
    de^0  & = - \frac{\nu'}{2} e^{\frac{\nu}{2}} dt \wedge dr,\\
    de^1 & = \frac{\dot{\lambda}}{2} e^{\frac{\lambda}{2}} dt \wedge dr,\\
    de^2 & = dr \wedge d\theta,\\
    de^3 & = \sin{\theta} \hspace{0.5mm} dr \wedge d\phi + r \cos{\theta} \hspace{0.5mm} d\theta \wedge d\phi,\\
    \end{split}
    \label{Struct_eqs_Spher}
\end{equation}
where we have denoted with $\dot{}$ the derivative with respect to $t$ and with $'$ that with respect to $r$, in order to simplify our notations. 
\newline
Thanks to these expressions we obtain

\begin{equation}
    \begin{split}
    &\omega^0_{\hphantom{0} 1} = \frac{\nu'}{2} e^{\frac{\nu - \lambda}{2}} dt + (\propto dr),\\
    &\omega^1_{\hphantom{1} 0} = \frac{\dot{\lambda}}{2} e^{\frac{\lambda - \nu}{2}} dr + (\propto dt),\\
    &\omega^2_{\hphantom{2} 1} = d\theta + (\propto dr),\\
    &\omega^3_{\hphantom{3} 1} = \sin{\theta} \hspace{0.5mm} d\phi + (\propto dr),\\
    &\omega^3_{\hphantom{3} 2} = r \cos{\theta} \hspace{0.5mm} d\phi + (\propto d\theta),\\
    \end{split}
    \label{SpherConn}
\end{equation}
the other terms being zero.
\newline
The meaning of the notation $\propto$ is the same as that in eq. \ref{NewConn}. Therefore, in order to obtain the complete expressions of these quantities, we have to use the second structure equation.
\newline
This leads to

\begin{equation}
    \begin{split}
    &\omega^{01} = \frac{\nu'}{2} e^{\frac{\nu - \lambda}{2}} dt + \frac{\dot{\lambda}}{2} e^{\frac{\lambda - \nu}{2}} dr  = - \omega^{10},\\
    &\omega^{21} = d\theta = - \omega^{12},\\
    &\omega^{31} = \sin{\theta} \hspace{0.5mm} d\phi = - \omega^{13},\\
    &\omega^{32} = r \cos{\theta} \hspace{0.5mm} d\phi = - \omega^{23},\\
    & \omega^{00} = \omega^{11} = \omega^{22} = \omega^{33} = \omega^{02} = \omega^{03} = \omega^{30} = \omega^{20} = 0.
    \end{split}
    \label{SpherConn_1}
\end{equation}
From these results we can obtain the expression of the connection defined on the Spin Bundle. 
\newline
In particular, we arrive at

\begin{equation}
    \begin{split}
    \omega^{IJ}_{\hphantom{IJ} 0} J_{IJ} = & 2\omega^{0i}_{\hphantom{IJ} 0} J_{0i} + \omega^{ij}_{\hphantom{IJ} 0} J_{ij} = 2\omega^{01}_{\hphantom{IJ} 0} J_{01} = -\frac{i\nu'}{2} e^{\frac{\nu - \lambda}{2}} \gamma_0 \gamma_1,\\
    \omega^{IJ}_{\hphantom{IJ} r} J_{IJ} = & 2\omega^{0i}_{\hphantom{IJ} r} J_{0i} + \omega^{ij}_{\hphantom{IJ} r} J_{ij} = 2\omega^{01}_{\hphantom{IJ} r} J_{01} = -\frac{i\dot{\lambda}}{2} e^{\frac{\lambda - \nu}{2}} \gamma_0 \gamma_1,\\
    \omega^{IJ}_{\hphantom{IJ} \theta} J_{IJ} = & 2\omega^{0i}_{\hphantom{IJ} \theta} J_{0i} + \omega^{ij}_{\hphantom{IJ} \theta} J_{ij} = \omega^{12}_{\hphantom{IJ} \theta} J_{12} + \omega^{21}_{\hphantom{IJ} \theta} J_{21} = 2\omega^{12}_{\hphantom{IJ} \theta} J_{12} = i \gamma_1 \gamma_2,\\
    \omega^{IJ}_{\hphantom{IJ} \phi} J_{IJ} = & 2\omega^{0i}_{\hphantom{IJ} \phi} J_{0i} + \omega^{ij}_{\hphantom{IJ} \phi} J_{ij} = \omega^{13}_{\hphantom{IJ} \phi} J_{13} + \omega^{23}_{\hphantom{IJ} \phi} J_{23} + \omega^{31}_{\hphantom{IJ} \phi} J_{31} + \omega^{32}_{\hphantom{IJ} \phi} J_{32} = \\
    = & 2\omega^{13}_{\hphantom{IJ} \phi} J_{13} + 2\omega^{23}_{\hphantom{IJ} \phi} J_{23} = i \big[\sin{\theta} \gamma_1 \gamma_3 + r \cos{\theta} \gamma_2 \gamma_3 \big].
    \end{split}
    \label{SpherSpinConn}
\end{equation}

\section{Derivation of the Dirac equation and of the SET}\label{Der_Spher_Symm_SET}

We notice that some of the components of the connection depend on $\theta$. Hence, as anticipated in the main text, since, in all the equations of motion, these components do not cancel each other out (as we will see), also the spinorial field must depend on the azimuthal angle $\theta$.
\newline
This fact leads to the expression of the Dirac equation

\begin{equation}
    \begin{split}
    i\Gamma^\mu D_\mu \psi = & i \Gamma^0 D_0 \psi + i\Gamma^r D_r \psi + i\Gamma^\theta D_\theta \psi + i\Gamma^\phi D_\phi \psi = \\
    = &-ie^{-\frac{\nu}{2}}\gamma_0 \dot{\psi} + ie^{-\frac{\lambda}{2}}\gamma_1 \psi' + \frac{i}{r}\gamma_2 \partial_\theta \psi + \frac{i\nu'}{4}e^{-\frac{\lambda}{2}}\gamma_1 \psi -\frac{i\dot{\lambda}}{4} e^{-\frac{\nu}{2}} \gamma_0 \psi + \\
    & + \frac{i}{2r} \gamma_1\psi + \frac{i}{2 r \sin{\theta}} \big[\sin{\theta} \hspace{0.5mm} \gamma_1 + r \cos{\theta} \hspace{0.5mm} \gamma_2 \big] \psi = \\
    = & -ie^{-\frac{\nu}{2}}\gamma_0 \dot{\psi} + ie^{-\frac{\lambda}{2}}\gamma_1 \psi' + \frac{i}{r}\gamma_2 \partial_\theta \psi + \frac{i\nu'}{4}e^{-\frac{\lambda}{2}}\gamma_1 \psi -\frac{i\dot{\lambda}}{4} e^{-\frac{\nu}{2}} \gamma_0 \psi + \\
    & + \frac{i}{r} \gamma_1\psi + \frac{i}{2} \cot{\theta} \hspace{0.5mm} \gamma_2 \psi = (U'(E) + 2\xi E) \psi - 2i\xi B \gamma^5 \psi.
   \end{split}
   \label{Dirac_Eq_Spher_Sym}
\end{equation}
This result can be used to solve the Einstein equations; not before having computed the 10 independent components of the SET. We begin from diagonal ones.

\subsection{Diagonal components}
These components, due to spherical symmetry, are the simplest both in form and to compute.
\newline
In particular, they are
\begin{equation}
    \begin{split}
    T^0_{\hphantom{0} 0} = & -\frac{i}{2}\Big[\bar{\psi} \Gamma^0 D_{0} \psi - D_{0} \bar{\psi} \Gamma^0 \psi \Big] + \mathcal{E} = \\
    = & \frac{i}{2} \Big[e^{-\frac{\nu}{2}}\big(\bar{\psi} \gamma_0 \dot{\psi} - \dot{\bar{\psi}} \gamma_0 \psi \big) - \frac{\nu'}{4}e^{-\frac{\lambda}{2}} \big( \bar{\psi} \gamma_1 \psi - \bar{\psi}\gamma_1 \psi \big)\Big] + \mathcal{E} = \\
    = & \frac{i}{2} e^{-\frac{\nu}{2}}\Big[\bar{\psi} \gamma_0 \dot{\psi} - \dot{\bar{\psi}} \gamma_0 \psi\Big] + \mathcal{E},
    \end{split}
    \label{T^0_0_Spher}
\end{equation}

\begin{equation}
    \begin{split}
    T^r_{\hphantom{r} r} = & -\frac{i}{2}\Big[\bar{\psi} \Gamma^r D_{r} \psi - D_{r} \bar{\psi} \Gamma^r \psi \Big] + \mathcal{E} = \\
    = & -\frac{i}{2} \Big[e^{-\frac{\lambda}{2}}\big(\bar{\psi} \gamma_1 \psi' - \bar{\psi}' \gamma_1 \psi \big) - \frac{\dot{\lambda}}{4}e^{-\frac{\nu}{2}} \big( \bar{\psi} \gamma_0 \psi - \bar{\psi}\gamma_0 \psi \big)\Big] + \mathcal{E} = \\
    = & -\frac{i}{2} e^{-\frac{\lambda}{2}}\Big[\bar{\psi} \gamma_1 \psi' - \bar{\psi}' \gamma_1 \psi\Big] + \mathcal{E},
    \end{split}
    \label{T^r_r_Spher}
\end{equation}

\begin{equation}
    \begin{split}
    T^\theta_{\hphantom{\theta} \theta} = & -\frac{i}{2}\Big[\bar{\psi} \Gamma^\theta D_{\theta} \psi - D_{\theta} \bar{\psi} \Gamma^\theta \psi \Big] + \mathcal{E} = \\
    = & -\frac{i}{2} \Big[\frac{1}{r}\big(\bar{\psi} \gamma_2 \partial_\theta \psi - \partial_\theta \bar{\psi} \gamma_2 \psi \big) + \frac{1}{2r}\big(\bar{\psi} \gamma_1 \psi - \bar{\psi}\gamma_1 \psi \big)\Big] + \mathcal{E} = \\
    = & -\frac{i}{2r} \Big[\bar{\psi} \gamma_2 \partial_\theta \psi - \partial_\theta \bar{\psi} \gamma_2 \psi \Big] + \mathcal{E},
    \end{split}
    \label{T^theta_theta_Spher}
\end{equation}

\begin{equation}
    \begin{split}
    T^\phi_{\hphantom{\phi} \phi} = & -\frac{i}{2}\Big[\bar{\psi} \Gamma^\phi D_{\phi} \psi - D_{\phi} \bar{\psi} \Gamma^\phi \psi \Big] + \mathcal{E} = \\
    = & -\frac{i}{2} \Big[ \frac{1}{2r}\big(\bar{\psi} \gamma_1 \psi - \bar{\psi}\gamma_1 \psi \big) + \frac{1}{2} \cot{\theta} \big(\bar{\psi} \gamma_2 \psi - \bar{\psi}\gamma_2 \psi \big) \Big] + \mathcal{E} = \mathcal{E}.
    \end{split}
    \label{T^phi_phi_Spher}
\end{equation}

\subsection{Time-space components}
Now, we can pass to time-space components, which are all different from zero, in contrast to what is expected from spherical symmetry. These are
\begin{equation}
    \begin{split}
    T^0_{\hphantom{0} r} = & -\frac{i}{4}\Big[\bar{\psi} \Gamma^0 D_{r} \psi + \frac{g^{00}}{g^{rr}}\bar{\psi} \Gamma^r D_{0} \psi - c.c. \Big] = \\
    = & -\frac{i}{4} \Big[-e^{-\frac{\nu}{2}}\big(\bar{\psi} \gamma_0 \psi' - \bar{\psi}' \gamma_0 \psi \big) - e^{-\nu + \frac{\lambda}{2}} \big( \bar{\psi} \gamma_1 \dot{\psi} - \dot{\bar{\psi}}\gamma_1 \psi \big) + \\
    & +\frac{\dot{\lambda}}{4} e^{- \nu + \frac{\lambda}{2}}\big(\bar{\psi} \gamma_1 \psi + \bar{\psi} \gamma_0 \gamma_1 \gamma_0 \psi \big) + \frac{\nu'}{4} e^{-\frac{\nu}{2}} \big(\bar{\psi} \gamma_1 \gamma_0 \gamma_1 \psi + \bar{\psi} \gamma_0 \gamma_1 \gamma_1 \psi \big)  \Big] = \\
    = & \frac{i}{4} e^{-\frac{\nu}{2}}\Big[\bar{\psi} \gamma_0 \psi' - \bar{\psi}' \gamma_0 \psi + e^{\frac{\lambda - \nu}{2}} \big( \bar{\psi} \gamma_1 \dot{\psi} - \dot{\bar{\psi}}\gamma_1 \psi \big)\Big],
    \end{split}
    \label{T^0_r_Spher}
\end{equation}

\begin{equation}
    \begin{split}
    T^0_{\hphantom{0} \theta} = & -\frac{i}{4}\Big[\bar{\psi} \Gamma^0 D_{\theta} \psi + \frac{g^{00}}{g^{\theta \theta}}\bar{\psi} \Gamma^\theta D_{0} \psi - c.c. \Big] = \\
    = & -\frac{i}{4} \Big[-e^{-\frac{\nu}{2}}\big( \bar{\psi} \gamma_0 \partial_\theta \psi -\partial_\theta \bar{\psi} \gamma_0 \psi \big) - e^{-\nu}r \big( \bar{\psi}\gamma_2 \dot{\psi} - \dot{\bar{\psi}}\gamma_2 \psi \big) +\\
    & - \frac{e^{-\frac{\nu}{2}}}{2} \big( \bar{\psi}\gamma_0 \gamma_1 \gamma_2 \psi + \bar{\psi} \gamma_1 \gamma_2 \gamma_0 \psi \big) + \frac{e^{-\frac{\nu + \lambda}{2}} r \nu'}{4} \big( \bar{\psi}\gamma_2 \gamma_0 \gamma_1 \psi + \bar{\psi}\gamma_0 \gamma_1 \gamma_2 \psi \big) \Big] = \\
    = & \frac{i}{4} e^{-\frac{\nu}{2}} \Big[\bar{\psi} \gamma_0 \partial_\theta \psi -\partial_\theta \bar{\psi} \gamma_0 \psi + e^{-\frac{\nu}{2}}r \big( \bar{\psi}\gamma_2 \dot{\psi} - \dot{\bar{\psi}}\gamma_2 \psi \big) - \frac{1}{2} \big(e^{-\frac{\lambda}{2}} r \nu' - 2 \big) \bar{\psi}\gamma_0 \gamma_1 \gamma_2 \psi \Big],
    \end{split}
    \label{T^0_theta_Spher}
\end{equation}

\begin{equation}
    \begin{split}
    T^0_{\hphantom{0} \phi} = & -\frac{i}{4}\Big[\bar{\psi} \Gamma^0 D_{\phi} \psi + \frac{g^{00}}{g^{\phi \phi}}\bar{\psi} \Gamma^\phi D_{0} \psi - c.c. \Big] = \\
    = & -\frac{i}{4} \Big[-\frac{e^{-\frac{\nu}{2}}}{2}\big(\bar{\psi}\gamma_0\big[\sin{\theta} \gamma_1 \gamma_3 + r \cos{\theta} \gamma_2 \gamma_3 \big]\psi + \bar{\psi}\big[\sin{\theta} \gamma_1 \gamma_3 + r \cos{\theta} \gamma_2 \gamma_3 \big]\gamma_0\psi \big) +\\
    & - e^{-\nu}r \sin{\theta} \big( \bar{\psi}\gamma_3 \dot{\psi} - \dot{\bar{\psi}}\gamma_3 \psi \big) + \frac{e^{-\frac{\nu + \lambda}{2}}r\nu'}{4}  \sin{\theta} \big( \bar{\psi}\gamma_3 \gamma_0 \gamma_1 \psi + \bar{\psi}\gamma_0 \gamma_1 \gamma_3 \psi \big) \Big] = \\
    = & \frac{i}{4} e^{-\frac{\nu}{2}} \sin{\theta}\Big[r \cot{\theta} \hspace{0.5mm} \bar{\psi}\gamma_0 \gamma_2 \gamma_3 \psi + e^{-\frac{\nu}{2}}r \big( \bar{\psi}\gamma_3 \dot{\psi} - \dot{\bar{\psi}}\gamma_3 \psi \big) - \frac{1}{2} \big(e^{-\frac{\lambda}{2}}r\nu' - 2 \big) \bar{\psi}\gamma_0 \gamma_1 \gamma_3 \psi \Big].
    \end{split}
    \label{T^0_phi_Spher}
\end{equation}

\subsection{Space-space components}
In the end, we can compute the space-space components, which, like the time-space ones, are non-vanishing, in contrast to expectations.
\newline
More in detail, we find

\begin{equation}
    \begin{split}
    T^r_{\hphantom{r} \theta} = & -\frac{i}{4}\Big[\bar{\psi} \Gamma^r D_{\theta} \psi + \frac{g^{rr}}{g^{\theta \theta}}\bar{\psi} \Gamma^\theta D_{r} \psi - c.c. \Big] = \\
    = & -\frac{i}{4} \Big[e^{-\frac{\lambda}{2}}\big( \bar{\psi} \gamma_1 \partial_\theta \psi -\partial_\theta \bar{\psi} \gamma_1 \psi \big) + e^{-\lambda}r \big( \bar{\psi}\gamma_2 \psi' - \bar{\psi}'\gamma_2 \psi \big) +\\
    & + \frac{e^{-\frac{\lambda}{2}}}{2} \big( \bar{\psi}\gamma_1 \gamma_1 \gamma_2 \psi + \bar{\psi} \gamma_1 \gamma_2 \gamma_1 \psi \big) - \frac{e^{-\frac{\nu + \lambda}{2}} r \dot{\lambda}}{4} \big( \bar{\psi}\gamma_2 \gamma_0 \gamma_1 \psi + \bar{\psi}\gamma_0 \gamma_1 \gamma_2 \psi \big) \Big] = \\
    = & -\frac{i}{4} e^{-\frac{\lambda}{2}} \Big[ \bar{\psi} \gamma_1 \partial_\theta \psi -\partial_\theta \bar{\psi} \gamma_1 \psi + e^{-\frac{\lambda}{2}}r \big( \bar{\psi}\gamma_2 \psi' - \bar{\psi}'\gamma_2 \psi \big) - \frac{e^{-\frac{\nu}{2}} r \dot{\lambda}}{2} \bar{\psi}\gamma_0 \gamma_1 \gamma_2 \psi \Big],
    \end{split}
    \label{T^r_theta_Spher}
\end{equation}

\begin{equation}
    \begin{split}
    T^r_{\hphantom{r} \phi} = & -\frac{i}{4}\Big[\bar{\psi} \Gamma^r D_{\phi} \psi + \frac{g^{rr}}{g^{\phi \phi}}\bar{\psi} \Gamma^\phi D_{r} \psi - c.c. \Big] = \\
    = & -\frac{i}{4} \Big[\frac{e^{-\frac{\lambda}{2}}}{2}\big(\bar{\psi}\gamma_1\big[\sin{\theta} \gamma_1 \gamma_3 + r \cos{\theta} \gamma_2 \gamma_3 \big]\psi + \bar{\psi}\big[\sin{\theta} \gamma_1 \gamma_3 + r \cos{\theta} \gamma_2 \gamma_3 \big]\gamma_1\psi \big) +\\
    & + e^{-\lambda}r \sin{\theta} \big( \bar{\psi}\gamma_3 \psi' - \bar{\psi}'\gamma_3 \psi \big) - r \sin{\theta} \frac{e^{-\frac{\nu + \lambda}{2}}\dot{\lambda}}{4} \big( \bar{\psi}\gamma_3 \gamma_0 \gamma_1 \psi + \bar{\psi}\gamma_0 \gamma_1 \gamma_3 \psi \big) \Big] = \\
    = & -\frac{i}{4} e^{-\frac{\lambda}{2}}r \sin{\theta} \Big[ e^{-\frac{\lambda}{2}}\big( \bar{\psi}\gamma_3 \psi' - \bar{\psi}'\gamma_3 \psi \big) + \cot{\theta} \hspace{0.5mm} \bar{\psi}\gamma_1\gamma_2 \gamma_3 \psi - \frac{e^{-\frac{\nu}{2}}\dot{\lambda}}{2}\bar{\psi}\gamma_0 \gamma_1 \gamma_3 \psi \Big],
    \end{split}
    \label{T^r_phi_Spher}
\end{equation}

\begin{equation}
    \begin{split}
    T^\theta_{\hphantom{\theta} \phi} = & -\frac{i}{4}\Big[\bar{\psi} \Gamma^\theta D_{\phi} \psi + \frac{g^{\theta \theta}}{g^{\phi \phi}}\bar{\psi} \Gamma^\phi D_{\theta} \psi - c.c. \Big] = \\
    = & -\frac{i}{4} \Big[\frac{1}{2r} \Big( \bar{\psi} \gamma_2 \big[\sin{\theta} \gamma_1 \gamma_3 + r \cos{\theta} \gamma_2 \gamma_3 \big]\psi + \bar{\psi} \big[\sin{\theta} \gamma_1 \gamma_3 + r \cos{\theta} \gamma_2 \gamma_3 \big]\gamma_2 \psi \Big) +\\
    & + \frac{\sin{\theta}}{r} \big( \bar{\psi}\gamma_3 \partial_\theta \psi - \partial_\theta \bar{\psi} \gamma_3 \psi \big) + \frac{\sin{\theta}}{2r} \big( \bar{\psi}\gamma_3 \gamma_1 \gamma_2 \psi + \bar{\psi} \gamma_1 \gamma_2 \gamma_3 \psi \big) \Big] = \\
    = & -\frac{i}{4} \Big[-\frac{\sin{\theta}}{r} \bar{\psi} \gamma_1 \gamma_2 \gamma_3 + \frac{\sin{\theta}}{r} \big( \bar{\psi}\gamma_3 \partial_\theta \psi - \partial_\theta \bar{\psi} \gamma_3 \psi \big) + \frac{\sin{\theta}}{r}\bar{\psi} \gamma_1 \gamma_2 \gamma_3 \psi \Big] = \\
    = & -\frac{i\sin{\theta}}{4r}\big( \bar{\psi}\gamma_3 \partial_\theta \psi - \partial_\theta \bar{\psi} \gamma_3 \psi \big).
    \end{split}
    \label{T^theta_phi_Spher}
\end{equation}
When we solve the Einstein equations, these expressions and the last two expressions of the time-space components will play the role of constraints on the spinorial field under analysis, since the non-vanishing components of the Einstein tensor, in the spherically symmetric case, are only the diagonal ones and the $0r$-one. 

\renewcommand{\chaptername}{Acknowledgements}

\chapter*{Acknowledgements}
Prima di tutto, vorrei ringraziare il Prof. Piattella per avermi guidato in questo lavoro di tesi. Lo ringrazio per avermi nuovamente supportato in questi ultimi mesi. Poi, lo ringrazio per la sua continua disponibilità, per la sua competenza e per i consigli fornitimi per quanto riguarda le scelte future.
\newline
\newline
In seguito, vorrei ringraziare i miei compagni di corso, che mi hanno accompagnato anche in questo percorso magistrale e con i quali ho passato dei bellissimi anni. In particolare, li vorrei ringraziare per le risate e la spensieratezza datemi, ma anche per il continuo confronto in ambito fisico, che mi ha permesso di proseguire al meglio in questo percorso.
\newline
\newline
Successivamente, vorrei ringraziare i miei amici, che in questi anni sono stati sempre accanto a me; sia nei momenti belli che in quelli brutti. Mi hanno supportato nei momenti più brutti e mi hanno fatto ridere e divertire sempre e comunque. Li ringrazio anche per le belle esperienze vissute insieme a loro, sperando di poterne rivivere altre.
\newline
\newline
Poi, ringrazio le mie nonne, i miei nonni e i miei zii. In questi venticinque anni, mi avete dato tanto affetto e serenità ed io spero di aver dato tanto a voi. Con alcuni avrei voluto passare un po' più di tempo, ma non ce n'è stata possibilità. Spero, però, ovunque voi siate e sarete, di avervi reso felici e orgogliosi di me.
\newline
\newline
Infine, per penultimi e più importanti, ringrazio mia mamma e mio papà, perché senza di loro non sarei riuscito a sostenere tutto ciò. Mi sono sempre stati vicini e so che lo saranno sempre, in un modo o nell'altro, ed io non smetterò mai di ringraziarli per tutto questo. Li ringrazio di cuore per avermi sostenuto in ogni mia scelta senza dubitare mai di me.
\newline
\newline
Per ultimo, ringrazio me stesso. Grazie di non aver mai mollato e di aver sempre affrontato tutte le sfide con serietà, ma anche con un pizzico di irriverenza. Spero che tu riesca a mettere la stessa grinta e passione anche in tutto quello che ti capiterà in futuro e in tutte le sfide che dovrai affrontare.

\bibliographystyle{apalike}
\bibliography{Cit}

@article{Einstein_Electr,
    author = "Einstein, A.",
    title = {{Zur Elektrodynamik bewegter K{\"o}rper}},
    doi = "10.1002/andp.19053221004",
    journal = "Annalen der Physik",
    volume = {322},
    number = {10},
    pages = {891-921},
    year = "1905"
}

@article{Einstein_GenRel,
        author = {{Einstein}, A.},
        title = "{Die Grundlage der allgemeinen Relativit{\"a}tstheorie}",
        journal = {Annalen der Physik},
        year = "1916",
        month = jan,
        volume = {354},
        number = {7},
        pages = {769-822},
        doi = {10.1002/andp.19163540702},
        adsurl = {https://ui.adsabs.harvard.edu/abs/1916AnP...354..769E}
}

@article{Schwarzschild,
        author = {{Schwarzschild}, Karl},
        title = "{{\"U}ber das Gravitationsfeld eines Massenpunktes nach der Einsteinschen Theorie}",
        journal = {Sitzungsberichte der K{\"o}niglich Preussischen Akademie der Wissenschaften},
        year = 1916,
        month = jan,
        pages = {189-196},
        adsurl = {https://ui.adsabs.harvard.edu/abs/1916SPAW.......189S}
}

@article{Einstein_Cosmo,
        author = {{Einstein}, Albert},
        title = "{Kosmologische Betrachtungen zur allgemeinen Relativit{\"a}tstheorie}",
        journal = {Sitzungsberichte der K{\"o}niglich Preussischen Akademie der Wissenschaften},
        year = 1917,
        month = jan,
        pages = {142-152},
        adsurl = {https://ui.adsabs.harvard.edu/abs/1917SPAW.......142E}
}

@inproceedings{HistLambda,
    author = "Straumann, Norbert",
    title = "{The History of the cosmological constant problem}",
    booktitle = "{18th IAP Colloquium on the Nature of Dark Energy: Observational and Theoretical Results on the Accelerating Universe}",
    eprint = "gr-qc/0208027",
    archivePrefix = "arXiv",
    month = "8",
    year = "2002"
}

@article{Magueijo,
  title = {Cosmology with a spin},
  author = {Magueijo, J. and Zlosnik, T. G. and Kibble, T. W. B.},
  journal = {Phys. Rev. D},
  volume = {87},
  issue = {6},
  pages = {063504},
  numpages = {13},
  year = {2013},
  month = {Mar},
  publisher = {American Physical Society},
  doi = {10.1103/PhysRevD.87.063504},
  url = {https://link.aps.org/doi/10.1103/PhysRevD.87.063504}
}

@article{Fabbri,
    author = "Fabbri, Luca and Vignolo, Stefano and De Maria, Giuseppe and Carloni, Sante",
    title = "{Dirac Fields in Hydrodynamic Form and their Thermodynamic Formulation}",
    eprint = "2506.02608",
    archivePrefix = "arXiv",
    primaryClass = "math-ph",
    month = "6",
    year = "2025"
}

@book{Landau,
    author = "Landau, L. D. and Lifschits, E. M.",
    title = "{The Classical Theory of Fields}",
    doi = "10.1016/c2009-0-14608-1",
    isbn = "978-0-08-018176-9",
    publisher = "Pergamon Press",
    address = "Oxford",
    series = "Course of Theoretical Physics",
    volume = "Volume 2",
    year = "1975"
}

@book{Wheeler,
    author = "Wheeler, J. A. and Ford, K.",
    title = "{Geons, black holes, and quantum foam: A life in physics}",
    publisher = "W. W. Norton and Company",
    year = "1998"
}

@book{Friedrich,
    author = "Friedrich, Thomas and Nestke, Andreas",
    title = "{Dirac Operators in Riemannian Geometry}",
    isbn = "978-0-8218-2055-1",
    publisher = "American Mathematical Society",
    address = "Providence, R.I.",
    series = "Graduate Studies in Mathematics",
    volume = "25",
    year = "2000"
}

@book{Lawson,
    author = "Lawson, H. B. and Michelsohn, M. L.",
    title = "{Spin geometry}",
    publisher = "Princeton University Press",
    year = "1998"
}

@article{Dirac,
    author = {Dirac, Paul Adrien Maurice},
    title = {The quantum theory of the electron},
    journal = {Proceedings of the Royal Society of London. Series A, Containing Papers of a Mathematical and Physical Character},
    volume = {117},
    number = {778},
    pages = {610-624},
    year = {1928},
    month = {02},
    issn = {0950-1207},
    doi = {10.1098/rspa.1928.0023},
    url = {https://doi.org/10.1098/rspa.1928.0023},
    eprint = {https://royalsocietypublishing.org/rspa/article-pdf/117/778/610/25048/rspa.1928.0023.pdf},
}

@article{Holst,
    author = "Holst, Soren",
    title = "{Barbero's Hamiltonian derived from a generalized Hilbert-Palatini action}",
    eprint = "gr-qc/9511026",
    archivePrefix = "arXiv",
    reportNumber = "USITP-95-10",
    doi = "10.1103/PhysRevD.53.5966",
    journal = "Phys. Rev. D",
    volume = "53",
    pages = "5966--5969",
    year = "1996"
}

@article{Dolan,
    author = "Dolan, Brian P.",
    title = "{Chiral fermions and torsion in the early Universe}",
    eprint = "0911.1636",
    archivePrefix = "arXiv",
    primaryClass = "gr-qc",
    reportNumber = "DIAS-STP-09-11",
    doi = "10.1088/0264-9381/27/9/095010",
    journal = "Class. Quant. Grav.",
    volume = "27",
    pages = "095010",
    year = "2010",
    note = "[Erratum: Class.Quant.Grav. 27, 249801 (2010)]"
}

@article{Picon,
        author = {{Armend{\'a}riz-Pic{\'o}n}, C. and {Greene}, Patrick B.},
        title = "{Spinors, Inflation, and Non-Singular Cyclic Cosmologies}",
        journal = {General Relativity and Gravitation},
        year = 2003,
        month = sep,
        volume = {35},
        number = {9},
        pages = {1637-1658},
        doi = {10.1023/A:1025783118888},
        archivePrefix = {arXiv},
        eprint = {hep-th/0301129},
        primaryClass = {hep-th},
        adsurl = {https://ui.adsabs.harvard.edu/abs/2003GReGr..35.1637A}
}

@article{Isham,
    author = "Isham, C. J. and Nelson, J. E.",
    title = "{Quantization of a Coupled Fermi Field and Robertson-Walker Metric}",
    reportNumber = "Print-74-1382 (KING'S COLL.)",
    doi = "10.1103/PhysRevD.10.3226",
    journal = "Phys. Rev. D",
    volume = "10",
    pages = "3226",
    year = "1974"
}

@article{Planck,
    author = "Aghanim, N. and others",
    collaboration = "Planck",
    title = "{Planck 2018 results. VI. Cosmological parameters}",
    eprint = "1807.06209",
    archivePrefix = "arXiv",
    primaryClass = "astro-ph.CO",
    doi = "10.1051/0004-6361/201833910",
    journal = "Astron. Astrophys.",
    volume = "641",
    pages = "A6",
    year = "2020",
    note = "[Erratum: Astron.Astrophys. 652, C4 (2021)]"
}

@book{Piattella,
    author = "Piattella, Oliver F.",
    title = "{Lecture Notes in Cosmology}",
    eprint = "1803.00070",
    archivePrefix = "arXiv",
    primaryClass = "astro-ph.CO",
    doi = "10.1007/978-3-319-95570-4",
    isbn = "978-3-319-95569-8, 978-3-030-07060-1, 978-3-319-95570-4",
    publisher = "Springer",
    address = "Cham",
    series = "UNITEXT for Physics",
    year = "2018"
}

@article{Farnsworth,
    author = "Farnsworth, Shane and Lehners, Jean-Luc and Qiu, Taotao",
    title = "{Spinor driven cosmic bounces and their cosmological perturbations}",
    doi = "10.1103/PhysRevD.96.083530",
    journal = "Phys. Rev. D",
    volume = "96",
    number = "8",
    pages = "083530",
    year = "2017"
}

@inproceedings{Maartens,
    author = "Maartens, Roy",
    title = "{Causal thermodynamics in relativity}",
    eprint = "astro-ph/9609119",
    archivePrefix = "arXiv",
    reportNumber = "PU-RCG-96-14",
    month = "9",
    year = "1996"
}

@book{dInverno,
    author = "d'Inverno, R.",
    title = "{Introducing Einstein's relativity}",
    isbn = "978-0-19-859686-8",
    publisher = "Oxford University Press",
    year = "1992"
}

@article{Shapiro,
    author = "Shapiro, Ilya L.",
    title = "{Covariant Derivative of Fermions and All That}",
    eprint = "1611.02263",
    archivePrefix = "arXiv",
    primaryClass = "gr-qc",
    doi = "10.3390/universe8110586",
    journal = "Universe",
    volume = "8",
    number = "11",
    pages = "586",
    year = "2022"
}

@article{Saha,
    author = "Saha, Bijan",
    title = "{Spinor Field in FLRW Cosmology}",
    doi = "10.3390/universe9050243",
    journal = "Universe",
    volume = "9",
    number = "5",
    pages = "243",
    year = "2023"
}

@article{Chimento,
    author = "Chimento, L. P. and Devecchi, F. P. and Forte, M. and Kremer, G. M. and Ribas, M. O. and Samojeden, L. L.",
    editor = "Diosi, Lajos and Elze, Hans-Thomas and Fronzoni, Leone and Halliwell, Jonathan and Prati, Enrico and Vitiello, Giuseppe and Yearsley, James",
    title = "{Fermionic cosmologies}",
    doi = "10.1088/1742-6596/306/1/012052",
    journal = "J. Phys. Conf. Ser.",
    volume = "306",
    pages = "012052",
    year = "2011"
}

@article{Jantzen,
    author = "Jantzen, R. T.",
    title = "{Spinor sources in cosmology}",
    doi = "10.1063/1.525481",
    journal = "J. Math. Phys.",
    volume = "23",
    pages = "1137--1146",
    year = "1982"
}

@article{Ribas,
    author = "Ribas, M. O. and Devecchi, F. P. and Kremer, G. M.",
    title = "{Fermions as sources of accelerated regimes in cosmology}",
    eprint = "gr-qc/0511099",
    archivePrefix = "arXiv",
    doi = "10.1103/PhysRevD.72.123502",
    journal = "Phys. Rev. D",
    volume = "72",
    pages = "123502",
    year = "2005"
}

@article{Villalba,
    author = "Villalba, V. M. and Percoco, U.",
    title = "{Separation of Variables and Exact Solution to Dirac and Weyl Equations in {Robertson-Walker} Space-times}",
    reportNumber = "IVIC-CFLE-89-05",
    doi = "10.1063/1.528799",
    journal = "J. Math. Phys.",
    volume = "31",
    pages = "715",
    year = "1990"
}

@article{Szekeres,
  title={Cosmology with spinor connection},
  author={Szekeres, G},
  journal={Computers \& Mathematics with Applications},
  volume={1},
  number={3-4},
  pages={345--350},
  year={1975},
  publisher={Elsevier}
}

\end{document}